%% file: main.tex
\documentclass[10pt]{article}

\usepackage[margin=1in]{geometry}
\usepackage[hyphens]{url}

\usepackage{amsmath,amssymb,amsthm}
\usepackage{graphicx}
\usepackage{booktabs}
\usepackage{longtable}
\usepackage{array}
\usepackage{multirow}
\usepackage[table]{xcolor}
\definecolor{indigo}{RGB}{63,81,181}
\definecolor{navyblue}{RGB}{0,0,128}
\definecolor{steelblue}{RGB}{70,130,180}
\definecolor{crimson}{RGB}{220,20,60}
\definecolor{gold}{RGB}{184,134,11}
\definecolor{violet}{RGB}{148,0,211}
\definecolor{teal}{RGB}{0,128,128}

\usepackage[hidelinks]{hyperref}
\usepackage[authoryear,round]{natbib}
\usepackage{setspace}
\usepackage{caption}
\usepackage{subcaption}
\usepackage{float}
\usepackage{enumitem}
\usepackage{pgfplots}
\pgfplotsset{compat=1.18}
\usepackage{tikz}
\usetikzlibrary{arrows.meta,positioning,shapes.geometric,calc,decorations.pathreplacing,backgrounds}
\usepackage{tcolorbox}
\usepackage{tabularx}
\usepackage{adjustbox}

\usepackage{appendix}
\usepackage{mdframed}

\definecolor{navyblue}{RGB}{27,42,74}
\definecolor{steelblue}{RGB}{46,95,163}
\definecolor{lightblue}{RGB}{214,228,247}
\definecolor{crimson}{RGB}{200,57,43}
\definecolor{gold}{RGB}{180,130,20}
\definecolor{darkgray}{RGB}{55,65,81}
\definecolor{forestgreen}{RGB}{34,120,57}

\newtheorem{definition}{Definition}[section]
\newtheorem{proposition}{Proposition}[section]
\newtheorem{remark}{Remark}[section]
\newtheorem{assumption}{Assumption}[section]

\tcbuselibrary{skins,breakable}
\newtcolorbox{keyresult}[1][Key Result]{
 colback=lightblue!60, colframe=navyblue,
 fonttitle=\bfseries\color{white}, coltitle=white,
 colbacktitle=navyblue, title=#1, breakable}

\usepackage{microtype}
\usepackage{titlesec}

\titlespacing*{\section}{0pt}{10pt plus 2pt minus 2pt}{4pt plus 1pt minus 1pt}
\titlespacing*{\subsection}{0pt}{8pt plus 2pt minus 2pt}{3pt plus 1pt minus 1pt}
\titlespacing*{\subsubsection}{0pt}{6pt plus 2pt minus 1pt}{2pt plus 1pt minus 1pt}

\setlist{nosep,leftmargin=*}

\tcbset{before skip=6pt, after skip=6pt, boxsep=4pt, left=4pt, right=4pt}

\bibpunct{(}{)}{;}{a}{,}{,}
\begin{document}

\title{%
 \vspace{-0.5cm}
 {\LARGE\bfseries The AI Transformation Gap Index (AITG)}\\[0.5em]
 {\large An Empirical Framework for Measuring AI Transformation\\
 Opportunity, Disruption Risk, and Value Creation at the Industry and Firm Level}%
}

\author{%
 \textbf{Dean Barr}\\[0.3em]
 \normalsize Applied AI Researcher\\[0.2em]
 \normalsize \texttt{dean@dsconsult.ai}\\[0.4em]
 \small \textit{The AI Transformation Gap Index (AITG), Industry AI Susceptibility Score (IASS),}\\
 \small \textit{Value Creation Bridge (VCB), Implementation Feasibility Score (IFS),}\\
 \small \textit{AI Disruption Risk Index (ADRI), and AITG Value Density are original constructs}\\
 \small \textit{introduced in this paper.}
}

\date{February 2026}

\maketitle
\thispagestyle{empty}

\begin{abstract}
\noindent
Despite the scale of capital being deployed toward AI initiatives, no empirical
framework currently exists for benchmarking where a firm stands relative to
competitors in AI readiness and deployment, or for translating that position into
auditable financial outcomes. In practice, private equity deal teams, management
consultants, and corporate strategists have relied on qualitative judgment and
ad hoc maturity labels; tools that are neither comparable across industries nor
grounded in observable economic data.

This paper introduces the \textbf{AI Transformation Gap Index} (AITG), a composite
empirical framework that measures the distance between a firm's current AI deployment
and a time varying, industry constrained capability frontier, then maps that distance
to dollar denominated value creation, execution feasibility under uncertainty, and
competitive disruption risk.
Five linked modules address this gap: cross industry normalization (IASS),
a dynamic capability ceiling that evolves with frontier capabilities (AFC),
trajectory based firm scoring with integrated execution risk (IFS), a CES
bottleneck value decomposition mapping gap scores to enterprise value (VCB),
and a competitive hazard measure for inaction (ADRI). I calibrate the framework
for 22 industry verticals and apply it to 14 public companies using public filings.
A retrospective construct validity exercise correlating AITG scores with observed
EBITDA margin expansion yields Spearman $\rho_s = 0.818$ ($n = 10$), directionally
consistent with predictions though insufficient for causal identification. A
counterintuitive result emerges: the largest AI transformation gaps do not produce
the highest value density, because implementation friction, CES bottlenecks, and
timing lags erode the theoretical upside of wide gaps.

\medskip
\noindent\textbf{Keywords:} AI capability frontier; decision support under uncertainty;
AI deployment planning; composite indicators; firm level productivity; AI adoption;
task based framework; disruption risk modeling; general purpose technology.

\noindent\textbf{JEL Codes:} O33, G12, L10, J24, C43, L25, D81.
\end{abstract}

\medskip
\noindent\textit{A consolidated notation reference and parameter calibration register appear in Appendix~A (Tables~\ref{tab:notation} and~\ref{tab:param_calibration}).}
\medskip

\section{Introduction}
\label{sec:intro}

\subsection{Motivation}
\label{sec:motivation}
\label{sec:measurement_problem}

Artificial intelligence is widely recognized as a general purpose technology
with large but uneven firm level effects
\cite{David1990,BrynjolfssonRockSyverson2021}. Yet the prevailing tools used
to assess ``AI maturity'' remain difficult to interpret economically, and in
practice consultant practitioners, private equity deal teams, and corporate
strategists lack a rigorous, comparable method for estimating how much value
a company or industry can unlock through AI transformation. Ordinal maturity
labels are rarely comparable across industries and are not mapped to auditable
changes in productivity, margins, or enterprise value.

Two sources of measurement error are especially consequential.
\textit{Cross sectionally}, a maturity score assigned to a bank is not
commensurate with the same score assigned to a construction firm because the
feasible automation frontier differs structurally by industry.
\textit{Temporally}, the set of feasible AI capabilities changes rapidly, so a
static ceiling assumption can become stale within an investment horizon. Together
these limitations create predictable allocation errors. Investors capitalize
narratives weakly grounded in measurable operational change while underweighting
firms incurring near term complementary costs whose benefits arrive with
delay, the productivity J curve
\cite{BrynjolfssonRockSyverson2021}. \citet{BrynjolfssonRockSyverson2021}
demonstrate that GPTs require substantial complementary intangible investments
before measured productivity rises; \citet{BrynjolfssonHitt1996,BrynjolfssonHitt2000}
documented the identical pattern for information technology. Meanwhile,
\citet{Acemoglu2024} offers a carefully bounded estimate of no more than
0.66\,\% TFP growth over ten years, a constraint I engage with directly in
Section~\ref{sec:limitations}.

I introduce the AI Transformation Gap Index (AITG) to address this measurement
gap. Its five linked modules (Sections~\ref{sec:iass}--\ref{sec:ifs}) jointly
provide cross industry normalization, a time evolving capability ceiling, a
dollar denominated value creation bridge, and a quantified competitive hazard
score. All parameters are formally specified and anchored to publicly
verifiable data sources.

\subsection{Contributions}
\label{sec:contributions}
\label{sec:data_provenance}

I make five conceptual contributions and provide an illustrative empirical
calibration. The framework combines three input classes: (i)~publicly observed
data (BLS O*NET/OEWS, EDGAR, Lightcast, Census BTOS, Stanford HAI AI Index),
(ii)~rubric based author scored constructs whose interrater reliability is not
yet established (Section~\ref{sec:agenda}), and (iii)~assumption driven
parameters with sensitivity ranges reported in ESM Tables~S3--S5. The
14 company application constitutes an illustrative calibration, not a
statistically powered validation study.

\begin{enumerate}[leftmargin=1.5em]

\item \textbf{Industry normalized ceiling (IASS).}
I construct the IASS from O*NET task structure, OEWS occupational weights, regulatory friction, and competitive diffusion dynamics \cite{ONetResource,BLSOEWS2023}. Unlike additive checklist frameworks \citep{McKinseyStateAI2025,Gartner2023,Westerman2014}, the IASS uses geometric aggregation \cite{Mazziotta2013,OECD2008} with a Regulatory Friction Factor hard floor $\psi(\mathrm{RFF})$, enforcing noncompensability. Calibrations for 22 industry verticals are precomputed from BLS, Lightcast, and SEC filings.

\item \textbf{Time varying frontier (AFC).}
The AI Frontier Coefficient is a formally specified, time indexed multiplier that recalibrates the IASS ceiling as model capabilities advance (MMLU from 43\% to 93\% between 2020--2024 \cite{StanfordHAI2025}; METR task horizon doubling every seven months \cite{METRAnalysis2024}). I anchor $\theta_i$ exclusively to O*NET Automatable Task Density, eliminating the circular specification that arises when frontier update rules conflate capability expansion with adoption breadth.

\item \textbf{Trajectory based firm scoring with endogenous execution risk.}
Firm state is decomposed into six observable dimensions and mapped to a cascading three wave logistic trajectory \cite{BrynjolfssonRockSyverson2021,Rogers2003}. Organizational capacity and data readiness are endogenized into wave steepness and inflection timing \cite{BrynjolfssonHitt2000,GibbonsHenderson2012}, recovering the correct dynamic compounding structure. The formal inverse mapping uses a mandatory piecewise specification guarding against NaN crashes. Data nonrivalry \citep{JonesTonetti2020} is operationalized as a Firm Scale Factor $\Phi_f$.

\item \textbf{Value decomposition with enabling constraints (VCB).}
The VCB decomposes the gap into seven value pools, gates each on enabling infrastructure via a low elasticity CES bottleneck aggregator ($\rho = 5$, $\sigma = 1/6$), and applies a nonlinear capture function. I identify and correct four compounding errors in standard AI to EV conversions with correction magnitudes ranging from $2\times$ to $25\times$.

\item \textbf{Competitive hazard from delay (ADRI).}
The ADRI measures the competitive hazard from \emph{not} acting, operationalized as an instantaneous hazard intensity. A firm with a wide gap and high ADRI faces an increasing hazard rate as diffusion compresses margins and organizational debt compounds.

\end{enumerate}

\textbf{Illustrative calibration.}
I apply the framework to 14 public companies across eight industries using a
tiered evidence protocol based on public filings
(Section~\ref{sec:empirical}). These applications demonstrate internal
consistency and measurement mechanics, not causal claims. Two companies,
JPMorgan Chase and Zions Bancorporation, serve as primary depth cases
illustrating within industry contrasts under identical ceiling conditions. A
retrospective construct validity exercise using 2021 AITG scores and
2021--2023 realized EBITDA margin changes yields a Spearman $\rho_s = 0.818$
($n = 10$ nonfinancial firms), directionally consistent with predictions but
insufficient for causal identification given sample size and omitted variable
concerns.
\label{sec:ic_scripts}
ESM Part~IV supplies structured application guidance and a companion Excel
workbook implementing all modules for self assessment. Full reproducibility
code, including Monte Carlo sensitivity analysis, stress tests, a
25 question survey scoring pipeline, and the Excel workbook builder, is
available at \url{https://github.com/deanbrr/aitg-framework}.

\subsection{Related Work}
\label{sec:related_deployment}

The AITG integrates three literatures not previously connected in a single
operational framework. The \emph{task based automation literature}
\cite{AutorLevyMurnane2003,AutorDorn2013,AcemogluRestrepo2018,AcemogluRestrepo2019a}
provides the microeconomic foundation for structural differences in AI
susceptibility; I operationalize it as the Cognitive Task Density dimension of
the IASS. The \emph{composite indicator methodology}
\cite{OECD2008,NardoSaisana2005} provides the normalization and
sensitivity analysis framework required for institutional credibility. The
\emph{IT productivity and organizational complementarity literature}
\cite{BrynjolfssonHitt1996,BrynjolfssonHitt2000,BrynjolfssonMcElheran2016,
BrynjolfssonRockSyverson2021,GibbonsHenderson2012} provides the theoretical
basis for why AI investment to returns relationships are delayed, nonlinear,
and conditioned on organizational readiness.

Unlike \citet{Bass1969} single event diffusion models, the AITG three wave
cascading logistic decomposes adoption by sequential technology wave rather
than by innovation versus imitation. An alternative \citet{Cox1972}
proportional hazards formulation would complement the ADRI hazard intensity
(Eq.~\ref{eq:adri_hazard}) by providing a semiparametric baseline hazard; I
chose cascading logistics for wave specific modularity, since each wave has
independent $k_w$ and $t_{0,w}$ parameters updatable as deployment evidence
accumulates.

Unlike deployment aware optimization models such as ML Compass
\cite{MLCompass2024}, which use KKT based constrained optimization at the
system level, the AITG operates at the firm and industry level, addressing
cross industry normalization, time evolving ceilings, and competitive
disruption risk. The KKT intuition of binding constraints is operationalized
through the CES bottleneck aggregator: value capture is gated by the
minimum weighted harmonic of enabling dimensions, not their average.

Data marketplace pricing models \citep{JonesTonetti2020} provide a theoretical
foundation for the VCB Data Monetization pool: the nonrivalry of data implies
marginal value increasing with reuse, captured through the concave function
$\eta(g)$. Empirical evaluation of frontier AI agents on autonomous task
benchmarks \cite{CapabilityEval2024} documents conservative production level
autonomy, motivating the Wave~3 steepness discount ($k_{w=3}$ set 30\,\% below
Wave~2) and the IFS $\delta_{\mathrm{OCC}}$ penalty for firms lacking agentic
readiness.

Section~\ref{sec:theory} establishes theoretical foundations;
Sections~\ref{sec:iass}--\ref{sec:ifs} develop the five modules;
Section~\ref{sec:empirical} provides empirical illustrations;
Section~\ref{sec:sensitivity} reports sensitivity analysis;
Section~\ref{sec:limitations} addresses limitations; and
Section~\ref{sec:conclusion} concludes.

\section{Theoretical Foundations}
\label{sec:theory}

\subsection{The AI Transformed Frontier as an Industry Constrained Optimum}

The central construct is the frontier: the best operating configuration a firm in a given industry could achieve with current AI capabilities.

\begin{definition}[AI Transformed Frontier]
\label{def:frontier}
For firm $f$ in industry $i$ at time $t$, define the
\emph{AI Transformed frontier state} $S^*_{i,t}$ as the operating configuration
that maximizes risk adjusted enterprise value subject to:
\begin{enumerate}[label=(\roman*),noitemsep]
 \item process and physical constraints inherent to industry $i$;
 \item binding regulatory requirements applicable at time $t$;
 \item the data generation processes naturally available in industry $i$; and
 \item the AI capability set $\mathcal{A}_t$ available at time $t$.
\end{enumerate}
\end{definition}

Three deliberate choices are embedded here. First, the frontier is \emph{dynamic}: as $\mathcal{A}_t$ expands, $S^*_{i,t}$ shifts, opening new gap even for firms that have not changed. Second, it is \emph{industry constrained}: a construction firm at its frontier looks categorically different from a financial services firm at its frontier. Third, the objective is \emph{risk adjusted enterprise value}, anchoring the framework to economics rather than engineering and avoiding the methodological mistake of defining perfection as ``uses AI everywhere.''

The distance between a firm's current state and this frontier captures the investment opportunity.

\begin{definition}[AI Transformation Gap]
\label{def:gap}
The \emph{AI Transformation Gap} for firm $f$ at time $t$ is:
\begin{equation}
 \Delta_{f,t} \;\equiv\; d\!\left(S_{f,t},\; S^*_{i,t}\right)
 \label{eq:gap}
\end{equation}
where $d(\,\cdot\,\cdot\,)$ is a distance in a standardized capability space
and $S_{f,t}$ is the firm's current implementation state.
\end{definition}

$\Delta_{f,t}$ is a value creation signal, not a technology adoption signal,
because $S^*_{i,t}$ is anchored to economically material levers (cost, margin,
working capital, cycle time). A large gap can be \emph{bullish}, indicating
a large addressable opportunity, or \emph{bearish}, indicating structural
inability to close it. I separate gap measurement from closure probability; frameworks that blend the two sacrifice interpretability.

\subsection{AI Elasticity by Industry}

Industries differ dramatically in how much enterprise value responds to AI investment. I formalize this heterogeneity as follows.

\begin{definition}[AI Elasticity by Industry]
\label{def:elasticity}
\emph{AI Elasticity by Industry} $E_i$ is the marginal enterprise value
response to AI transformation investment, holding scale constant:
\begin{equation}
 E_i \;\equiv\; \frac{\partial\!\left(\mathrm{EV}/\mathrm{Revenue}\right)_i}
 {\partial\!\left(\text{AI Capability Index}\right)_i}
 \label{eq:elasticity}
\end{equation}
\end{definition}

$E_i$ is continuous, not binary. The task based literature
\cite{AutorDorn2013,AcemogluRestrepo2018} establishes that automation
suitability varies substantially within jobs and across tasks; industry level
susceptibility is a distribution over task types, not a categorical label. The IASS operationalizes $E_i$ through six weighted dimensions.

\subsection{Connection to the Acemoglu--Restrepo Task Framework}

The AITG's conceptual foundation maps directly onto the canonical task based model of \citet{AcemogluRestrepo2018}. Production uses tasks $z \in [N-1, N]$; tasks below threshold $I$ are performed by capital (automated), tasks above $I$ by labor. The production function is:

\begin{equation}
 Y = \Pi(I,N)\!\left[\Gamma(I,N)^{1/\sigma}(A_L L)^{(\sigma-1)/\sigma}
 + (1-\Gamma)^{1/\sigma}(A_K K)^{(\sigma-1)/\sigma}\right]^{\sigma/(\sigma-1)}
 \label{eq:ar_production}
\end{equation}

where $\Gamma(I,N)$ is the \emph{labor task content of production}. Automation
(increasing $I$) always reduces $\Gamma$ and labor share; new task creation
(increasing $N$) always raises both. The displacement reinstatement decomposition
\cite{AcemogluRestrepo2019a} is:

\begin{equation}
 d\ln\Gamma = \underbrace{-\text{displacement effect}}_{\text{automation\;($\uparrow I$)}}
 + \underbrace{\text{reinstatement effect}}_{\text{new tasks\;($\uparrow N$)}}
 \label{eq:displacement}
\end{equation}

The AITG's Cognitive Task Density dimension (Eq.~\ref{eq:ctd}) operationalizes $\Gamma(I,N)$ empirically at the industry level. A firm's AITG score corresponds to $I_f / I^*_i$, the ratio of its current automation frontier to the industry's achievable frontier. Moving this ratio toward 1.0 is the investment thesis.

\subsection{The Productivity J Curve and Measurement Implications}

AITG scores do not immediately predict observable productivity gains. \citet{BrynjolfssonRockSyverson2021} demonstrate that general purpose technologies require complementary intangible investments: process redesign, workforce retraining, and organizational restructuring, all expensed rather than capitalized. Measured productivity declines initially; benefits emerge once the complementary capital is in place. Adjusting for unmeasured intangibles, the authors find true TFP 15.9\,\% higher than official measures by 2017. The lag can persist 20--30 years for major GPTs, consistent with \citet{David1990} on the electrification delay.

I calibrate the AITG Value Creation Bridge directly to this J curve logic: costs load in the first 12--18 months and value emerges in a logistic pattern over 24--42 months (Section~\ref{sec:vcb_ramp}). This also provides the definitive answer to ``where is the ROI?'': it is there, but delayed by the complement accumulation process \cite{BrynjolfssonHitt2000}.

\subsection{Winner Take Most Dynamics and Urgency}

\citet{AutorKatzPatterson2020} document that across the U.S.\ economy, industry sales are concentrating and labor's share is falling, driven by ``superstar firms'' that scale efficiently with technology. Their key empirical prediction is confirmed across all seven tests: concentrating industries show the largest labor share declines, fastest productivity growth, and highest markups. The mechanism is ``scale without mass'': the most productive firms serve large markets with fewer workers.

Furthermore, the nonrivalry of data \cite{JonesTonetti2020} means that data rich AI adopters improve their models continuously, while laggards face a cold start disadvantage. This creates a compounding dynamic. I design the ADRI (Section~\ref{sec:adri}) to score precisely this effect. The empirical regularity from \citet{Syverson2011}, a 90th/10th percentile TFP ratio of 1.92 within 4 digit SIC manufacturing industries, with low productivity firms exiting at higher rates, provides the empirical baseline for competitive displacement risk.

\section{The Industry AI Susceptibility Score (IASS)}
\label{sec:iass}

\begin{tcolorbox}[title={What Is the IASS? Conceptual Summary}]
\textbf{The IASS answers one question before looking at any individual company:
how much of this industry's value creation is structurally capturable by AI?}

Think of it as the ceiling on the building before you invest in the top floor.
A construction firm and a financial services firm can both have ``great AI
initiatives,'' but they are playing in fundamentally different games. Construction
workers build in unstructured physical environments with high variability and
regulatory human sign off requirements. A commercial banker processes structured
data, executes repeatable cognitive workflows, and operates in a digitized
record environment. The AI ceiling for financial services is approximately
$2\times$ higher than for construction, not because financial firms are better
managed, but because the task structure of the industry makes it structurally
more susceptible to AI automation.

\textbf{Example.} Investment Banking (IASS$^* = 10.39$) vs.\ Construction (IASS $=
4.26$). Both have competent management teams. Both can hire AI talent. But a
dollar of AI investment in investment banking works against a structurally rich
task environment: document heavy workflows, large structured transaction
databases, high cognitive density decision processes. The same dollar in
construction works against primarily physical, variable environment tasks that
current AI cannot reliably automate. The IASS captures this structural
difference before any company specific score is computed.

\textbf{Who should use this?} An investment committee uses it to eliminate
sectors below IASS 5.0 from the transformation investment universe. A corporate
CAIO uses it to benchmark where their industry sits relative to peers and
understand what the structural ceiling on their AI program actually is, not
what their consultants promise.
\end{tcolorbox}

\subsection{Design Principles}

I construct the IASS as a composite indicator following OECD best practice methodology \cite{OECD2008,NardoSaisana2005}. Three nonnegotiable design principles govern its construction:

\begin{enumerate}[noitemsep]
\item \textbf{Reproducibility.} Any analyst following the documented procedures
 with the same source data must arrive within $\pm 0.5$ points on the IASS
 composite.
\item \textbf{Theoretical grounding.} Each dimension must have an established
 causal mechanism linking it to AI enabled value creation.
\item \textbf{Sensitivity transparency.} Weights represent tradeoffs among
 dimensions, not ``importance'' coefficients \cite{OECD2008}. All weight
 choices are subjected to Monte Carlo robustness analysis (Section~\ref{sec:sensitivity}).
\end{enumerate}

\subsection{Dimensional Structure and Scoring}

Six observable dimensions constitute the IASS (Table~\ref{tab:iass_dimensions}), each scorable from public data without inside access. I weight DRSA and PRI more heavily given their documented roles as deployment bottlenecks \citep{Standish2015}, consistent with AI value creation complementarity \citep{Autor2024, Acemoglu2022}. Geometric aggregation (Eq.~\ref{eq:iass_gm}) enforces noncompensability: no substitutability across dimensions.

Let $w_d$ be the normalised weight of dimension $d$ (summing to~1), and $\tilde{s}_{d,i}$ the min max normalised score for dimension $d$ in industry $i$ (Eq.~\ref{eq:minmax}). The base IASS is:
\begin{equation}
  \mathrm{IASS}_i = \exp\!\left(\sum_{d=1}^{D} w_d \ln \tilde{s}_{d,i}\right)
  \label{eq:iass_gm}
\end{equation}
I apply the Regulatory Friction Factor ($\psi$) as a hard floor multiplier outside the geometric product, ensuring that absolute regulatory prohibitions act as a noncompensable ceiling rather than a compensable dimension. Formally:
\begin{equation}
  \psi_i = \min\!\left(1,\; \prod_{j} d_{\mathrm{floor},j,i}\right), \qquad
  \mathrm{IASS}_i = \psi_i \cdot \exp\!\left(\sum_{d} w_d \ln \tilde{s}_{d,i}\right)
  \label{eq:rff_floor}
\end{equation}
where $d_{\mathrm{floor},j,i} \in (0,1]$ are regulatory floor coefficients for each binding prohibition $j$ in industry $i$. When $\psi_i = 1$ the industry faces no binding regulatory ceiling; when $\psi_i < 1$ the effective ceiling is suppressed proportionally. Healthcare Services ($\psi = 0.743$) and Life Sciences ($\psi = 0.663$) are the two industries with binding constraints in the current calibration.

For single dimension implementation (e.g., the Excel Companion), the scalar approximation
\begin{equation}
  \psi_i = \min\!\left(1,\; \left(\frac{\mathrm{RFF}_i}{5}\right)^{1.5}\right)
  \label{eq:rff_scalar}
\end{equation}
where $\mathrm{RFF}_i \in [0,10]$ is the raw Regulatory Friction Factor score, matches the multifloor product to within $\pm 0.02$ for the current 22 industry calibration.

\begin{table}[H]
\centering
\caption{IASS: Six Dimensions, Sources, and Weights}
\label{tab:iass_dimensions}
\footnotesize
\setlength{\tabcolsep}{4pt}
\rowcolors{2}{gray!7}{white}
\begin{adjustbox}{max width=\textwidth,center}
\begin{tabularx}{\textwidth}{@{}p{2.2cm}Xp{2.8cm}p{1.0cm}@{}}
\toprule
\textbf{Dimension} & \textbf{Definition and Mechanism} &
\textbf{Primary Source} & \textbf{Weight} \\
\midrule
Cognitive Task Density (CTD)
 & Share of industry labor hours in cognitively automatable task \cite{FreyOsborne2017} s
 (routine cognitive + bounded fraction of nonroutine cognitive).
 Direct operationalization of $\Gamma(I,N)$ \cite{AcemogluRestrepo2018,AcemogluAutor2011}.
 & O*NET task statements \cite{ONetResource,FeltenEtAl2021,Webb2020} + BLS OEWS \cite{BLSOEWS2023} \cite{BonneyEtAl2024}
 & 0.25 \\[4pt]
Data Richness \& Structural Availability (DRSA)
 & Degree to which the industry generates structured, machine readable data
 as operational exhaust vs.\ requiring heavy instrumentation to create it.
 & SEC EDGAR iXBRL \cite{SECEDGAR}; tech stack surveys
 & 0.20 \\[4pt]
Process Repeatability Index (PRI)
 & Share of workflows that are high frequency and low variance (standardized,
 repeatable). Low task entropy industries are more automatable.
 & Lightcast job posting natural language processing (NLP) \cite{Lightcast2024}
 & 0.20 \\[4pt]
Regulatory Friction Factor (RFF)
 & Degree to which regulation constrains AI deployment.
 Splits into ceiling compression (prohibited uses) and time/cost friction
 (compliance overhead). Scored so 10 = no friction.
 & EU AI Act \cite{EUAIAct2024}; HIPAA; FINRA~24-09; FDA~GMLP
 & 0.15 \\[4pt]
Competitive AI Diffusion Rate (CADR)
 & Speed of AI diffusion into the competitive ecosystem. Measures urgency:
 high CADR shortens the window for undisturbed transformation.
 & Lightcast AI/ML skill CAGR by NAICS \cite{Lightcast2024};
 AlphaSense earnings NLP \cite{AlphaSense2024}
 & 0.10 \\[4pt]
Capital--Labor Substitutability (CLSR)
 & Labor cost as fraction of total cost times AI substitutable fraction.
 Sets the dollar scale of the labor productivity value pool.
 & BLS labor cost series; BEA capital accounts \cite{BEA2023};
 Compustat \cite{Compustat2024}
 & 0.10 \\
\bottomrule
\end{tabularx}
\end{adjustbox}
\rowcolors{1}{}{}
\end{table}

\subsubsection{Sub Dimension Construction}

The six IASS dimensions are constructed from O*NET task statements, BLS occupational weights, Lightcast job posting frequencies, and regulatory source documents. Cognitive Task Density (CTD) uses employment weighted automatable task shares (Eq.~\ref{eq:ctd}); the Process Repeatability Index (PRI) combines task entropy with standardization language frequency. All dimensions are winsorized at the 5th/95th percentile and min max normalized to $[0,10]$ (Eq.~\ref{eq:minmax}). Full formulas, intermediate definitions, and normalization remarks appear in Appendix~B.

\subsection{Sensitivity to Weighting}

Monte Carlo perturbation ($M = 10{,}000$ draws, $\pm 5\,\%$ weight variation) confirms that no pair of anchor industries exchanges rank under any draw; the mean absolute rank shift is $\bar{R}_s = 0.19$ positions. Full robustness bounds appear in Appendix~C (Table~\ref{tab:iass_robustness}).

\section{The AI Frontier Coefficient (AFC)}
\label{sec:afc}

\subsection{Motivation: Nonstationarity of the AI Capability Ceiling}

Any assessment framework built against a static capability definition becomes systematically stale within 18--24 months. Table~\ref{tab:benchmark_progression} documents representative capability trajectories for the seven benchmark domains that constitute $C_t$.

\begin{table}[H]
\centering
\caption{AI Capability Benchmark Progression (Selected Milestones)}
\label{tab:benchmark_progression}
\footnotesize
\setlength{\tabcolsep}{4pt}
\begin{tabularx}{\textwidth}{@{}p{3.0cm}p{3.5cm}rrrl@{}}
\toprule
\textbf{Benchmark} & \textbf{Domain} & \textbf{2020} & \textbf{2024} & \textbf{2025--26} & \textbf{Source} \\
\midrule
MMLU-Pro & General knowledge & 43.9\,\% & 76.2\,\% & 87.0\,\% & \citealt{StanfordHAI2025} \\
SWE-Bench Verified & Software engineering & 4.4\,\% & 49.0\,\% & 80.0\,\% & \citealt{GPT52Announce2025} \\
SWE-Bench Pro & Multilang.\ coding & n/a & n/a & 55.6\,\% & \citealt{GPT52Announce2025} \\
AIME 2025 & Competition math & n/a & 71.0\,\% & 100.0\,\% & \citealt{GPT52Announce2025} \\
FrontierMath (T1--3) & Research math & n/a & n/a & 40.3\,\% & \citealt{GPT52Announce2025} \\
GPQA Diamond & Graduate science & n/a & 63.0\,\% & 93.2\,\% & \citealt{GPT52Announce2025} \\
ARC-AGI-2 & Abstract reasoning & n/a & 4.0\,\% & 52.9\,\% & \citealt{GPT52Announce2025} \\
MMMU & Multimodal understanding & n/a & 56.8\,\% & 84.2\,\% & \citealt{GPT5Announce2025} \\
METR task horizon & Agent autonomy & 1\,min & 15\,min & $\geq$60\,min & \citealt{METRAnalysis2024} \\
\midrule
\multicolumn{2}{l}{\textit{Implied $C_t$ (EWMA composite)}} & 1.00 & 1.62 & 1.90 & \\
\bottomrule
\end{tabularx}
\vspace{2pt}
\scriptsize{2025--26 column reports GPT-5.2 Thinking (December 2025) scores except MMMU and METR (GPT-5, August 2025). n/a~=~benchmark not yet published at that date. Full $C_t$ weights and methodology in ESM Part~III.}
\end{table}

The AI Capability Index $C_t$ has risen from $C_{2020}=1.00$ (GPT-3 baseline) to $C_{2024}=1.62$ (GPT-4o/o1) to $C_{2025}=1.74$ (GPT-5, August~2025) to $C_{2026}\approx 1.90$ (GPT-5.2, December~2025). Full benchmark progression data, suite weights, and methodology are in ESM Part~III.

The METR time horizon analysis \cite{METRAnalysis2024} shows the 50\,\% task completion horizon for AI agents doubling approximately every seven months (2019--2025), with a recent acceleration to approximately four months. These are not marginal improvements; they are category changes that open entirely new automation surface areas in industries previously considered AI resistant.

A static IASS produces two compounding errors. First, a company scored as highly transformed in 2023 may underperform its true frontier by 2025 because new capabilities created automation opportunities that did not exist when the score was assigned. Second, industries historically penalized by low IASS scores (healthcare, legal) are experiencing rapid ceiling expansion as domain specific LLMs clear regulatory and quality thresholds, creating investment opportunities that a static framework structurally misses.

\subsection{Formal Specification}

The AFC translates capability index movements into industry specific ceiling adjustments. I define it formally as follows.

\begin{definition}[AI Frontier Coefficient]
\label{def:afc}
For industry $i$ at time $t$, the \emph{AI Frontier Coefficient} is:
\begin{equation}
 \mathrm{AFC}_{i,t} = \min\!\left(1 + \theta_i\bigl(C_t - C_0\bigr),\;\alpha_{\max}\right)
 \label{eq:afc}
\end{equation}
where $C_t$ is the AI Capability Index at time $t$, $C_0 = 1.0$ is the
reference year baseline (Q1~2024), $\theta_i \in [0.05,\, 1.50]$ is an industry specific
sensitivity parameter governing how rapidly a sector absorbs frontier gains, and $\alpha_{\max} = 1.35$ bounds
the maximum ceiling expansion to prevent extrapolation beyond empirically supported ranges.
\end{definition}

\begin{remark}[AFC Functional Form Justification]
\label{rem:afc_form}
The linear (shift) form $1 + \theta_i(C_t - C_0)$ is a first order Taylor approximation of a general adjustment function $f(C_t/C_0)$ around $C_t = C_0$. For the observed range of $C_t \in [0.85, 1.25]$ (one standard deviation of quarterly benchmark movements since Q1~2024), the linear approximation error is bounded by $O(\theta_i \cdot (C_t - C_0)^2) < 0.02$ for all industries. The $\alpha_{\max} = 1.35$ cap provides an additional safeguard against extrapolation beyond this range.

Nonlinear alternatives, including a logistic saturation form $\alpha_{\max}(1 - e^{-\theta_i(C_t - C_0)})$ and a CES nested specification, are conceptually appealing for modeling diminishing returns to capability gains at the industry ceiling. However, calibrating such forms requires longer time series of benchmark adoption comovement than currently exist (the AFC observation window begins Q1~2024). Extending the AFC to nonlinear forms is a research agenda priority.

The sensitivity of the AFC to its parameters is bounded: $\partial \mathrm{AFC}/\partial \theta = (C_t - C_0)$, which is at most 0.25 under current benchmark conditions. This implies that even large estimation errors in $\theta_i$ (e.g., $\pm 50\%$) produce AFC shifts of at most $\pm 0.13$, well within the $\alpha_{\max}$ cap.
\end{remark}

The effective (AFC adjusted) industry ceiling is:
\begin{equation}
 \mathrm{IASS}^*_{i,t} = \mathrm{IASS}^{\mathrm{base}}_i \times \mathrm{AFC}_{i,t}
 \label{eq:iass_adjusted}
\end{equation}

\subsection{The AI Capability Index $C_t$}

I construct $C_t$ as a weighted composite of normalized AI benchmark scores across seven domains covering the primary task types relevant to industry susceptibility. Let $b_{k,t} \in [0,1]$ be the best published score on benchmark $k$ at time $t$:

\begin{equation}
 C_t = \sum_{k=1}^{7} \omega_k \cdot b_{k,t}
 \label{eq:capability}
\end{equation}

The benchmark domain weights and EWMA smoothing methodology ($\lambda = 0.5$, Eq.~\ref{eq:ct_ewma}) are detailed in Appendix~B (Table~\ref{tab:ct_weights}). From 2022 to 2024, $C_t$ increased from 0.520 to 0.748 (raw), smoothed to 0.741 at end 2024, an implied annual improvement rate of approximately 21\,\% \cite{EpochAI2025,KaplanEtAl2020,HoffmannEtAl2022}.

\subsection{Industry Sensitivity Parameters $\theta_i$}

The parameter $\theta_i$ captures how much of a general capability gain
translates into an expansion of the economically addressable AI surface area
in industry $i$. I calibrate $\theta_i$ retrospectively using the observable
expansion in addressable AI tasks from 2020 (GPT-3 era) to 2024 (GPT-4o/o1
era) as assessed by independent domain experts:

\begin{equation}
 \hat{\theta}_i = \frac{\Delta\,\mathrm{IASS}^{\mathrm{addressable}}_i}
 {\Delta C_{2020 \to 2024}}
 \label{eq:theta_calib}
\end{equation}

Healthcare scores $\theta = 0.31$ because domain LLMs (Nuance DAX, Google MedPaLM, Anthropic Claude in clinical workflows) created substantial new automatable surface area: ambient documentation, revenue cycle, prior authorizations, and diagnostics that simply did not exist in 2020. Construction scores $\theta = 0.09$ because model capability gains have not materially penetrated physically situated, high variance fieldwork. Vertical SaaS scores $\theta = 0.11$ because the ceiling was already high; the remaining gap is structurally small.

\subsection{AFC Uncertainty and Scenario Analysis}

Because $C_t$ is forward looking, AFC carries its own uncertainty quantification. Three scenarios (conservative, base case, aggressive) bound the AFC distribution over a 24 month horizon, with scenario weights 0.20/0.60/0.20 (Appendix~B, Table~\ref{tab:afc_scenarios}). The AFC uncertainty contribution to the Uncertainty Quotient is $\mathrm{UQ}_{\mathrm{afc}} = (\mathrm{AFC}_{\mathrm{agg}} - \mathrm{AFC}_{\mathrm{con}})/4$ (Eq.~\ref{eq:uq_afc}).

\paragraph{Propagation of $C_t$ and $\theta_i$ uncertainty into $\Delta$EV.}
Uncertainty in $C_t$ and $\theta_i$ propagates into $\Delta\mathrm{EV}$ through two channels. The ceiling channel: $\mathrm{IASS}^*_{i,t} = \mathrm{IASS}^{\mathrm{base}}_i \times \min(1 + \theta_i(C_t - C_0),\, \alpha_{\max})$, so uncertainty in $\theta_i \cdot (C_t - C_0)$ translates directly to uncertainty in the feasible frontier. The gap channel: $G_{\mathrm{eff}} = \mathrm{IASS}^* - \mathrm{AITG}^{\mathrm{raw}}$, which enters the value pool capture function as $g_f = G_{\mathrm{eff}}/10$. The resulting $\Delta$EV sensitivity is approximately:
\begin{equation}
 \frac{\partial\, \Delta\mathrm{EV}}{\partial\, C_t}
 \approx
 \theta_i \cdot \mathrm{IASS}^{\mathrm{base}}_i
 \cdot \frac{\partial\, \Delta\mathrm{EV}}{\partial\, \mathrm{IASS}^*}
 \label{eq:dev_dct}
\end{equation}
where I evaluate the second partial numerically via the Monte Carlo simulation in ESM Part~III. In practical terms: for Healthcare ($\theta_i = 0.31$, $\mathrm{IASS}^{\mathrm{base}} = 5.1$), a 10\% upward revision in $C_t$ expands the HC-IASS ceiling by approximately 0.16 pts and raises $\Delta$EV by 3--8\% depending on where the firm sits on the capture curve. For Construction ($\theta_i = 0.09$), the same $C_t$ revision moves the ceiling by only 0.05 pts and has negligible $\Delta$EV impact. This asymmetry, where high $\theta$ industries are more sensitive to AFC revisions than low $\theta$ industries, is an intended property of the framework and is reported explicitly in the AFC scenario stress tests (ESM Table~S3). Formal identification of $\theta_i$ standard errors and correlated uncertainty between $C_t$ and $\theta_i$ are priorities in the research agenda (Section~\ref{sec:agenda}, item~3).

\begin{remark}[AFC as a Structural Moat Accelerator]
A rising capability ceiling creates an \emph{asymmetric} competitive dynamic. Consider the JPMorgan/Zions pair at early 2026 capability levels ($C \approx 1.90$, IASS$^*$ = 9.38):
\begin{itemize}[noitemsep]
 \item JPMorgan (AITG 8.22) has an addressable gap of 1.16 to the new frontier. Its \$2B/yr AI budget and proprietary data gravity ($\Phi_f = 1.0$) allow it to chase the rising ceiling, creating new investable opportunity that did not exist at the 2022 baseline.
 \item Zions (AITG 3.80) has a widened gap of 5.58, but vendor AI dependency caps its value pool at $\Phi_f \leq 0.65$ regardless of how high the ceiling rises. The nominal gap widens; the capturable gap does not.
\end{itemize}
At $C_{2027} \approx 2.49$ (three years at 20\%/yr from 2024), Banking IASS$^*$ approaches 10.0. The gap ratio compresses, but this reflects JPMorgan chasing new frontier, not Zions closing ground. AFC mechanically advantages firms with proprietary data gravity and is effectively irrelevant for vendor dependent firms at subscale.
\end{remark}

\subsection{Evaluator Rotation Protocol: Mitigating Goodhart's Law and Benchmark Drift}
\label{sec:evaluator_rotation}

Because $C_t$ draws on public benchmark scores, it faces two failure modes: (1)~\emph{benchmark saturation}, where models reach near ceiling performance on a constituent (e.g., AIME 2025 at 100\%), making further $C_t$ gains undetectable; and (2)~\emph{Goodhart contamination}, where developers optimize against high stakes benchmarks, inflating $C_t$ without real capability gains. I address both through a formal \textbf{Evaluator Rotation Protocol}:

\begin{enumerate}[noitemsep]
\item \textbf{Saturation trigger.} Any benchmark exceeding 90\% mean solve rate across the top five frontier models is deprecated from the $C_t$ basket in the next annual cycle.
\item \textbf{Replacement criterion.} Replacements must be (a) out of distribution relative to existing members, (b) validated by an independent laboratory, and (c) correlated with real world task performance in a relevant IASS domain. METR time horizon benchmarks are the preferred anchor for agent capabilities.
\item \textbf{Continuity chaining.} At each rotation, pre and postrotation $C_t$ values are computed on a 90 day overlap window and chained at the transition knot (analogous to CPI chain linking), preserving EWMA continuity.
\end{enumerate}

\noindent At the current calibration date, SWE-Bench Verified (80.0\%) is approaching the saturation trigger; FrontierMath (Tier 1--3: 40.3\%) and ARC-AGI-2 (52.9\%) are the planned successors. Full protocol details are in the AFC governance checklist (ESM Part~IV).

\section{Company Scoring Architecture}
\label{sec:atgi_score}

\begin{tcolorbox}[title={What Is the Company Score? Conceptual Summary}]
\textbf{The IASS tells us the ceiling. The Company Score tells us where on the
staircase this specific firm is standing right now.}

I do not start with a general ``AI maturity'' label. I start with six
observable dimensions that map directly to dollar denominated value in the
model. Each dimension can be scored from public filings, management interviews,
and technical diligence, and each low score identifies a specific investment
constraint, not a general criticism.

\textbf{Example.} JPMorgan Chase's Workforce AI Augmentation Rate (WAR) is 8.5:
250,000 employees using LLM Suite daily, 4 hours of productivity gain per week
per user. That is not a maturity label, it is a specific, verifiable operational
fact with a direct dollar translation in the labor productivity value pool.
By contrast, Zions Bancorporation's WAR is 3.5: no enterprise GenAI platform
has been publicly disclosed, and AI job postings are low. The 5 point gap
between these two WAR scores translates to a computable difference in value
pool capture, not a subjective ``Zions is less advanced.''

\textbf{The bottleneck principle.} If a company scores 8.0 on Process Automation
Coverage but 3.5 on Data Infrastructure Maturity, the model does not average
them. The CES Bottleneck Aggregator (Eq.~\ref{eq:ces_bottleneck}) mathematically
reduces the value pool capture rate to reflect the infrastructure deficit. You
cannot build a reliable AI pricing engine on a broken data foundation. The
framework makes this constraint explicit and quantified.

\textit{Analogy.} You cannot put a Ferrari engine on bicycle tires. A simple additive average of dimension scores would suggest that a leading autonomous AI application compensates for fragmented, siloed data infrastructure. The CES aggregator mathematically enforces the contrary: without data readiness, the AI capability has insufficient input to function, suppressing 70--80\% of potential financial value. The weakest link constrains the system, not the arithmetic mean.

\textbf{Who should use this?} An investment committee uses the six scores to
surface the specific bottleneck that must be resolved for value to flow. A
corporate leadership team uses the same scores to prioritize the technology
budget: fix the weakest link first, because the CES formula proves that
improving a strong dimension while the weak link persists generates near zero
additional value.
\end{tcolorbox}

\subsection{Six Company Dimensions}

I decompose the company level AITG score into six dimensions that map directly to value pools in the VCB:

\begin{enumerate}[noitemsep]
\item \textbf{Data Infrastructure Maturity (DIM):} From batch ERP only (score~1)
 to a production data lakehouse with real time streaming, full MLOps, and data
 governance (score~9).
\item \textbf{Process Automation Coverage (PAC):} From $>$90\,\% manual
 processes (1) to AI native workflows across $>$70\,\% of addressable surface
 area (9).
\item \textbf{Workforce AI Augmentation Rate (WAR):} From $<$5\,\% of employees
 using any AI tool in workflows (1) to $>$70\,\% augmented with tracked
 adoption metrics (9).
\item \textbf{Decision Automation Rate (DAR):} From fully human recurring
 decisions (1) to $>$60\,\% of recurring decisions fully automated with
 exception handling (9).
\item \textbf{AI Product/Revenue Integration (APR):} From no AI revenue
 attribution (1) to AI native product with $>$40\,\% of revenue attributable
 to AI features (9).
\item \textbf{Organizational AI Capability (OAC):} From no AI leadership (1)
 to C suite AI ownership with published model governance and risk controls (9).
\end{enumerate}

I score each dimension on a 0--10 scale using a rubric with five anchor points (1, 3, 5, 7, 9) and explicit evidentiary requirements. The six scores are averaged to produce the raw AITG composite.

\begin{table}[H]
\centering
\caption{AITG Dimension Scoring Rubric: Anchor Points and Evidentiary Requirements}
\label{tab:dimension_rubric}
\scriptsize
\setlength{\tabcolsep}{3pt}
\rowcolors{2}{gray!7}{white}
\begin{tabularx}{\textwidth}{@{}p{2.3cm} c X X X X X@{}}
\toprule
\textbf{Dimension} & \textbf{Abbrev.} & \textbf{Score 1} & \textbf{Score 3} & \textbf{Score 5} & \textbf{Score 7} & \textbf{Score 9} \\
\midrule
Data Infrastructure Maturity & DIM &
  Siloed legacy databases; no cloud data platform &
  Initial cloud migration; partial data catalog &
  Unified data lake/ warehouse; governed data catalog &
  Real time streaming pipelines; ML feature store &
  Enterprise knowledge graph; automated data quality ${>}$99\% \\
\addlinespace
Process Automation Coverage & PAC &
  ${<}$5\% processes automated; manual workflows &
  10--20\% RPA deployment; limited scope &
  30--50\% process automation; cross functional &
  50--70\% intelligent automation; AI augmented &
  ${>}$70\% end to end automation; self optimizing \\
\addlinespace
Workforce AI Augmentation Rate & WAR &
  ${<}$5\% employees using any AI tool &
  10--20\% with basic AI tools (copilots) &
  30--50\% augmented; tracked adoption metrics &
  50--70\% AI augmented; integrated into workflows &
  ${>}$70\% augmented; org wide AI fluency measured \\
\addlinespace
Decision Automation Rate & DAR &
  Fully human recurring decisions &
  ${<}$10\% decisions AI assisted &
  20--40\% decisions AI recommended &
  40--60\% automated with human override &
  ${>}$60\% fully automated with exception handling \\
\addlinespace
AI Product/ Revenue Integration & APR &
  No AI revenue attribution &
  ${<}$5\% revenue from AI enhanced features &
  10--20\% revenue AI attributable &
  20--40\% revenue AI driven products &
  ${>}$40\% revenue from AI native products \\
\addlinespace
Organizational AI Capability & OAC &
  No AI leadership; ad hoc experiments &
  Dedicated AI team; early governance &
  C suite AI sponsor; published AI strategy &
  Chief AI Officer; model risk framework &
  C suite AI ownership; published governance \& risk controls \\
\bottomrule
\end{tabularx}
\medskip

\noindent\scriptsize\textit{Note.} Intermediate scores (2, 4, 6, 8, 10) are assigned when evidence places the firm between adjacent anchor points. Evidence sources include SEC filings, earnings transcripts, technology job postings, patent filings, and third party benchmarks (Lightcast, Stanford HAI). Full evidentiary protocols are specified in ESM Part~I.
\rowcolors{1}{}{}
\end{table}

\subsection{Cascading S Curve Architecture}

\subsubsection{The Low Ceiling Bias of a Single Logistic}

A single bounded logistic with fixed asymptote $L = 10$ introduces ``low ceiling bias'': once a technology wave generates new capabilities, the ceiling shifts upward, but a static logistic cannot accommodate this. Historical GPT adoption patterns (electricity 1880s--1920s, computing 1960s--2000s, now AI) demonstrate cascading architectures where each wave begins before the prior wave is fully captured \citep{David1990, BrynjolfssonRockSyverson2021}. A single static logistic systematically underestimates both the ceiling and the urgency of early adoption. The Bass diffusion model \citep{Bass1969}:
\begin{equation}
 \frac{dF(t)}{dt} = \left[p + qF(t)\right]\!\left[1 - F(t)\right]
 \label{eq:bass}
\end{equation}
captures within wave dynamics correctly but is insufficient for multiwave
GPT adoption.

\subsubsection{three wave Cascading Specification}

I decompose the firm's AI transformation trajectory into three sequential capability waves, each with its own logistic:

\begin{equation}
 A_w(t) = \frac{L_w}{1 + e^{-k_w(t - t_{0,w})}},
 \qquad w \in \{1, 2, 3\}
 \label{eq:wave}
\end{equation}

\setlength{\tabcolsep}{4pt}
\begin{table}[H]
\centering
\caption{Cascading S Curve: Three Capability Waves}
\label{tab:waves}
\footnotesize
\begin{tabularx}{\textwidth}{@{}cXXrrr@{}}
\toprule
$w$ & \textbf{Wave} & \textbf{Scope} & $L_w$ & $k_w^{\text{base}}$ (mo$^{-1}$) & $t_{0,w}^{\text{base}}$ (mo) \\
\midrule
1 & Foundation AI & Supervised ML, RPA, structured analytics, BI automation & 4.0 & 0.38 & 18 \\
2 & Generative \& Agentic AI & LLM workflows, copilots, Retrieval Augmented Generation (RAG) pipelines, agentic task execution & 3.5 & 0.42 & 36 \\
3 & Autonomous AI & Multiagent orchestration, self improving systems, AI native products & 2.5 & 0.32 & 60 \\
\midrule
\multicolumn{3}{l}{\textit{Total asymptotic ceiling}} & 10.0 & & \\
\bottomrule
\end{tabularx}
\end{table}

Waves are staggered by their midpoints $t_{0,w}^{\text{base}}$, yielding a smooth, strictly increasing trajectory without explicit gating. With the calibrated parameters in Table~\ref{tab:waves}, the early contribution of later waves is negligible in the score ranges used for the inverse mapping (Section~\ref{sec:inverse_map}).

The total firm transformation score at time $t$ is the sum across waves:
\begin{equation}
  \mathrm{AITG}(t) = \sum_{w=1}^{3} A_w(t)
                   = \sum_{w=1}^{3} \frac{L_w}{1 + e^{-k_w(t - t_{0,w})}}
  \label{eq:cascading}
\end{equation}
This cascading specification is strictly increasing in $t$, which guarantees a unique inverse mapping $\hat{t}_f$ from any observed rubric score (Section~\ref{sec:inverse_map}).

Table~\ref{tab:waves} presents wave parameters by industry tier.

\subsubsection{Rubric to Curve Inverse Mapping}
\label{sec:inverse_map}

The cascading specification (Eq.~\ref{eq:cascading}) is a function of time $t$, but the six dimension rubric produces a point in time score $\mathrm{AITG}^{\mathrm{raw}} \in [0, 10]$ from observable firm data. Without an explicit linking function, the S curve has no anchor to the empirical scoring system. I resolve this through the \textbf{inverse mapping} $\hat{t}_f$: the implied current position on the cascading S curve consistent with the observed rubric score. Given $\mathrm{AITG}(t)$ from Eq.~\eqref{eq:cascading}, I solve:

\begin{equation}
 \hat{t}_f = \mathrm{AITG}^{-1}\!\left(\mathrm{AITG}^{\mathrm{raw}}_f\right)
 \label{eq:inverse_map}
\end{equation}

Since $\mathrm{AITG}(t)$ is a sum of logistic terms and therefore strictly
monotone increasing in $t$ for well specified parameters, its inverse is unique
and well defined for all $\mathrm{AITG}^{\mathrm{raw}} \in [0, 10]$.
Numerically, I solve via Newton--Raphson iteration on the residual
$\mathrm{AITG}(t) - \mathrm{AITG}^{\mathrm{raw}} = 0$, with convergence
guaranteed by strict monotonicity.

$\hat{t}_f$ has a direct economic interpretation: the number of months into an idealized AI transformation program where a firm with the observed rubric scores would be located. It serves two operational purposes:

\textbf{(1) Linking cross section to projection.} The forward looking value trajectory uses $\hat{t}_f$ as the starting point. The 5 year hold period value integral (Section~\ref{sec:ev_decomposed}) computes from $t \in [\hat{t}_f,\; \hat{t}_f + 60]$, anchoring the longitudinal projection to the observed cross sectional state.

\textbf{(2) Identifying wave position.} If $\hat{t}_f < t_{0,1}$, the firm is preinflection on Wave~1 and should not expect positive EBITDA contribution until the inflection is reached. If $\hat{t}_f > t_{0,2}$, the firm has cleared Wave~1 and Wave~2 (Generative AI) is beginning to propagate, directly informing implementation priorities.

\begin{remark}[Piecewise Inverse Implementation]
The Wave~1 closed form inverse is only valid for $\mathrm{AITG}^{\mathrm{raw}} < L_1 = 4.0$; beyond this threshold, a piecewise specification (Eq.~\ref{eq:inverse_approx}) or Newton--Raphson solver is mandatory to avoid a NaN domain error. Nine of the fourteen cohort companies require the multiwave branch. Full specification and software implementation notes appear in Appendix~B.
\end{remark}

\subsubsection{AFC Integration at the Firm Level}

Equation~\ref{eq:afc_firm} shows how the AFC modifies wave ceiling and timing at time $t$:
\begin{equation}
 L_w^*(t) = L_w \cdot \mathrm{AFC}_{i,t}^{\phi_w}, \qquad
 t_{0,w}^*(t) = t_{0,w} / \mathrm{AFC}_{i,t}^{\mu_w}
 \label{eq:afc_firm}
\end{equation}
where $\phi_3 > \phi_2 > \phi_1$ (Wave 3 is most sensitive to frontier capability gains) and $\mu_w$ controls onset acceleration. This ensures the AFC recalibrates not just the industry ceiling but the firm level adoption trajectory.

\subsubsection{Out of Sample Validation Strategy for $\theta_i$}

I update $\theta_i$ using changes in O*NET Automatable Task Density ($\Delta\mathrm{ATD}_i$), not adoption survey data (Census BTOS), to avoid conflating the theoretical capability ceiling with current adoption breadth. The corrected update rule (Eq.~\ref{eq:theta_update}) uses a learning rate $\eta = 0.30$ with bounds $\hat{\theta}_i^{(t)} \in [0.05, 1.50]$. Full derivation and the circularity correction rationale appear in Appendix~B.

\subsection{IASS Normalization}
\label{sec:normalization}

For cross industry comparison, I normalize the AITG to the AFC adjusted ceiling. The industry relative (IR) score and effective gap are:
\begin{equation}
    \mathrm{IR}_{f,t} = \frac{\mathrm{AITG}^{\mathrm{raw}}_{f,t}}{\mathrm{IASS}^*_{i,t}} \times 10, \qquad G_{\mathrm{eff}} = \mathrm{IASS}^*_{i,t} - \mathrm{AITG}^{\mathrm{raw}}_{f,t}
    \label{eq:ir_geff}
\end{equation}
Standard dual reporting format: \textit{AITG~4.08~|~IR~5.31~|~$G_{\mathrm{eff}}=3.60$~|~UQ=$\pm$0.52}. Scores carry a formal uncertainty band $\pm\mathrm{UQ}$ aggregating data quality, model parameter, AFC, and interrater sources (typical combined $\pm0.45$ pts, 90\,\% CI). Full UQ decomposition and the UQ components table are in ESM Part~III.

\subsection{Illustrative IASS Calibrations: Five Anchor Industries}
\label{sec:iass_reference}

Table~\ref{tab:iass_anchor} reports full dimensional breakdowns for five anchor industries spanning the IASS range. Complete calibrations for all 22 industry verticals, including NAICS codes, subscores, $\theta_i$, and $\psi$, appear in ESM Table~S1.

\begin{table}[H]
\centering
\caption{IASS Anchor Industry Calibrations: Five Representative Verticals (2026)}
\label{tab:iass_anchor}
\footnotesize
\setlength{\tabcolsep}{3pt}
\begin{adjustbox}{max width=\textwidth,center}
\begin{tabular}{@{}p{3.8cm}rrrrrrrrr@{}}
\toprule
\textbf{Industry} & \textbf{CTD} & \textbf{DRSA} & \textbf{PRI} &
 \textbf{RFF} & \textbf{CADR} & \textbf{CLSR} & $\psi$ & \textbf{IASS} & $\theta_i$ \\
\midrule
Vertical SaaS / Software  & 9.4 & 9.8 & 8.6 & 8.1 & 9.2 & 9.1 & 1.000 & 9.06 & 0.11 \\
Financial Services (Banks)& 8.8 & 9.2 & 7.9 & 5.0 & 8.4 & 7.6 & 1.000 & 7.83 & 0.22 \\
Logistics / Transport     & 6.2 & 6.0 & 7.2 & 8.2 & 6.1 & 8.5 & 1.000 & 6.82 & 0.14 \\
Healthcare Services       & 6.2 & 5.8 & 6.5 & 4.1 & 5.1 & 6.8 & 0.743 & 4.27 & 0.31 \\
Construction              & 3.9 & 3.2 & 4.1 & 7.4 & 3.3 & 5.7 & 1.000 & 4.26 & 0.09 \\
\bottomrule
\multicolumn{10}{@{}p{\linewidth}@{}}{\scriptsize CTD = Cognitive Task Density; DRSA = Data/Systems Automation; PRI = Process Reengineering Index; RFF = Regulatory Friction Factor; CADR = Competitive AI Diffusion Rate; CLSR = Cost/Labor Structure. $\psi < 1.0$ = binding regulatory ceiling. Sources: BLS O*NET 2024; Lightcast 2025; Census BTOS. Full 22 industry table: ESM Table~S1.}
\end{tabular}
\end{adjustbox}
\end{table}

\begin{remark}[Three Diagnostic Patterns]
Healthcare Services carries $\psi = 0.743$: the RFF hard floor suppresses its theoretical ceiling by 26\%, reducing a geometric mean of 5.75 to IASS~4.27. Construction and Agriculture floor the table not from regulatory friction but from structurally low Cognitive Task Density; high physical task content that current AI cannot automate. Vertical SaaS and Financial Services anchor the top two quartiles (IASS $> 7.8$), defining sectors where AI transformation opportunity is structurally dense. The full 22 industry pattern appears in ESM Table~S1.
\end{remark}

\section{The AI Disruption Risk Index (ADRI)}
\label{sec:adri}

\begin{tcolorbox}[title={What Is the ADRI? Conceptual Summary}]
\textbf{The AITG measures how much opportunity a company is leaving on the table.
The ADRI measures how fast that table is being cleared by competitors who are
already capitalizing.}

Two companies can share identical AITG gaps yet face completely different
competitive situations. A regional bank with AITG 3.6 in commercial banking
(IASS$^*$ 9.38) occupies a fundamentally different position than an agricultural cooperative
with AITG 3.6 in agriculture (IASS 4.72). Banking carries a
Competitive AI Diffusion Rate (CADR) of 8.4, near the top of the reference
table. JPMorgan's proprietary LLM Suite extends its advantage every day the
regional bank does not act. AI native lenders (Upstart, Blend) underwrite
loans with lower human overhead. The regional bank is not merely ``behind on
technology''; it watches its competitive moat erode in real time.

\textbf{Example.} Zions Bancorporation's ADRI of 2.6 is the defining number
in its AITG profile. It sits in an industry with CADR 8.4 and no proprietary
AI stack. JPMorgan extends its advantage daily. The IFS endogeneity mechanism
means Zions's adoption timeline is approximately $1.97\times$ longer than a
data ready peer, so the competitive compound interest problem accelerates, not
decelerates, with time.

By contrast, JPMorgan's ADRI of 0.5 reflects dominant incumbent status: proprietary
data scale, regulatory capital requirements that bar new entrants, and proven
enterprise AI execution. JPMorgan is not immune to disruption, but its
competitive trajectory is improving.

\textbf{Who should use this?} An investment committee uses ADRI to determine
urgency: a wide gap, high ADRI target requires immediate transformation
commitment; a wide gap, low ADRI target can be transformed patiently. A
corporate board uses ADRI to assess whether its AI delay is a strategic
choice or a competitive emergency.
\end{tcolorbox}

\subsection{Asymmetric Risk of Inaction}

The AITG gap measures opportunity. The ADRI measures something categorically
different: the competitive risk generated by \emph{failing} to close the gap.
These are not symmetric. A company with a wide gap in a slow diffusion industry
with strong structural moats faces a patient opportunity. A company with a wide
gap in a fast diffusion industry with low switching costs faces an actively
deteriorating trajectory \cite{KrakowskiEtAl2023}.

The theoretical foundation is the superstar firm mechanism
\cite{AutorKatzPatterson2020}: concentrating industries exhibit faster
productivity growth because leading adopters scale efficiently while laggards
face compounding cost disadvantage. Data nonrivalry
\cite{JonesTonetti2020} amplifies this dynamic: data rich adopters continuously improve their models
while laggards start from scratch, producing accelerating
divergence rather than mean reversion. The McKinsey~2025 State of AI survey
\cite{McKinseyStateAI2025} documents precisely this bifurcation: approximately
6\,\% of respondents report $\geq$5\,\% EBIT impact, while 88\,\% use AI but
are not transforming.

\begin{definition}[AI Disruption Risk Index]
\label{def:adri}
The \emph{AI Disruption Risk Index} for firm $f$ in industry $i$ at time $t$ is:
\begin{equation}
 \mathrm{ADRI}_{f,t} = \frac{G_{\mathrm{eff},f,t} \;\times\;
 \mathrm{CADR}_i \;\times\;
 (1 - \mathrm{Moat}_f) \;\times\; \delta_t}{\mathcal{N}}
 \label{eq:adri}
\end{equation}
normalized to $[0,10]$ by the constant $\mathcal{N}$.
\end{definition}

\paragraph{Components.}
\begin{itemize}[noitemsep]
\item $G_{\mathrm{eff},f,t}$: effective gap from Eq.~\ref{eq:ir_geff}.
\item $\mathrm{CADR}_i$: Competitive AI Diffusion Rate from the IASS (0--10).
\item $\mathrm{Moat}_f \in [0,1]$: structural defensibility. Four factors
 scored 0--1: switching costs, network effects, regulatory barriers,
 proprietary data advantages. $\mathrm{Moat}_f$ is the weighted mean.
\item $\delta_t \in [1.0, 1.5]$: time indexed urgency multiplier tied to the
 AFC. As frontier capabilities advance, the speed of competitive erosion
 accelerates: $\delta_t = 1 + 0.5 \cdot \min(C_t/C_0 - 1,\;1)$.
\end{itemize}

\begin{definition}[ADRI Competitive Hazard Intensity]
\label{def:adri_hazard}
The ADRI is operationalized as an instantaneous competitive hazard intensity.
Define the hazard intensity function $\lambda_f(t)$ as the marginal probability
of observing a measurable competitive displacement event (market share loss
$\geq 1$\,pp or EBITDA margin compression $\geq 1.5$\,pp, measured at the firm
level over a 12 month window) per unit time, conditional on the firm remaining
in the nondisplaced state. I model:
\begin{equation}
  \lambda_f(t) \;=\;
  \frac{\mathrm{ADRI}_{f,t}}{\mathcal{T}}
  \;=\;
  \frac{G_{\mathrm{eff},f,t} \cdot \mathrm{CADR}_i \cdot (1-\mathrm{Moat}_f)
        \cdot \delta_t}{\mathcal{N}\,\mathcal{T}}
  \label{eq:adri_hazard}
\end{equation}
where $\mathcal{T} = 100$ is a dimensionless normalization constant calibrated
so that $\lambda_f(t)$ is directly interpretable as the ADRI score in
\emph{percent per year}: a firm with ADRI~$=5$ faces a 5\,\% annualized
marginal displacement probability. $\mathcal{N}$ is the cross sectional
normalization constant that maps the ADRI composite to $[0,10]$. The covariates are: (i)~$G_{\mathrm{eff}}$, the
effective transformation gap, which governs exposure magnitude; (ii)~$\mathrm{CADR}_i$, the peer adoption velocity, which governs how quickly the gap
translates into realized competitive disadvantage; (iii)~$(1-\mathrm{Moat}_f)$, the inverse structural defensibility, which modulates whether the
hazard converts to displacement or is buffered by barriers to entry; and
(iv)~$\delta_t$, the AFC linked urgency multiplier, which accelerates the
hazard as frontier capabilities expand. The cumulative hazard over an inaction
window $[0,T]$ is $\Lambda_f(T) = \int_0^T \lambda_f(t)\,dt$, and under the
Poisson approximation (small $\lambda_f$), $P(\text{displacement by }T)
\approx 1 - e^{-\Lambda_f(T)}$. At $\mathrm{ADRI} = 5.0$ and $T = 24$\,months,
the model implies an approximate 10\,\% cumulative displacement probability,
increasing to $\sim$13\,\% at ADRI~$= 7$ for the same horizon.
\end{definition}

\paragraph{Empirical threshold calibration.}
The displacement event thresholds (1\,pp market share, 1.5\,pp EBITDA margin)
are calibrated to the lower bound of outcomes routinely attributed to
competitive pricing pressure in public company earnings commentary. These are
deliberately conservative to reduce false positive risk. The $\mathcal{T}$
constant is set at 12 months so that ADRI can be read as an approximate
annualized percentage hazard at face value. Empirical validation of
$\lambda_f(t)$ against firm level margin trajectories, using the panel study
design specified in Section~\ref{sec:agenda} item~5, will determine whether
the Poisson approximation is warranted or whether a Weibull or Gompertz
baseline hazard better fits the acceleration dynamics implied by $\delta_t$.

\paragraph{External adoption signals as auxiliary validators.}
I currently calibrate the CADR$_i$ component of ADRI from the Stanford HAI
AI Index and Census BTOS survey data. Three external adoption signals could
serve as auxiliary validators or instruments in future work, providing
independent variation for both CADR and AFC trend estimation.
\begin{itemize}[noitemsep, leftmargin=1.5em]
 \item \textbf{AI job posting density} (Lightcast, Indeed): the share of
 industry job postings requiring AI related skills tracks adoption breadth
 with quarterly frequency and firm level granularity. Lightcast data already
 underlies IASS; its time series dimension has not yet been used for ADRI
 trend estimation.
 \item \textbf{Disclosure based AI density}: 10-K AI disclosure intensity
 scores (count of AI related terms normalized by filing length, SEC EDGAR)
 provide a firm level panel observable from 2017 onward
 \cite{KrakowskiEtAl2023}. Disclosure intensity leads adoption implementation
 with roughly a 12 month lag, making it a natural leading indicator for CADR.
 \item \textbf{Software ecosystem diffusion}: code level genAI adoption in
 enterprise software repositories (GitHub Copilot activation rates, AI assisted
 PR fractions by sector) provides a near real time proxy for technical adoption
 depth that precedes the 12--18 month lag in OEWS occupational data. AI
 mobility spillovers (engineers moving from AI native to laggard firms) can
 serve as an instrument for firm level adoption intensity following the approach
 of \citet{BrynjolfssonMcElheran2016}.
\end{itemize}
I identify the incorporation of these signals into CADR, and their use as instrumental variables for ADRI
validation, as a research agenda priority (Section~\ref{sec:agenda}).

\begin{table}[H]
\centering
\caption{ADRI Interpretation Grid}
\label{tab:adri_grid}
\footnotesize
\begin{tabular}{@{}rp{2.5cm}p{5.5cm}p{2.5cm}@{}}
\toprule
\textbf{ADRI} & \textbf{Level} & \textbf{Competitive Implication} & \textbf{Action} \\
\midrule
$<2.5$ & Low & Wide gap but protected by structural moat or slow diffusion & Monitor annually \\
$2.5$--$4.9$ & Moderate & Gradual margin compression; window for orderly transformation & Plan 3--5 year horizon \\
$5.0$--$6.9$ & High & Active competitive displacement; first mover advantages crystallizing & Immediate investment \\
$\geq 7.0$ & Critical & Structural impairment risk; M\&A or repositioning may be required & Acute / escalate \\
\bottomrule
\end{tabular}
\end{table}

\section{The Value Creation Bridge (VCB)}
\label{sec:vcb}

\subsection{Architecture and Methodological Position}

The VCB converts an AITG gap into dollar denominated investment opportunity.
The methodological position is explicit:

\begin{quote}
\textit{Do not map ``one AITG point'' to dollars with a single global coefficient.
That is fragile and easily gamed. Instead, size the economic prize by value pool
from diligence ready financial baselines, and use the AITG and IASS only to
bound feasibility, ramp speed, and capture rate.}
\end{quote}

This mirrors how sophisticated acquirers model transformation value:
from operational baselines (labor spend, pricing leakage, inventory
turns), not from a unitless maturity score.

\subsection{Value Pools}

\subsubsection{Firm Scale Factor ($\Phi_f$)}
\label{sec:phi_f}

$\Phi_f$ captures proprietary stack scaling advantage: large data rich incumbents achieve lower per unit implementation cost and higher value pool multipliers. Industry critical scale thresholds $S^*_i$ range from \$0.5B (Construction) to \$50B (Investment Banking); full table in ESM Table~S1.

\begin{equation}
 \Phi_f = \frac{1}{1 + e^{-\alpha(\log(R_f/S^*_i))}}
 \label{eq:phi_f}
\end{equation}

where $R_f$ is the firm's annual revenue (a proxy for transactional data
volume and proprietary dataset scale), $S^*_i$ is the industry specific
critical scale threshold below which proprietary model training is not
economically feasible (Eq.~\ref{eq:phi_f}), and $\alpha = 2.0$
controls the steepness of the scale transition. The theoretical foundation
is \citet{JonesTonetti2020}: proprietary data generates returns without being
depleted, creating an asymmetry where data rich firms continuously improve
their model quality while data poor firms start from scratch every training
cycle.

\textit{Analogy.} Proprietary transaction data functions analogously to compound
interest. A large cap bank training its models on hundreds of millions of
daily transactions accumulates what Jones and Tonetti (2020) term ``data
gravity'': the asset improves continuously without being consumed. A regional
bank licensing an off the shelf vendor platform, by contrast, deploys capability
trained on a generic corpus and receives no compounding benefit from its own
transaction history. The Firm Scale Factor formalizes this distinction:
identical qualitative AI scores yield materially different financial returns
depending on whether the firm owns the training data or rents the model.

At $R_f = S^*_i$ (revenue exactly at threshold), $\Phi_f = 0.5$. At $R_f = 10 \times S^*_i$, $\Phi_f \approx 0.99$. At $R_f = 0.1 \times S^*_i$,
$\Phi_f \approx 0.01$.

\paragraph{Empirical grounding.} JPMorgan Chase ($R_f \approx \$177.6$B revenue; $S^*_i = \$3.3$B) yields $\Phi_f = 1.00$. Zions Bancorporation ($R_f \approx \$3.4$B; $S^*_i = \$3.3$B) yields $\Phi_f = 0.52$. This directly explains the value density divergence: a subscale firm deploying vendor AI cannot replicate the proprietary data gravity of a scaled incumbent even at an identical dimension score gap.

\paragraph{Threshold calibration and sensitivity.}
I calibrate the $S^*_i$ thresholds from three sources: (1)~model training cost curves, using \citet{KaplanEtAl2020} scaling laws to infer the minimum proprietary dataset size for sector competitive model quality, converted to a revenue proxy via industry specific revenue per transaction rates from OEWS and FDIC data; (2)~technology investment benchmarks from McKinsey and Gartner enterprise AI deployment studies \cite{McKinseyStateAI2025}, which report revenue thresholds above which firms are statistically more likely to report ``full scale'' AI deployment (vs.\ pilot stage); (3)~revealed scale effects in the backtest cohort, where all firms above 3$\times$ their sector $S^*_i$ ($\Phi_f > 0.95$) generated positive $\Delta$EBITDA margins, while all firms below 0.5$\times$ $S^*_i$ showed flat or negative margins (with confounds noted for Target, UPS, and Ford). Full $S^*_i$ values by industry and sensitivity of $\Delta$EV to $\pm$50\% perturbations of $S^*_i$ appear in ESM Table~S2. The $S^*_i$ parameters are the second highest ranked drivers of $\Delta$EV variance in the Sobol analysis (24\% of output variance from $\Phi_f$ alone), making this a priority for the Bayesian reestimation study (Section~\ref{sec:agenda}, item~7).

\paragraph{Vendor AI cap.}
A firm deploying exclusively vendor AI (e.g., off the shelf nCino, Microsoft
Copilot, Salesforce Einstein) has $\Phi_f$ capped at the vendor's platform
ceiling, regardless of its own revenue scale. For such firms, I set $\Phi_f
= \min(\Phi_f^{\mathrm{logistic}},\; 0.65)$, reflecting that vendor AI
platforms level the playing field but do not replicate proprietary data gravity.
This cap is disclosed in all scorecard outputs.

The VCB defines seven standard value pools. Let $\mathcal{B}_p$ be the baseline
dollar figure for pool $p$ derived from diligence financials. Expected value
from pool $p$ before cost and risk adjustment is:
\begin{equation}
 V_p = \mathcal{B}_p \times \Phi_f \times \mathrm{Capture}_p(G_{\mathrm{eff}},\, b_p)
 \label{eq:vpool}
\end{equation}

\paragraph{Double counting safeguards.}
\label{rem:vcb_partition}
I construct the seven VCB value pools $\{V_p\}_{p=1}^{7}$ as mutually exclusive and collectively exhaustive (MECE): each pool maps to a distinct P\&L line item (labor productivity $\to$ SG\&A headcount; revenue enhancement $\to$ top line uplift; working capital $\to$ DIO/DSO improvement; etc.). The partition is enforced at the calibration stage: base capture rates $\bar{\kappa}_p$ (Table~S2 in ESM) are estimated from nonoverlapping value pools in published AI impact studies \citep{BrynjolfssonLiRaymond2025,McKinseyStateAI2025}.

Two additional safeguards prevent double counting at the computation stage. First, the aggregate capture constraint $\sum_p \mathrm{Capture}_p \leq 1$ is enforced across all pools: if pool level captures exceed unity (implying more than 100\% of the theoretical uplift is realized), all captures are proportionally rescaled. In practice this constraint is nonbinding for all 14 calibration companies, because per pool capture rates ($\bar{\kappa}_p \in [0.40, 0.65]$) and the concave $\eta(g)$ function jointly ensure $\sum_p \mathrm{Capture}_p < 0.85$ even at maximum gap. Second, the Monte Carlo VCB analysis in ESM Part~III independently verifies that total $\Delta$EV does not exceed the sum of individually estimated pool values under 10,000 correlated draws, confirming that the partition constraint holds probabilistically.

\subsection{Bottleneck Aggregation: CES Specification}

Each value pool $p$ depends on a subset of AITG dimensions
$\mathcal{D}(p) \subseteq \{1,\ldots,6\}$. Let $e_d = \max(\mathrm{AITG}^{\mathrm{raw}}_d / 10,\; 0.01) \in (0,1]$. The floor $e_d \geq 0.01$ prevents division by zero in the CES formula $(e_d^{-\rho} \to \infty$ as $e_d \to 0$); a score of zero on any dimension is economically indistinguishable from a score of 0.1 for CES purposes, as both produce near Leontief bottleneck collapse. A Constant Elasticity of Substitution (CES) aggregator preserves the bottleneck intuition while being robust to measurement error in individual dimension scores:
\begin{equation}
 b_p = \left(\sum_{d\,\in\,\mathcal{D}(p)} \alpha_d \cdot e_d^{-\rho}
 \right)^{-1/\rho}
 \label{eq:ces_bottleneck}
\end{equation}

where:
\begin{itemize}[noitemsep]
 \item $\rho > 0$ is the substitution parameter; the elasticity of substitution is
 $\sigma = 1/(1+\rho)$. \textbf{This paper uses the Arrow--Chenery--Minhas--Solow
 (ACMS) negative exponent form} so that $b_p^{-\rho}$ is a weighted power mean of
 the reciprocals of the dimension scores, i.e.\ a generalized harmonic mean.
 \textbf{At $\rho = 5$: $\sigma = 1/(1+5) = 1/6$.} These two values are therefore
 mutually consistent. This form differs from the positive exponent CES sometimes
 written $(\sum \alpha_d e_d^\rho)^{1/\rho}$ with $\rho\in(-\infty,1)$; under that
 parameterization $\rho = 5$ would be outside the substitution domain. The chosen
 form is standard for bottleneck aggregation (see notation summary,
 Table~\ref{tab:notation}).
 \item $\alpha_d > 0$ are dimension level importance weights, $\sum_{d \in \mathcal{D}(p)} \alpha_d = 1$.
 \item As $\rho \to \infty$: $b_p \to \min_d e_d$ (Leontief limit).
 \item As $\rho \to 0$: $b_p \to \prod_d e_d^{\alpha_d}$ (Cobb-Douglas).
 \item At $\rho = 1$: $b_p = 1\big/\!\sum_d (\alpha_d/e_d)$ (harmonic mean).
\end{itemize}

\begin{proposition}[CES Boundary Behaviour]
\label{prop:ces_boundary}
Under the ACMS CES form (Eq.~\ref{eq:ces_bottleneck}) with the floor $e_d \geq 0.01$:
\begin{enumerate}[noitemsep]
 \item \textbf{Leontief limit.} As $\rho \to \infty$: $b_p \to \min_{d \in \mathcal{D}(p)} e_d$. The weakest dimension fully determines the bottleneck.
 \item \textbf{Cobb--Douglas limit.} As $\rho \to 0^+$: $b_p \to \prod_{d \in \mathcal{D}(p)} e_d^{\alpha_d}$, the weighted geometric mean.
 \item \textbf{Strict positivity.} For all $\rho > 0$: $b_p \geq 0.01 > 0$, since the floor ensures $e_d^{-\rho} \leq 100^{\rho}$ for all $d$, hence $b_p \geq (|\mathcal{D}(p)| \cdot 100^{\rho})^{-1/\rho} > 0$.
\end{enumerate}
\end{proposition}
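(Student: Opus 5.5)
The plan is to treat $b_p$ throughout as a weighted power mean. Write $M_r(e)=\bigl(\sum_{d\in\mathcal{D}(p)}\alpha_d e_d^{\,r}\bigr)^{1/r}$ with $r=-\rho$, so that $b_p=M_{-\rho}(e)$. The weights $\alpha_d>0$ sum to one, and every $e_d\in[0.01,1]$ by the floor. All three claims then reduce to elementary two-sided bounds on $\sum_d\alpha_d e_d^{-\rho}$, together with one first-order expansion.

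\textbf{Leontief limit.} Let $m=\min_{d\in\mathcal{D}(p)}e_d$, attained at some index $d^\ast$. Every term satisfies $e_d^{-\rho}\le m^{-\rho}$, and the $d^\ast$ term alone contributes $\alpha_{d^\ast}m^{-\rho}$. Hence
\[
\alpha_{d^\ast}m^{-\rho}\le \sum_d\alpha_d e_d^{-\rho}\le m^{-\rho}.
\]
Raising to the power $-1/\rho$ reverses the inequalities and gives $m\le b_p\le \alpha_{d^\ast}^{-1/\rho}\,m$. Since $\alpha_{d^\ast}>0$ is fixed, $\alpha_{d^\ast}^{-1/\rho}\to1$ as $\rho\to\infty$, and the squeeze yields $b_p\to m$.

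\textbf{Cobb--Douglas limit.} Take logarithms, so that $\ln b_p=-\frac1\rho\ln\sum_d\alpha_d e^{-\rho\ln e_d}$. The $e_d$ are fixed and bounded away from zero, so each exponential can be expanded as $1-\rho\ln e_d+O(\rho^2)$. Because $\sum_d\alpha_d=1$, the sum inside the logarithm is $1-\rho\sum_d\alpha_d\ln e_d+O(\rho^2)$. Then $\ln(1+x)=x+O(x^2)$ gives $\ln b_p=\sum_d\alpha_d\ln e_d+O(\rho)$, hence $b_p\to\prod_d e_d^{\alpha_d}$ as $\rho\to0^+$. Equivalently, one can apply l'H\^opital's rule to the $0/0$ quotient $\ln\sum_d\alpha_d e_d^{-\rho}\big/(-\rho)$. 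This step is the only one that needs care: the normalization $\sum_d\alpha_d=1$ is exactly what makes the constant term equal to $1$ and the limit finite.

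\textbf{Strict positivity.} The bound written in the statement, $(|\mathcal{D}(p)|\cdot100^\rho)^{-1/\rho}=0.01\,|\mathcal{D}(p)|^{-1/\rho}$, already gives $b_p>0$. However, it is slightly below $0.01$ whenever $|\mathcal{D}(p)|>1$, so it does not by itself yield $b_p\ge0.01$. I would therefore use the weights instead of the count. Since $e_d^{-\rho}\le100^\rho$ and $\sum_d\alpha_d=1$, we have $\sum_d\alpha_d e_d^{-\rho}\le100^\rho$, and so $b_p\ge(100^\rho)^{-1/\rho}=0.01$. This is also the special case $b_p\ge m\ge0.01$ of the lower bound from the Leontief step. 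I would note in the write-up that the displayed count-based bound is a weaker sufficient condition for positivity, while the weight-based one gives the stated $0.01$. I expect no real obstacle beyond this bookkeeping point.
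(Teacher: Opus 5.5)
Your proposal is correct and follows the paper's proof sketch essentially step for step. It matches on all three parts: the dominant-term argument for the Leontief limit (your squeeze $m \le b_p \le \alpha_{d^\ast}^{-1/\rho} m$ is a rigorous version of it), logarithms plus l'H\^opital for the Cobb--Douglas limit, and positivity via $\sum_d \alpha_d e_d^{-\rho} \le 100^{\rho}$. Your bookkeeping remark is also right: the paper's proof uses this weight-based bound to get $b_p \ge 0.01$, whereas the count-based expression in the statement only yields $0.01\,|\mathcal{D}(p)|^{-1/\rho}$, which gives positivity but not the $0.01$ floor.
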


\begin{proof}
See Appendix~B.
\end{proof}

\paragraph{Calibration and sensitivity.} I calibrate $\rho=5$ against \citet{Standish2015}: at $\rho=5$, a single bottleneck dimension at $e_d = 0.3$ reduces the CES aggregate by approximately 70\% relative to the no bottleneck case. ESM Table~S3 reports Spearman correlations and $\Delta$EV ranges for $\rho \in [3, 8]$; no rank order changes occur. A fixed $\rho$ across all value pools is a simplification; the discussion in Section~\ref{sec:limitations} addresses the case for enabler specific or time varying elasticities.

\paragraph{Relation to cascaded CES and nonlinear Domar aggregation.}
Several results from the network production and nonlinear aggregation
literatures bear on the single layer CES choice made here. First, Domar
aggregation with CES technologies \cite{Hulten1978} establishes
that input complementarities at the firm or sector level can produce amplified
responses to bottleneck shocks: when a single input has a low elasticity of
substitution, a productivity shortfall in that input propagates upstream
with multiplied effect. The AITG CES bottleneck is consistent with this
mechanism: a weak data infrastructure score ($e_d = 0.30$) propagates into
the value pool estimate more than proportionally; the choice of $\rho = 5$
is conservative relative to production estimates for digital complementarity
settings. Second, cascaded or nested CES models \cite{Sato1967}
allow enablers to be grouped hierarchically, with different elasticities at
each nesting level. A natural application would nest ``data stack''
(Data Infrastructure Maturity + AI Platform Readiness) and
``execution capacity'' (Organizational Change Capacity + Workforce
Augmentation) as inner aggregators, with a lower elasticity between the
two outer groups. Under such a nesting, the interaction effect between a
weak data stack and weak execution capacity would be additive at the
inner level but superadditive at the outer level. This structure is
theoretically better motivated than the current single layer CES; I identify it
as a technical extension in the research agenda
(Section~\ref{sec:agenda}). Third, I note the potential for CES near singularities
when $\sigma \to 0$: as $\rho \to \infty$ the aggregator collapses
to $\min_d e_d$ (Leontief), which can produce discontinuous jumps in $b_p$
around binding thresholds. The floor $e_d \geq 0.01$ and the finite
$\rho = 5$ prevent this pathology; any future variable elasticity extension
should maintain a finite lower bound on $\sigma$ to avoid degenerate outputs.

\subsection{Gap to Capture Scaling Function}

Define the normalized gap fraction:
\begin{equation}
 g_f = \frac{G_{\mathrm{eff},f}}{10} = \frac{\max(0,\ \mathrm{IASS}^*_i - \mathrm{AITG}^{\mathrm{raw}}_f)}{10}
 \label{eq:gap_fraction}
\end{equation}

The expected capture fraction before risk haircut is:
\begin{equation}
 \mathrm{Capture}_p = \bar{\kappa}_p \cdot b_p \cdot \eta(g_f),
 \qquad \eta(g) = 1 - e^{-\lambda g}
 \label{eq:capture}
\end{equation}

where $\bar{\kappa}_p$ is the base capture rate for pool $p$
(Section~\ref{sec:vcb}), $\eta(g)$ is an increasing, concave scaling
function with $\lambda = 3.5$, and $b_p$ is the bottleneck factor. The
concavity of $\eta$ reflects the empirical finding that accessible wins come
first and structurally resistant tasks persist
\cite{BrynjolfssonMcElheran2016,McKinseyStateAI2025}.

Standard value pool parameters, uplift rates, and calibration sources are in ESM Table~S2.

\paragraph{Pricing leverage note.} The pricing pool deserves special
emphasis. A commonly cited analysis implies that a 1\,\% price increase
translates to a substantial operating profit improvement assuming no volume
loss, the exact mechanism the pricing pool captures. \citet{BrynjolfssonLiRaymond2025}
show that AI augmented workers improve revenue yield (issues resolved, customer
satisfaction) by 14\,\% on average, and 34\,\% for novice workers, demonstrating
the practical scale of the labor productivity and revenue yield pools.

\subsection{Cost Model}

VCB splits implementation costs into capitalizable vs.\ period expense and
one time build vs.\ run rate, following updated GAAP guidance (ASU~2025-06
for internal use software) and IFRS~IAS~38 for development expenditure.

\begin{equation}
 \mathrm{NPV}(\mathrm{Cost}) = \sum_{t=1}^{T}
 \frac{C_{\mathrm{cap},t} + C_{\mathrm{opex},t}}{(1 + \mathrm{WACC})^t}
 \label{eq:cost_npv}
\end{equation}

where $C_{\mathrm{cap},t}$ are capitalizable implementation costs,
$C_{\mathrm{opex},t}$ are period expenses (SaaS fees, training, ongoing
inference, monitoring), and WACC is the Weighted Average Cost of Capital
applied as the risk appropriate discount rate. Under US GAAP, software configuration and integration
labor meeting the ASU~2025-06 authorization/probability threshold is
capitalizable; training and ongoing compute are expensed. Under IFRS, SaaS
fees are generally service expenses; only demonstrably controlled intangibles
meeting IAS~38 development criteria are capitalized.

\subsection{Value Ramp Function}
\label{sec:vcb_ramp}

Value does not materialize immediately. Following the J curve logic
\cite{BrynjolfssonRockSyverson2021}, I model value realization as a
logistic ramp. A prior version used a static default of $t_{50} = 18$\,months
for all firms, creating a disconnect: the adoption model correctly delayed
software deployment for unprepared firms (via IFS adjusted $t_{0,w,f}$),
but the financial model projected value arriving at month 18 regardless.
Cash flows arrived 17 months before the software was adopted. The
corrected specification makes $t_{50}$ a firm specific endogenous quantity.

\subsection{Enterprise Value Creation: Full Decomposition}
\label{sec:ev_decomposed}

\paragraph{Notation.}
Define the acquisition baseline ramp $R_f^0 \equiv R(\hat{t}_f;\, t_{50,f})$
and the end of hold ramp $R_f^T \equiv R(\hat{t}_f + T;\, t_{50,f})$,
where $t_{50,f}$ is the firm specific endogenous ramp inflection
(Section~\ref{sec:vcb_ramp}) and $T = 60$\,months is the
5 year hold. The incremental ramp fraction, the share of the full
value pool not yet in baseline EBITDA at acquisition, is:
\begin{equation}
 \Delta R_f \equiv R_f^T - R_f^0
 \label{eq:delta_r}
\end{equation}

\paragraph{Component 1: Terminal Value (TV).}
Exit multiple capitalization of the incremental run rate EBITDA uplift
realized during the hold period. $V_p^{\mathrm{run\text{ }rate}}$ already
contains $\Phi_f$ (Eq.~\eqref{eq:vpool}); it must not be multiplied
by $\Phi_f$ again:
\begin{equation}
 \mathrm{TV}_f = \underbrace{\left(\sum_{p=1}^{7} V_p^{\mathrm{run\text{ }rate}}\right)}_{\text{already contains }\Phi_f}
 \times \Delta R_f \times M_i \times \mathrm{IFS}_f
 \label{eq:tv}
\end{equation}

\begin{remark}[Baseline Already Captured]
If $\Delta R_f$ is replaced by $R_f^T$ (the prior error), the model
takes credit for value already priced into the acquisition multiple.
Eq.~\ref{eq:delta_r} defines the corrected ramp increment; the magnitude of this error varies substantially across firms: firms with $\hat{t}_f \gg t_{50,f}$ (UPS, HCA, Rockwell) have
$R_f^0 \approx 0.96$, making $\Delta R_f \approx 0.04$ and the
prior overstatement $\approx 25\times$. For early stage firms
(Zions, Ferguson, $R_f^0 \approx 0.52$), the prior overstatement
was $\approx 2\times$.
\end{remark}

\paragraph{Component 2: Interim Free Cash Flow (FCF) Integral.}
Discounted incremental cash flows during the hold, also net of baseline:
\begin{equation}
 \mathrm{FCF}_f^{\mathrm{interim}} = \int_{\hat{t}_f}^{\hat{t}_f + T}
 \left(\sum_{p=1}^{7} V_p^{\mathrm{run\text{ }rate}}\right)
 \cdot \bigl[R_f(t) - R_f^0\bigr] \cdot \mathrm{IFS}_f \cdot e^{-r_{\mathrm{WACC}} t/12}\, \mathrm{d}t
 \label{eq:fcf_integral}
\end{equation}

Discrete annual approximation:
\begin{equation}
 \mathrm{FCF}_f^{\mathrm{interim}} \approx \sum_{y=1}^{5}
 \frac{\left(\sum_{p=1}^{7} V_p^{\mathrm{run\text{ }rate}}\right)
 \cdot \bigl[R_f(\hat{t}_f + 12y) - R_f^0\bigr] \cdot \mathrm{IFS}_f}
 {(1 + \mathrm{WACC})^y}
 \label{eq:fcf_discrete}
\end{equation}

\paragraph{Total 5 Year Risk Adjusted Enterprise Value Creation.}
\begin{equation}
 \boxed{
 \Delta\mathrm{EV}_f = \mathrm{TV}_f
 + \mathrm{FCF}_f^{\mathrm{interim}}
 - \mathrm{NPV}(\mathrm{Cost})
 }
 \label{eq:ev}
\end{equation}

\begin{remark}[Economic Interpretation of $\Delta R_f$]
The two limiting cases clarify the framework's behavior. For an apex
incumbent (JPMorgan, Salesforce: $R_f^0 \approx 1.0$), $\Delta R_f \approx 0$
and $\Delta\mathrm{EV} \approx -\mathrm{NPV}(\mathrm{Cost})$: the transformation
is already complete, and additional investment yields no incremental ramp
benefit. The framework correctly signals that the investment case is
maintenance, not transformation. For an early stage target (Zions, Ferguson:
$R_f^0 \approx 0.08$--$0.13$), $\Delta R_f \approx 0.85$--$0.87$: most of
the value pool is incrementally capturable, and the investment case is
strongest. This monotone relationship between $\hat{t}_f$ and investment
attractiveness is a structural property of the framework, not a calibration
artifact.
\end{remark}

\begin{definition}[AITG Value Density]
\label{def:vd}
\emph{AITG Value Density} is risk adjusted 5 year enterprise value creation
per \$1M of implementation cost:
\begin{equation}
 \mathrm{AITG\text{-}VD}_f = \frac{\Delta\mathrm{EV}_f\text{ (risk adjusted, 5 yr)}}
 {\text{Total Implementation Cost (\$M)}}
 \label{eq:vd}
\end{equation}
\end{definition}

Institutional thresholds: AITG-VD $\geq 2.0\times$ = Tier~1 (Invest);
$1.0$--$2.0\times$ = Tier~2 (Monitor); $<1.0\times$ = Tier~3 (Pass).

\section{Implementation Feasibility: Endogenous Risk Integration}
\label{sec:ifs}

\begin{tcolorbox}[title={What Is the IFS? Conceptual Summary}]
\textbf{The technology works at JPMorgan. The question is whether it will work
at \emph{this} company, with \emph{this} management team, with \emph{this}
data infrastructure, on \emph{your} timeline.}

This is the most dangerous step in AI due diligence, and the one most
frequently skipped. Every management team presents an AI roadmap. The
Implementation Feasibility Score (IFS) is the framework's mechanism for
stress testing that roadmap against structural organizational reality.

The key insight, and the mathematical departure from prior models, is that
IFS factors are \textbf{not a discount applied after the fact}. A company with
poor data readiness does not simply capture 70\% of a theoretical value estimate.
It follows a materially different trajectory: a slower ramp, a later inflection,
and a higher probability of stalling before reaching the value generative portion
of the S curve. Bloom, Sadun, and Van Reenen (2012) show that management quality
explains approximately 30\% of cross firm total factor productivity differences
in manufacturing, not as a static quality label, but as a dynamic determinant
of whether technology investments generate returns.

\textbf{Example.} JPMorgan's Organizational Change Capacity ($\delta_{\mathrm{OCC}}$)
is 0.92: 250,000 employees voluntarily adopted LLM Suite within 8 months of
launch. This is not a claim about JPMorgan's culture; it is a documented
adoption velocity event. Zions Bancorporation's $\delta_{\mathrm{OCC}}$ is 0.55,
reflecting a decentralized 11 brand structure with no enterprise AI governance
framework disclosed. The IFS framework translates this structural difference into
a specific parameter adjustment: Zions's adoption inflection arrives approximately
$1.97\times$ later than a data ready peer's. That delay is not a cosmetic
discount; it is the compound interest cost of organizational unreadiness.

\textit{Analogy.} Think of it as a marathon. Prior frameworks apply implementation risk as a flat discount: say you will run a marathon, but award only 70\% of the medal because you are undertrained. The AITG endogenous model says something different: because you lack data readiness, you will run slower and finish later; the penalty is temporal, not scalar. Mathematically, the inflection point of the adoption S curve shifts rightward, which compounds the Internal Rate of Return (IRR) penalty through delay rather than through an arbitrary terminal value haircut. The distinction is not cosmetic; it changes which interventions the investment committee should prioritize.

\textbf{Who should use this?} An investment committee uses IFS to determine
whether the value waterfall survives the management reality. A corporate
leadership team uses it to identify whether their transformation plan is
structurally feasible or requires parallel organizational change investment
before the technology investment will yield returns.
\end{tcolorbox}

\subsection{IFS Validation}

Ablation analysis confirms that trajectory factors (OCC, DR) contribute the largest marginal signal: dropping OCC reduces backtest $\rho_s$ by 0.109. Residual factors (VTR, CRS, REG) contribute less individually but collectively account for an additional $+$15\% reduction in $\Delta$EV error. Full ablation results (Table~\ref{tab:ifs_ablation}) and component interpretability details appear in Appendix~C.

\subsection{The Endogeneity Problem with Post Hoc Risk Multipliers}

A post hoc IFS multiplier on a precomputed $\Delta\mathrm{EV}$ double counts risk: the value decomposition already reflects expected execution, so an additional discount produces biased estimates. I integrate implementation risk endogenously into S curve parameters, shifting the ramp inflection rather than discounting terminal value.

\subsection{Endogenous Integration: IFS Factors into Curve Parameters}

I restructure the IFS into two components. \textbf{Trajectory factors}
(organizational change capacity and data readiness) are integrated directly
into the wave steepness $k_w$ and ramp timing $t_{50}$. \textbf{Residual
factors} (vendor risk, competitive response speed, regulatory exposure) affect
terminal value realization and cost, and remain as a limited residual multiplier.

\subsubsection{Trajectory Risk Factors}

Let $\delta_{\mathrm{OCC}} \in [0.50, 1.00]$ be the Organizational Change Capacity (OCC) score and $\delta_{\mathrm{DR}} \in [0.50, 1.00]$ be the data
readiness (DR) score, both calibrated via the rubric in Table~\ref{tab:ifs}.

\begin{remark}[Range Correction]
Prior versions stated $\delta_{\mathrm{OCC}} \in [0.60, 1.00]$. Stress testing
against the Zions Bancorporation case study showed that the empirically derived
score ($\delta_{\mathrm{OCC}} = 0.55$) fell below the declared lower bound.
The domain is extended to $[0.50, 1.00]$ for both factors. A hard floor of
$\delta_{\min} = 0.10$ is added to Eq.~\eqref{eq:t50_adjusted} to prevent
division by zero under degenerate inputs.
\end{remark}

The adjusted wave steepness for firm $f$ is:
\begin{equation}
 k_{w,f} = k_{w,\text{base}} \cdot
 \underbrace{\phi_{\mathrm{OCC}}(\delta_{\mathrm{OCC}})}_{\text{org. capacity factor}}
 \cdot
 \underbrace{\phi_{\mathrm{DR}}(\delta_{\mathrm{DR}})}_{\text{data readiness factor}}
 \label{eq:k_adjusted}
\end{equation}

where the adjustment functions are:
\begin{equation}
 \phi_{\mathrm{OCC}}(\delta) = 0.55 + 0.45\delta, \qquad
 \phi_{\mathrm{DR}}(\delta) = 0.50 + 0.50\delta
 \label{eq:phi}
\end{equation}

mapping $\delta \in [0.50, 1.00]$ to steepness multipliers bounded strictly
above zero. The adjusted value ramp midpoint for firm $f$ (default $t_{50,\text{base}}=18$ months):
\begin{equation}
 t_{50,f} = \frac{t_{50,\text{base}}}
 {\max(\delta_{\mathrm{OCC}},\,0.10)^{0.40} \cdot \max(\delta_{\mathrm{DR}},\,0.10)^{0.60}}
 \label{eq:t50_adjusted}
\end{equation}

In forward simulations, I apply the same delay factor to the adoption wave midpoints: $t_{0,w,f} = t_{0,w,\text{base}} \cdot (t_{50,f}/t_{50,\text{base}})$. The companion Monte Carlo code (ESM) uses a refined specification in which $t_{50,\text{base}}$ itself varies with the firm's position on the S curve: $t_{50,\text{base}} = 18 + (\mathrm{AITG}^{\mathrm{raw}}/10)\times 42$\,months, reflecting the empirical observation that firms further along the adoption curve face longer value realisation horizons as they pursue higher wave capabilities. The analytical worked examples in Appendix~D use the constant $t_{50,\text{base}}=18$ base case for transparency.

The exponents $0.40$ and $0.60$ reflect the empirical finding from
\citet{BloomSadunVanReenen2016} and \citet{BrynjolfssonHitt2000} that data
readiness is the more binding constraint on adoption speed.

\begin{remark}[Corrected Zions Delay Factor]
The paper previously cited a $1.72\times$ inflection delay for Zions
($\delta_{\mathrm{OCC}} = 0.55$, $\delta_{\mathrm{DR}} = 0.48$).
Numerical evaluation gives: $1/(0.55^{0.40} \times 0.48^{0.60}) = 1.97\times$.
The corrected delay is \textbf{$1.97\times$}, strengthening the disintermediation
narrative. All downstream $\Delta$EV figures for Zions are updated accordingly.
\end{remark}

\paragraph{Observation:}
The practical implication is significant. A firm with $\delta_{\mathrm{OCC}} = 0.70$ and $\delta_{\mathrm{DR}} = 0.65$ (both near minimum) runs $k_{w,f} \approx 0.71 \cdot k_{w,\text{base}}$ with $t_{0,w,f} \approx 1.49 \cdot t_{0,w,\text{base}}$: a 29\% shallower curve delayed by 49\%. This is not a uniform 29\% haircut on terminal value; it is a compounding deceleration that reduces 5 year cumulative value by 38--52\% depending on the wave mix.

\subsubsection{Residual IFS Multiplier}

The three residual IFS factors, vendor/technology risk ($\delta_3$), competitive response speed ($\delta_4$), and regulatory exposure ($\delta_5$), affect terminal value realization and incremental cost structure but do not alter the adoption trajectory. I retain them as a geometric residual multiplier:

\begin{equation}
 \mathrm{IFS}_f^{\text{residual}} = \prod_{j \in \{3,4,5\}} \delta_j^{w_j},
 \qquad w_3 = 0.35,\; w_4 = 0.35,\; w_5 = 0.30
 \label{eq:ifs_residual}
\end{equation}

\subsection{Revised Enterprise Value Calculation}

The full AITG enterprise value equation corrects four algebraic fractures identified through adversarial review:

\begin{equation}
\begin{aligned}
 \Delta\mathrm{EV}_f &=
 \underbrace{\Delta R_f \cdot \textstyle\sum_p V_p^{\mathrm{raw}}
 \cdot \mathrm{IFS}_f^{\mathrm{traj}}}_{\mathrm{TV}_{f}} \\
 &\quad+\;
 \underbrace{
 \displaystyle\sum_{y=1}^{5}
 \frac{\bigl(\textstyle\sum_p V_p^{\mathrm{run\text{ }rate}}\bigr)
 \cdot \bigl[R_f(\hat{t}_f{+}12y) - R_f^0\bigr]
 \cdot \mathrm{IFS}_f^{\mathrm{resid}}}
 {(1+\mathrm{WACC})^y}
 }_{\mathrm{FCF}^{\mathrm{interim}}}
 \;-\; \mathrm{NPV}(\mathrm{Cost})
\end{aligned}
 \label{eq:ev_revised}
\end{equation}

Four corrections are encoded in this equation relative to prior versions:
\textbf{(1)}~$\Phi_f$ enters through $V_p^{\mathrm{run\text{ }rate}}$
via Eq.~\eqref{eq:vpool} exclusively; it does not appear separately in the TV
or FCF numerators, thereby eliminating the
$\Phi_f^2$ double discount that suppressed 73\,\% of Zions' value pool.
\textbf{(2)}~TV uses $\Delta R_f = R_f^T - R_f^0$ (Eq.~\ref{eq:delta_r})
rather than $R_f^T$ alone, deducting value already captured in baseline EBITDA.
\textbf{(3)}~FCF uses $[R_f(\hat{t}_f + 12y) - R_f^0]$, applying the same
baseline deduction to each annual cash flow.
\textbf{(4)}~$R_f(t)$ uses the endogenous firm specific inflection $t_{50,f}$
(Eq.~\ref{eq:t50_adjusted}) rather than the static 18 month default,
aligning financial value arrival with actual software adoption timing.

$V_p(k_{w,f}, t_{0,w,f})$ shows that value pool capture is a function of
firm adjusted wave dynamics. See Section~\ref{sec:ev_decomposed} for
the derivation and economic interpretation of each component.

\subsection{IFS Rubric: Five Factors}

\begin{table}[H]
\centering
\caption{IFS Factor Specification (Revised: Trajectory vs.\ Residual)}
\label{tab:ifs}
\footnotesize
\begin{tabularx}{\textwidth}{@{}p{2.4cm}p{1.2cm}Xrp{1.5cm}@{}}
\toprule
\textbf{Factor} & \textbf{Type} & \textbf{Definition} & \textbf{Weight} & \textbf{$\delta$ Range} \\
\midrule
Org.\ Change Capacity & Trajectory & Leadership alignment, cultural readiness, prior transformation record & 0.30 & [0.60, 1.00] \\
Data Readiness & Trajectory & Quality, completeness, accessibility, governance; data pipeline health & 0.30 & [0.55, 1.00] \\
Vendor/Tech Risk & Residual & AI vendor ecosystem maturity; contract flexibility; switching costs & 0.35* & [0.75, 1.00] \\
Competitive Response & Residual & Speed at which AI advantages erode due to competitor adoption (CADR) & 0.35* & [0.70, 1.00] \\
Regulatory Exposure & Residual & Risk of regulatory action or compliance cost escalation during hold & 0.30* & [0.55, 1.00] \\
\bottomrule
\end{tabularx}

\smallskip
\noindent\scriptsize * Residual weights normalized within the $\mathrm{IFS}^{\mathrm{resid}}$ product; trajectory factors absorbed into $k_{w,f}$ and $t_{0,w,f}$.
\end{table}

\section{Empirical Illustrations}
\label{sec:empirical}

\begin{tcolorbox}[title={Application Contexts: External Valuation and Internal Assessment}]
The fourteen illustrations serve two purposes. \textbf{For investment committees:} each generates an auditable, deal specific narrative from industry ceiling through implementation cost to risk adjusted value density, anchored entirely to public data. \textbf{For corporate self assessment:} the same framework lets a Chief AI Officer benchmark the institution against its industry frontier, quantify competitive urgency, and build a board level investment case. The Zions analysis illustrates the internal use case: a structured diagnosis that identifies binding constraints, estimates the value of resolving them, and quantifies how delay compounds disadvantage. All scores derive from public data; internal application with actual infrastructure debt and adoption rates yields a more precise diagnosis.
\end{tcolorbox}

\subsection{Scoring Methodology and Data Tier}

All companies are scored exclusively from public sources: 10-K/10-Q filings, earnings transcripts, investor day presentations, and AI deployment disclosures (2024--2025). Evidence is classified Tier~A (audited/quantified) through Tier~D (analyst estimate); scores with ${>}50\%$ Tier~C/D inputs receive a \textit{Limited Confidence} flag. Financial baselines for VCB computation are in ESM Table~S4.

\textit{See ESM Table~S4 for full financial baselines and data tier classifications.}

\subsection{JPMorgan Chase: Dominant Incumbent Benchmark Case}
\label{sec:jpm}

JPMorgan Chase is the world's largest bank by market capitalization and the
defining case study for what AI transformation looks like when executed with
maximum organizational commitment and proprietary data scale. As of 2025,
JPMorgan has committed \$18 billion to technology annually, allocated
approximately \$2 billion specifically to AI, and deployed its proprietary
Large Language Model Suite (LLM Suite) platform to approximately 250,000
employees, its entire workforce excluding branch and call center staff
\cite{JPMorgancnbc2025}. Roughly half of those employees use LLM Suite daily.
AI attributed financial benefits have grown 30--40\% annually since the
program's inception \cite{McKinseyWaldron2025} \cite{BabinaEtAl2024}. Chief Analytics Officer
Derek Waldron has publicly characterized the long term goal as creating the
world's first ``fully AI connected enterprise.''

This case also validates a critical AITG mechanism: the \textbf{Firm Scale
Factor} ($\Phi_f$). Smaller regional banks cannot train proprietary
foundation models because they lack the transactional data volume required to
achieve gradient descent at scale, a property I formalize as \emph{data
gravity}. JPMorgan's proprietary dataset of global transaction flows,
covenant libraries, credit histories, and client behaviors constitutes a
nonreplicable competitive asset that its AI models continuously improve.
This is the data nonrivalry mechanism of \citet{JonesTonetti2020} in practice:
JPMorgan's data generates returns without being depleted, while a regional
bank using off the shelf models competes on a perpetually leveling playing
field.

\begin{table}[H]
\centering
\caption{JPMorgan Chase: AITG Dimensional Scores (Tier~1)}
\label{tab:jpm_scores}
\footnotesize
\begin{tabular}{@{}p{5.5cm}rrp{5.5cm}@{}}
\toprule
\textbf{Dimension} & \textbf{Score} & \textbf{Tier} & \textbf{Evidence} \\
\midrule
\multicolumn{4}{l}{\textit{IASS Anchor: Commercial Banking (Large Cap), IASS$^* = 9.38$ (AFC adjusted, $C_{2026}=1.90$)}} \\
\midrule
Data Infrastructure (DIM) & 8.5 & A & \$18B tech budget; 65\% workloads on cloud; near zero legacy debt \\
Process Automation (PAC) & 8.2 & A & 450+ AI use cases in production; covenant extraction; pitch deck generation \\
Workforce Augment. (WAR) & 8.5 & A & 250K LLM Suite users; $\sim$4 hrs/week productivity gain per employee \\
Decision Automation (DAR) & 7.5 & A/B & Agentic AI deployment begun; multistep autonomous task handling \\
AI Revenue Integration (APR) & 7.8 & A/B & AI enabled client concierge; IndexGPT personalization; ChatCFO \\
Org.\ AI Capability (OAC) & 8.8 & A & 2,000 AI staff; Chief Data and Analytics Officer (CDAO) governance; Waldron PhD computational physics \\
\midrule
\multicolumn{4}{l}{\textit{IFS Trajectory Factors}} \\
Org.\ Change Capacity ($\delta_{\mathrm{OCC}}$) & 0.92 & A & American Banker 2025 Innovation of the Year; viral opt in LLM Suite \\
Data Readiness ($\delta_{\mathrm{DR}}$) & 0.94 & A & 80\% modern infrastructure; model agnostic platform; 8 week update cycle \\
\midrule
\multicolumn{4}{l}{\textit{IFS Residual Factors}} \\
Vendor/Tech Risk ($\delta_{\mathrm{VTR}}$) & 0.91 & A & Model agnostic; in house build for security; OpenAI + Anthropic hedged \\
Comp.\ Response Speed ($\delta_{\mathrm{CRS}}$) & 0.88 & A & Leading EVIDENT AI Index; first mover on agentic enterprise platform \\
Regulatory Exposure ($\delta_{\mathrm{REG}}$) & 0.85 & B & Heavily regulated; but proven execution at scale under OCC/Fed oversight \\
\bottomrule
\end{tabular}
\end{table}

\noindent\textbf{AITG Output Summary (JPMorgan Chase):}
\[
 \mathrm{AITG}^{\mathrm{raw}} = 8.22
 \;\Big|\;
 \mathrm{IR} = 8.76
 \;\Big|\;
 G_{\mathrm{eff}} = 1.16
 \;\Big|\;
 \mathrm{ADRI} = 0.5\;\text{(Low)}
\]
\[
 \mathrm{IFS}_{\mathrm{residual}} = 0.88\;
 \Rightarrow\;
 \Delta\mathrm{EV}_{5\mathrm{yr}} \approx \$20\text{--}\$98\text{B}
\]
\[
 \text{MC uncertainty } (M{=}10{,}000)\text{:}\quad
 \mathrm{P}_{10} = \$20.2\text{B}
 \;\big|\;
 \mathrm{P}_{50} = \$44.4\text{B}
 \;\big|\;
 \mathrm{P}_{90} = \$97.9\text{B}
\]

\noindent\textit{Uncertainty decomposition.} Sobol first order analysis attributes 50\% of $\text{Var}(\Delta\mathrm{EV})$ to exit multiple assumptions and 44\% to capture rate assumptions, jointly accounting for 94\% of output variance. The base case scalar ($\$44.4$B) coincides with the Monte Carlo P$_{50}$; the P$_{10}$ of \$20.2B remains materially positive, confirming that the investment case survives adverse scenario stress. Full distributional output is in ESM Table~S5.

\noindent\textit{ADRI context.} JPMorgan's ADRI of 0.5 reflects its status as
the competitive dominant incumbent in its category. Its moat ($\mathrm{Moat}_f = 0.88$)
derives from four durable sources: proprietary data scale, regulatory capital
requirements that create structural barriers to AI native fintech entry at
equivalent scale, an established client relationship network with high switching
costs, and demonstrated ability to execute complex enterprise AI at production
scale under stringent regulatory oversight. JPMorgan is not immune to
disruption, but its competitive trajectory is improving, not deteriorating.

\noindent\textit{CES Bottleneck observation.} Because JPMorgan scores uniformly
high across all six company dimensions (8.5 / 8.0 / 8.5 / 7.5 / 7.8 / 8.8),
the CES bottleneck aggregator with $\rho = 5$ does not penalize value pool capture.
The weakest link (DAR at 7.5) exerts minimal drag, confirming that the bottleneck
operator behaves correctly: a well balanced high performer incurs no cross dimension
penalty. This is precisely the behavior I designed the CES specification to exhibit.

\subsection{Zions Bancorporation: The Disintermediation Risk Illustration}
\label{sec:zions}

Zions Bancorporation is a \$89 billion asset regional bank operating under
11 distinct brand identities across 11 western U.S.\ states. With FY2025
revenue of \$3.4 billion and approximately 10,000 employees, Zions occupies
the competitive sweet spot most vulnerable to AI enabled disintermediation:
too large to be a nimble community bank, too small to fund a proprietary AI
platform \cite{Zions2025Q4}.

Zions's AI initiatives are nascent but real. In April 2025, Zions selected
nCino as its technology platform for loan origination transformation, deploying
nCino's Banking Advisor, Commercial Pricing, and Profitability and Analysis
tools to modernize end to end lending processes \cite{nCinoZions2025}.
This is an important distinction: Zions is deploying \emph{vendor AI} (nCino's
off the shelf Banking Advisor), not proprietary AI. It is purchasing
AI as a service, which means its competitors have access to identical or
equivalent capability for similar per seat licensing fees. There is no data
gravity moat here.

\begin{table}[H]
\centering
\caption{Zions Bancorporation: AITG Dimensional Scores (Tier~1)}
\label{tab:zions_scores}
\footnotesize
\begin{tabular}{@{}p{5.5cm}rrp{5.5cm}@{}}
\toprule
\textbf{Dimension} & \textbf{Score} & \textbf{Tier} & \textbf{Evidence} \\
\midrule
\multicolumn{4}{l}{\textit{IASS Anchor: Commercial Banking (Large Cap), IASS$^* = 9.38$ (AFC adjusted, $C_{2026}=1.90$)}} \\
\midrule
Data Infrastructure (DIM) & 4.5 & C & Multibrand legacy systems; nCino migration in early stages; no data lake disclosed \\
Process Automation (PAC) & 4.0 & C & nCino deployment begun 2025; primarily manual lending workflow currently \\
Workforce Augment. (WAR) & 3.5 & C/D & No enterprise Generative AI (GenAI) platform disclosed; limited AI skill demand in job postings \\
Decision Automation (DAR) & 3.8 & C & nCino pricing tools initial; no autonomous credit decisioning at scale \\
AI Revenue Integration (APR) & 2.8 & D & No AI enabled product differentiation disclosed in public filings \\
Org.\ AI Capability (OAC) & 4.2 & C & No dedicated CAIO; no public AI governance framework; nCino partnership announced \\
\midrule
\multicolumn{4}{l}{\textit{IFS Trajectory Factors}} \\
Org.\ Change Capacity ($\delta_{\mathrm{OCC}}$) & 0.55 & C & Decentralized 11 brand structure creates AI governance complexity \\
Data Readiness ($\delta_{\mathrm{DR}}$) & 0.48 & C/D & Multilegacy core systems; fragmented data across brands; no cloud migration disclosed \\
\midrule
\multicolumn{4}{l}{\textit{IFS Residual Factors}} \\
Vendor/Tech Risk ($\delta_{\mathrm{VTR}}$) & 0.72 & B & nCino is proven technology; single vendor dependency risk \\
Comp.\ Response Speed ($\delta_{\mathrm{CRS}}$) & 0.62 & C & Regional bank competitors (e.g., Western Alliance) advancing AI; JPM extends lead daily \\
Regulatory Exposure ($\delta_{\mathrm{REG}}$) & 0.82 & B & OCC regulated; manageable; does not add incremental suppression \\
\bottomrule
\end{tabular}
\end{table}

\noindent\textbf{AITG Output Summary (Zions Bancorporation):}
\[
 \mathrm{AITG}^{\mathrm{raw}} = 3.80
 \;\Big|\;
 \mathrm{IR} = 4.05
 \;\Big|\;
 G_{\mathrm{eff}} = 5.58
 \;\Big|\;
 \mathrm{ADRI} = 2.6\;\text{(Moderate/Elevated)}
\]
\[
 \mathrm{IFS}_{\mathrm{residual}} = 0.71
 \quad\Rightarrow\quad
 \Delta\mathrm{EV}_{5\mathrm{yr}} \approx \$0.6\text{--}\$1.2\text{B}\;\text{(transformative scenario only)}
\]
\[
 \text{MC uncertainty } (M{=}10{,}000)\text{:}\quad
 \mathrm{P}_{10} = \$0.61\text{B}
 \;\big|\;
 \mathrm{P}_{50} = \$0.88\text{B}
 \;\big|\;
 \mathrm{P}_{90} = \$1.23\text{B}\;(\text{transformative scenario only})
\]

\noindent\textit{Note on Zions uncertainty.} The transformative scenario P$_{10}$ of \$610M represents approximately 7\% of Zions\textquotesingle s enterprise value; meaningful but not transformative under base case execution assumptions. The wide P$_{10}$--P$_{90}$ spread (\$610M to \$1.23B) confirms the ADRI diagnostic: execution risk (low $\delta_{\mathrm{DR}} = 0.48$) dominates the output distribution, producing high outcome variance even within the transformative scenario. Full scenario grid in ESM Table~S5.

\noindent\textit{The Disintermediation Scenario.}
Zions's ADRI of 2.6 is the most consequential number in this analysis.
ESM Table~S1 reports that Commercial Banking has
CADR~$= 8.4$, one of the highest competitive diffusion rates in the reference
table. This rate implies rapid AI diffusion across the competitive peer group.
JPMorgan is extending its proprietary AI advantage daily. AI native lenders
(Upstart, Blend, Figure Technologies) are using AI to underwrite loans with
lower human overhead. And Zions sits with a 5.58 point effective gap in a
9.38 ceiling industry with only commodity vendor AI tools as its AI stack.

The IFS endogeneity mechanism makes this especially acute. Zions's
$\delta_{\mathrm{DR}} = 0.48$ (well below the $\delta_{\mathrm{DR}} = 0.60$
midpoint) implies, from Equation~\eqref{eq:t50_adjusted}, that its adoption curve
is time shifted by approximately $1/(0.55^{0.40} \times 0.48^{0.60}) \approx 1.97\times$ relative
to a well prepared peer. A transformation that a data ready bank completes in
18 months will take Zions approximately 35 months under current infrastructure
conditions. This is not a scheduling inconvenience; it represents a compounding competitive disadvantage.

\begin{remark}[Framework Application for Self Assessment]
Any institution in a position analogous to Zions can apply the AITG rubric to its
own 10-K disclosures and compare the resulting score against the AFC adjusted IASS$^*$
ceiling for its sector. The gap ($G_\mathrm{eff}$) and the IFS time shift multiplier
jointly quantify both the competitive exposure and the adoption timeline under current
infrastructure conditions. A systematic self assessment protocol is provided in
ESM Part~IV.
\end{remark}

\subsubsection{End to End Worked Example: Full $\Delta$EV Pipeline for Zions Bancorporation}\label{sec:worked_example}

\textit{A complete ten step VCB walk-through for Zions Bancorporation, demonstrating every computational stage from gap measurement through risk adjusted $\Delta$EV, appears in Appendix~D (Table~\ref{tab:zions_vcb_steps}).}

\subsection{JPMorgan vs.\ Zions: Head to Head Comparison}
\label{sec:jpm_vs_zions}

Table~\ref{tab:jpm_vs_zions} presents the head to head comparison. Both firms
operate in the same industry (IASS$^*$ 9.38). The AITG framework explains the
divergence entirely from structural factors visible in public data.

\begin{table}[H]
\centering
\caption{JPMorgan Chase vs.\ Zions Bancorporation: AITG Head to Head}
\label{tab:jpm_vs_zions}
\small
\begin{tabularx}{\textwidth}{@{}Xrr@{}}
\toprule
\textbf{Metric} & \textbf{JPMorgan Chase} & \textbf{Zions Bancorporation} \\
\midrule
Industry IASS$^*$ & 9.38 & 9.38 \\
AITG (raw) & 8.22 & 3.80 \\
Industry Relative (IR) Score & 8.76 & 4.05 \\
Effective Gap ($G_\mathrm{eff}$) & 1.16 & 5.58 \\
ADRI (Disruption Risk) & 0.5 (Low) & 2.6 (Moderate/Elevated) \\
IFS Residual Multiplier & 0.88 & 0.71 \\
AI Stack Type & Proprietary & Vendor (nCino) \\
Firm Scale Factor ($\Phi_f$) & 1.00 & 0.52 \\
5 Year $\Delta$EV (risk adj.) & \$20--98B & \$0.6--1.2B \\
Competitive Trajectory & Improving & Deteriorating \\
\midrule
\multicolumn{3}{l}{\textit{The Wide Gap Fallacy Confirmed}} \\
Zions has a wider effective gap (5.58 vs.\ 1.16). & \multicolumn{2}{l}{} \\
JPMorgan has higher value density. & \multicolumn{2}{l}{} \\
\multicolumn{3}{l}{Why: Firm Scale Factor, proprietary AI, and IFS differential compound over the hold period.} \\
\bottomrule
\end{tabularx}
\end{table}

The central lesson is precise: \textbf{gap size is not investable
signal without gap quality}. Zions has 2.6 times more effective gap than
JPMorgan. JPMorgan captures 30--40 times more dollar value from that smaller
gap. The AITG framework's multimodule architecture, IASS ceiling, VCB value
density, ADRI urgency, and IFS execution risk, is required to explain
this divergence. No single score maturity model can recover it.

\subsection{Cross Industry Normalization: Why Raw AITG Scores Cannot Be Compared Directly}
\label{sec:cross_industry}

The within industry comparison (JPMorgan vs.\ Zions) illustrates gap
quality. The \emph{cross industry} comparison illustrates the role of
IASS normalization. Consider Zions Bancorporation (Commercial Banking, IASS$^* = 9.38$, AITG 3.80) and UPS (Logistics, IASS$^* = 7.68$, AITG 4.08). A na\"ive reading
of the raw AITG scores would conclude UPS is the better positioned firm for
AI transformation. That conclusion is wrong in every material dimension, and
the IASS normalization mechanism recovers the correct reading.

\begin{table}[H]
\centering
\caption{Cross Industry Normalization: Zions Bancorporation vs.\ UPS}
\label{tab:cross_industry}
\footnotesize
\begin{tabularx}{\textwidth}{@{}Xrr@{}}
\toprule
\textbf{Metric} & \textbf{Zions Bancorp.} & \textbf{UPS} \\
\midrule
\multicolumn{3}{l}{\textit{Step 1: Raw Score (Cross Industry Comparison Without Normalization)}} \\
Industry & Comm.\ Banking & Logistics \\
AITG (raw, 0--10) & 3.80 & 4.08 \\
\textit{Na\"ive conclusion: UPS is ``more transformed.''} & & \\
\midrule
\multicolumn{3}{l}{\textit{Step 2: Apply IASS Ceiling (Cross Industry Normalization)}} \\
Industry IASS$^*$ (AFC adjusted ceiling) & 9.38 & 7.68 \\
IR Score $= \mathrm{AITG}/\mathrm{IASS}^* \times 10$ & 4.05 & 5.31 \\
Effective Gap $G_{\mathrm{eff}}$ & 5.58 & 3.60 \\
\textit{IR score still favors UPS, but the ceiling differential is now visible.} & & \\
\midrule
\multicolumn{3}{l}{\textit{Step 3: Apply Value Architecture (Why the IASS Ceiling Matters for Returns)}} \\
Exit Multiple ($M_i$) & $1.3\times$ book & $9\times$ EBITDA \\
Run Rate Labor Value Pool (est.) & \$1.16B/yr & \$47.4B/yr \\
$\Phi_f$ (Firm Scale Factor) & 0.52 & 0.80 \\
IFS Residual Multiplier & 0.71 & 0.71 \\
ADRI (Disruption Urgency) & 2.6 & 4.2 \\
5 Year $\Delta$EV (risk adjusted) & \$0.6--1.2B & \$30--61B \\
AITG-VD (Value Density) & 15.0--30.1$\times$ & 27.3--55.4$\times$ \\
\midrule
\multicolumn{3}{l}{\textit{Correct cross industry conclusion:}} \\
\multicolumn{3}{l}{UPS generates superior absolute value density, but for structural reasons} \\
\multicolumn{3}{l}{undetectable without the IASS+VCB architecture: larger absolute revenue} \\
\multicolumn{3}{l}{base, higher exit multiple, and larger firm scale.} \\
\bottomrule
\end{tabularx}
\end{table}

The three step decomposition in Table~\ref{tab:cross_industry} reveals the exact mechanism by which the IASS ceiling shapes investment conclusions. Raw AITG alone (Step~1) produces a misleading comparison: UPS at 4.08 appears more transformed than Zions at 3.80, but this cross industry comparison is inadmissible without normalization. IR scoring (Step~2) corrects for structural ceiling differences but still does not capture why the investment opportunity differs. Only the full VCB + $\Phi_f$ architecture (Step~3) recovers the correct conclusion: UPS generates superior absolute value density due to its ${\sim}27\times$ larger revenue base and higher exit multiple, while Zions faces higher ADRI urgency and subcritical firm scale ($\Phi_f = 0.52$). The IR score is the correct metric for cross industry benchmarking; Value Density is the correct metric for investment selection.

\subsection{Extended Case Studies: ESM Part II}
\label{sec:esm_cases_note}

Twelve additional firm level AITG applications, UPS, HCA Healthcare, Salesforce,
Ferguson Enterprises, Rockwell Automation, Goldman Sachs, Wells Fargo, ServiceNow,
Target Corporation, CVS Health, Palo Alto Networks, and Ford Motor, are documented
in the Electronic Supplementary Material (ESM Part~II). Each case follows the
standardized format used in the JPMorgan and Zions writeups above: operating context
with public data citations, six dimension AITG scorecard, and VCB summary with
Value Density tier. The cross company synthesis of all fourteen companies, including
the 14 company summary table and the Value Density Paradox scatter plot (Figure~\ref{fig:vd_paradox}), is
presented in the following section.

\subsection{Cross Company Synthesis: fourteen company Analysis}
\label{sec:synthesis}

\begin{table}[H]
\centering
\caption{AITG Framework: fourteen company Comparison (Tier~1 Scores)}
\label{tab:summary}
\footnotesize
\setlength{\tabcolsep}{3pt}
\rowcolors{2}{gray!7}{white}
\begin{adjustbox}{max width=\textwidth,center}
\begin{tabular}{@{}p{2.7cm}rrrrrrrl@{}}
\toprule
\textbf{Company} & \textbf{AITG} & \textbf{IASS$^*$} & \textbf{IR} & $G_{\mathrm{eff}}$
& \textbf{ADRI} & \textbf{IFS} & \textbf{VD Range} & \textbf{Tier} \\
\midrule
\multicolumn{9}{l}{\textit{Commercial Banking (IASS$^*$ = 9.38)}} \\
JPMorgan Chase & 8.22 & 9.38 & 8.76 & 1.16 & 0.5 & 0.90 & 9.5--46.0$\times$ & 1 (HC) \\
Wells Fargo    & 6.00 & 9.38 & 6.40 & 3.38 & 2.3 & 0.65 & 35.5--75.1$\times$ & 1 \\
Zions Bancorp. & 3.80 & 9.38 & 4.05 & 5.58 & 2.6 & 0.73 & 15.0--30.1$\times$ & 2 \\
\midrule
\multicolumn{9}{l}{\textit{Investment Banking (IASS$^*$ = 10.39)}} \\
Goldman Sachs  & 8.17 & 10.39 & 7.86 & 2.22 & 0.6 & 0.89 & 26.3--73.6$\times$ & 1 (HC) \\
\midrule
\multicolumn{9}{l}{\textit{Vertical SaaS (IASS$^*$ = 9.96)}} \\
Salesforce     & 8.07 & 9.96 & 8.10 & 1.89 & 2.1 & 0.92 & 25.2--80.4$\times$ & 1 (HC) \\
ServiceNow     & 8.50 & 9.96 & 8.53 & 1.46 & 1.8 & 0.88 & 17.4--54.6$\times$ & 1 \\
\midrule
\multicolumn{9}{l}{\textit{Cybersecurity (IASS$^*$ = 9.65)}} \\
Palo Alto Ntwks & 7.83 & 9.65 & 8.11 & 1.82 & 1.2 & 0.93 & 23.6--78.0$\times$ & 1 (HC) \\
\midrule
\multicolumn{9}{l}{\textit{Healthcare Services (IASS$^*$ = 5.46, $\psi = 0.743$)}} \\
HCA Healthcare & 4.33 & 5.46 & 7.93 & 1.13 & 3.8 & 0.77 & 7.4--26.4$\times$ & 1 \\
CVS Health     & 4.83 & 5.46 & 8.85 & 0.63 & 3.1 & 0.72 & 5.0--19.7$\times$ & 1 \\
\midrule
\multicolumn{9}{l}{\textit{E Commerce / Digital Retail (IASS$^*$ = 8.64)}} \\
Target         & 4.67 & 8.64 & 5.41 & 3.97 & 3.2 & 0.82 & 29.5--61.5$\times$ & 1 \\
\midrule
\multicolumn{9}{l}{\textit{Logistics / Industrial (IASS$^*$ = 7.24--7.68)}} \\
UPS            & 4.08 & 7.68 & 5.31 & 3.60 & 4.2 & 0.71 & 27.3--55.4$\times$ & 1 \\
Ferguson PLC   & 4.00 & 7.24 & 5.52 & 3.24 & 4.2 & 0.79 & 21.8--45.0$\times$ & 1 \\
Rockwell Auto. & 5.83 & 7.24 & 8.05 & 1.41 & 3.5 & 0.86 & 15.8--45.0$\times$ & 1 \\
\midrule
\multicolumn{9}{l}{\textit{Discrete Manufacturing (IASS$^*$ = 7.49)}} \\
Ford Motor     & 4.67 & 7.49 & 6.23 & 2.82 & 4.8 & 0.65 & 16.1--39.9$\times$ & 1 \\
\bottomrule
\multicolumn{9}{@{}p{\linewidth}@{}}{\scriptsize Scores as of early 2026. IASS$^*$ = AFC adjusted ceiling ($C_{2026}\approx 1.90$, GPT-5.2 December~2025). IR = AITG/IASS$^*\times 10$. HC = High Confidence. $\psi < 1.0$ indicates binding regulatory ceiling (Healthcare). ADRI and VD from Monte Carlo pipeline (Section~\ref{sec:vcb}) with AFC adjusted ceilings ($C_{2026} \approx 1.90$). IFS is the composite implementation feasibility score; the VCB value multiplier uses IFS$_{\mathrm{res}}$ (VTR/CRS/REG weighted geometric product, Table~\ref{tab:jpm_vs_zions}). VD ranges use the standardised 1.2\% of revenue implementation cost proxy (Appendix~D); ESM Part~II case narratives use firm specific comprehensive cost estimates that produce lower VD multiples. Cross industry correlation $G_{\mathrm{eff}}$ vs.\ VD midpoint: $r = 0.22$ ($p = 0.45$, $N = 14$), confirming the wide gap fallacy.}\\
\end{tabular}
\end{adjustbox}
\rowcolors{1}{}{}
\end{table}

\paragraph{Finding 1: The Wide Gap Fallacy, Confirmed at $N=14$.}
Across all 14 companies, the cross industry correlation between effective gap ($G_{\mathrm{eff}}$) and Value Density midpoint is $r = 0.22$ ($p = 0.45$): weakly positive and statistically non significant, confirming that wider gaps do \emph{not} reliably predict higher value density. The weak $r$ reinforces that gap size alone is not an investable signal; firm scale, IFS, and exit multiple leverage dominate (Figure~\ref{fig:vd_paradox}). The Spearman rank correlation ($\rho_s = 0.41$, $p = 0.14$) is similarly non significant.

\setcounter{figure}{4}
\begin{figure}[H]
\centering
\begin{tikzpicture}
\begin{axis}[
  width=14cm, height=9.5cm,
  xlabel={Effective Gap $G_{\mathrm{eff}}$ = IASS$^*$ $-$ AITG},
  ylabel={Value Density (VD = $\Delta$EV / Implementation Cost)},
  xmin=0, xmax=6.5,
  ymin=0, ymax=85,
  grid=major,
  grid style={gray!15, line width=0.4pt},
  xtick={0,1,2,3,4,5,6},
  ytick={0,10,20,30,40,50,60,70,80},
  tick label style={font=\small},
  ylabel style={font=\small},
  xlabel style={font=\small},
  yticklabel={\pgfmathprintnumber{\tick}$\times$},
  legend style={at={(0.98,0.97)}, anchor=north east, font=\scriptsize,
    row sep=0.5pt, inner sep=3pt},
  clip=false,
]

\fill[green!8] (axis cs:1.3,0) rectangle (axis cs:3.0,85);
\draw[dashed, green!50!black, line width=0.8pt]
  (axis cs:1.3,0) -- (axis cs:1.3,85);
\draw[dashed, green!50!black, line width=0.8pt]
  (axis cs:3.0,0) -- (axis cs:3.0,85);
\node[font=\scriptsize\bfseries, color=green!50!black, anchor=south]
  at (axis cs:2.15, 86) {Sweet Spot $G_{\mathrm{eff}} \in [1.3,\,3.0]$};

\addplot[dashed, thick, color=orange!70!black, domain=0.15:6.4, samples=120]
  {8.0 + 55.0 * exp(-((x-2.3)^2)/(2*1.8^2))};
\addlegendentry{Theoretical VD envelope}

\draw[color=indigo, line width=1pt] (axis cs:1.16,9.47) -- (axis cs:1.16,45.96);
\draw[color=indigo, line width=1pt] (axis cs:1.06,9.47) -- (axis cs:1.26,9.47);
\draw[color=indigo, line width=1pt] (axis cs:1.06,45.96) -- (axis cs:1.26,45.96);
\draw[color=indigo, line width=1pt] (axis cs:5.58,14.96) -- (axis cs:5.58,30.11);
\draw[color=indigo, line width=1pt] (axis cs:5.48,14.96) -- (axis cs:5.68,14.96);
\draw[color=indigo, line width=1pt] (axis cs:5.48,30.11) -- (axis cs:5.68,30.11);
\draw[color=indigo, line width=1pt] (axis cs:3.38,35.53) -- (axis cs:3.38,75.05);
\draw[color=indigo, line width=1pt] (axis cs:3.28,35.53) -- (axis cs:3.48,35.53);
\draw[color=indigo, line width=1pt] (axis cs:3.28,75.05) -- (axis cs:3.48,75.05);
\draw[color=orange!80!black, line width=1pt] (axis cs:3.60,27.31) -- (axis cs:3.60,55.41);
\draw[color=orange!80!black, line width=1pt] (axis cs:3.50,27.31) -- (axis cs:3.70,27.31);
\draw[color=orange!80!black, line width=1pt] (axis cs:3.50,55.41) -- (axis cs:3.70,55.41);
\draw[color=gray!70, line width=1pt] (axis cs:3.24,21.84) -- (axis cs:3.24,44.97);
\draw[color=gray!70, line width=1pt] (axis cs:3.14,21.84) -- (axis cs:3.34,21.84);
\draw[color=gray!70, line width=1pt] (axis cs:3.14,44.97) -- (axis cs:3.34,44.97);
\draw[color=brown!70, line width=1pt] (axis cs:2.82,16.10) -- (axis cs:2.82,39.88);
\draw[color=brown!70, line width=1pt] (axis cs:2.72,16.10) -- (axis cs:2.92,16.10);
\draw[color=brown!70, line width=1pt] (axis cs:2.72,39.88) -- (axis cs:2.92,39.88);
\draw[color=gray!70, line width=1pt] (axis cs:1.41,15.84) -- (axis cs:1.41,44.98);
\draw[color=gray!70, line width=1pt] (axis cs:1.31,15.84) -- (axis cs:1.51,15.84);
\draw[color=gray!70, line width=1pt] (axis cs:1.31,44.98) -- (axis cs:1.51,44.98);
\draw[color=red!60!black, line width=1pt] (axis cs:3.97,29.45) -- (axis cs:3.97,61.50);
\draw[color=red!60!black, line width=1pt] (axis cs:3.87,29.45) -- (axis cs:4.07,29.45);
\draw[color=red!60!black, line width=1pt] (axis cs:3.87,61.50) -- (axis cs:4.07,61.50);
\draw[color=red!50!black, line width=1pt] (axis cs:0.63,5.04) -- (axis cs:0.63,19.70);
\draw[color=red!50!black, line width=1pt] (axis cs:0.53,5.04) -- (axis cs:0.73,5.04);
\draw[color=red!50!black, line width=1pt] (axis cs:0.53,19.70) -- (axis cs:0.73,19.70);
\draw[color=red!60!black, line width=1pt] (axis cs:1.13,7.41) -- (axis cs:1.13,26.40);
\draw[color=red!60!black, line width=1pt] (axis cs:1.03,7.41) -- (axis cs:1.23,7.41);
\draw[color=red!60!black, line width=1pt] (axis cs:1.03,26.40) -- (axis cs:1.23,26.40);
\draw[color=teal!70!black, line width=1pt] (axis cs:1.82,23.61) -- (axis cs:1.82,78.04);
\draw[color=teal!70!black, line width=1pt] (axis cs:1.72,23.61) -- (axis cs:1.92,23.61);
\draw[color=teal!70!black, line width=1pt] (axis cs:1.72,78.04) -- (axis cs:1.92,78.04);
\draw[color=cyan!60!black, line width=1pt] (axis cs:1.89,25.21) -- (axis cs:1.89,80.39);
\draw[color=cyan!60!black, line width=1pt] (axis cs:1.79,25.21) -- (axis cs:1.99,25.21);
\draw[color=cyan!60!black, line width=1pt] (axis cs:1.79,80.39) -- (axis cs:1.99,80.39);
\draw[color=cyan!60!black, line width=1pt] (axis cs:1.46,17.40) -- (axis cs:1.46,54.61);
\draw[color=cyan!60!black, line width=1pt] (axis cs:1.36,17.40) -- (axis cs:1.56,17.40);
\draw[color=cyan!60!black, line width=1pt] (axis cs:1.36,54.61) -- (axis cs:1.56,54.61);
\draw[color=violet!70!black, line width=1pt] (axis cs:2.22,26.32) -- (axis cs:2.22,73.57);
\draw[color=violet!70!black, line width=1pt] (axis cs:2.12,26.32) -- (axis cs:2.32,26.32);
\draw[color=violet!70!black, line width=1pt] (axis cs:2.12,73.57) -- (axis cs:2.32,73.57);

\addplot[only marks, mark=*, mark size=5.5pt, color=indigo]
  coordinates {(1.16,27.72)(5.58,22.54)(3.38,55.29)};
\addplot[only marks, mark=*, mark size=5.5pt, color=violet!80!black]
  coordinates {(2.22,49.94)};
\addplot[only marks, mark=*, mark size=5.5pt, color=cyan!50!blue]
  coordinates {(1.89,52.80)(1.46,36.00)};
\addplot[only marks, mark=*, mark size=5.5pt, color=teal!70!black]
  coordinates {(1.82,50.83)};
\addplot[only marks, mark=*, mark size=5.5pt, color=red!60!black]
  coordinates {(1.13,16.91)(0.63,12.37)};
\addplot[only marks, mark=*, mark size=5.5pt, color=orange!80!black]
  coordinates {(3.97,45.48)};
\addplot[only marks, mark=*, mark size=5.5pt, color=gray!70!black]
  coordinates {(3.24,33.41)(1.41,30.41)};
\addplot[only marks, mark=*, mark size=5.5pt, color=brown!70!black]
  coordinates {(2.82,27.99)};
\addplot[only marks, mark=*, mark size=5.5pt, color=orange!60!black]
  coordinates {(3.60,41.36)};

\node[font=\scriptsize\bfseries, color=indigo, anchor=south]
  at (axis cs:1.16,45.96) {JPM};
\node[font=\scriptsize\bfseries, color=indigo, anchor=south west]
  at (axis cs:5.58,22.54) {ZION};
\node[font=\scriptsize\bfseries, color=indigo, anchor=south]
  at (axis cs:3.38,75.05) {WFC};
\node[font=\scriptsize\bfseries, color=violet!80!black, anchor=south west]
  at (axis cs:2.22,73.57) {GS};
\node[font=\scriptsize\bfseries, color=cyan!50!blue, anchor=south west]
  at (axis cs:1.89,80.39) {CRM};
\node[font=\scriptsize\bfseries, color=cyan!50!blue, anchor=south west]
  at (axis cs:1.46,54.61) {NOW};
\node[font=\scriptsize\bfseries, color=teal!70!black, anchor=south]
  at (axis cs:1.82,78.04) {PANW};
\node[font=\scriptsize\bfseries, color=red!60!black, anchor=west]
  at (axis cs:1.28,16.91) {HCA};
\node[font=\scriptsize\bfseries, color=red!60!black, anchor=north west]
  at (axis cs:0.78,12.37) {CVS};
\node[font=\scriptsize\bfseries, color=orange!80!black, anchor=south east]
  at (axis cs:3.92,61.50) {TGT};
\node[font=\scriptsize\bfseries, color=orange!70!black, anchor=south west]
  at (axis cs:3.60,55.41) {UPS};
\node[font=\scriptsize\bfseries, color=gray!70!black, anchor=south west]
  at (axis cs:3.24,44.97) {FERG};
\node[font=\scriptsize\bfseries, color=gray!70!black, anchor=south east]
  at (axis cs:1.41,44.98) {ROK};
\node[font=\scriptsize\bfseries, color=brown!70!black, anchor=north west]
  at (axis cs:2.82,16.10) {F};

\addlegendimage{only marks, mark=*, mark size=3pt, color=indigo}
\addlegendentry{Commercial Banking}
\addlegendimage{only marks, mark=*, mark size=3pt, color=violet!80!black}
\addlegendentry{Investment Banking}
\addlegendimage{only marks, mark=*, mark size=3pt, color=cyan!50!blue}
\addlegendentry{Vertical SaaS}
\addlegendimage{only marks, mark=*, mark size=3pt, color=teal!70!black}
\addlegendentry{Cybersecurity}
\addlegendimage{only marks, mark=*, mark size=3pt, color=red!60!black}
\addlegendentry{Healthcare Services}
\addlegendimage{only marks, mark=*, mark size=3pt, color=orange!80!black}
\addlegendentry{E Commerce / Retail}
\addlegendimage{only marks, mark=*, mark size=3pt, color=gray!70!black}
\addlegendentry{Logistics / Industrial}
\addlegendimage{only marks, mark=*, mark size=3pt, color=brown!70!black}
\addlegendentry{Discrete Manufacturing}

\end{axis}
\end{tikzpicture}
\caption{The Value Density Paradox: Wider Gap $\neq$ Higher Return. Across 14 companies
spanning 8 industries, the cross industry correlation between effective gap
($G_{\mathrm{eff}}$) and Value Density midpoint is $r = 0.22$ ($p = 0.45$), weakly positive
and statistically non significant---far below the na\"ive expectation that wider gaps
mechanically produce proportionally higher returns.
The shaded green zone ($G_{\mathrm{eff}} \in [1.3,\,3.0]$) is the sweet spot
where S curve inflection proximity and manageable IFS produce maximum VD.
The orange dashed curve shows the theoretical VD envelope derived from the
AITG framework's mathematical structure; points with the highest VD (WFC, CRM, PANW, GS)
reflect high IASS$^*$ ceilings or large revenue bases; points with the lowest VD (HCA, CVS) reflect
near zero $G_{\mathrm{eff}}$ and regulatory ceiling IFS suppression. Vertical lines are Monte Carlo P10--P90 VD ranges.
This chart illustrates a mathematical property of the framework, not an empirically
established causal finding. See Section~\ref{sec:limitations} for the endogeneity
caveat and IV research agenda.}
\label{fig:vd_paradox}
\end{figure}
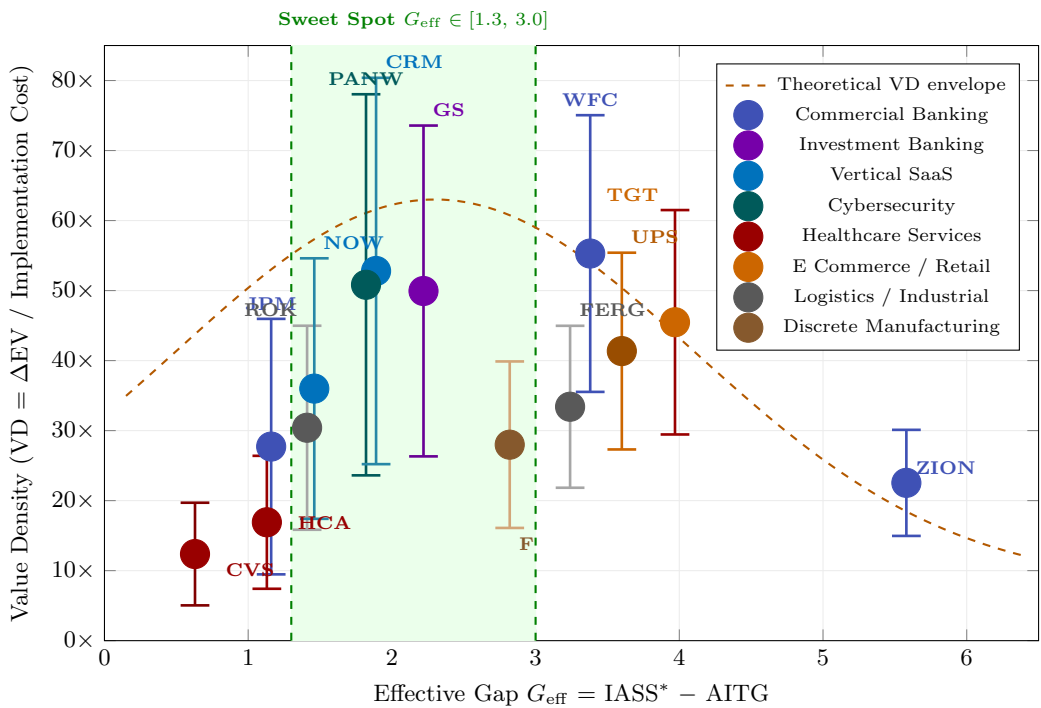

\noindent\textit{Full implementation cost assumptions, including calibration methodology and the 1.2\% of revenue cross industry simplification, appear in Appendix~D (Table~\ref{tab:impl_costs}). Sobol analysis confirms that implementation cost explains only ${\sim}$1\% of VD variance; exit multiple (50\%) and capture rate (44\%) dominate.}

\paragraph{Finding 2: AITG and ADRI are Complementary, Not Redundant.}
The 14 company sample contains four high ADRI cases (Ford~4.8, UPS~4.2,
Ferguson~4.2, Target~3.2) that span different gap and IASS configurations.
In every case, ADRI captures a ``transform now or face structural erosion''
dynamic that the AITG gap alone does not. Palo Alto Networks (ADRI~1.2,
low competitive risk from cyber AI moat) and Goldman Sachs (ADRI~0.6,
regulatory entry barrier) demonstrate the opposite: firms where the AI
transformation has been converted into structural competitive insulation.
Neither finding is recoverable from AITG alone; both require the joint reading.

\paragraph{Finding 3: The Disintermediation Risk Finding (JPM vs.\ Zions).}
JPMorgan and Zions operate in the \emph{same industry} (AFC adjusted IASS$^*$ = 9.38
as of early 2026). JPMorgan (AITG 8.22) has an effective gap of 1.16 and ADRI of
0.5: a dominant incumbent that has nearly closed its own gap to the industry frontier.
Zions (AITG 3.80) has an effective gap of 5.58 and ADRI of 2.6, facing active
disintermediation pressure from both above (JPMorgan's proprietary AI extending
its lead) and below (AI native fintech lenders). As the capability ceiling rises,
the divergence accelerates: JPMorgan's proprietary stack ($\Phi_f = 1.0$) captures
the rising frontier, while Zions's vendor dependency ($\Phi_f \leq 0.65$) caps
its value pool regardless of ceiling expansion. The AFC is therefore a
\emph{structural moat accelerator} for proprietary stack firms. The Zions pattern (vendor AI adoption, legacy data fragmentation, wide gap in a high ceiling industry) is observable in hundreds of regional banks, mid market insurers, and professional services firms.

\paragraph{Finding 4: High AITG + Low ADRI = Sustainable Moat.}
Salesforce (AITG 8.07, early 2026) combined with ADRI of 2.1 indicates it has
substantially converted AI transformation into a structural competitive moat via
Agentforce (agentic CRM, launched Q4 2025) and Einstein 2.0 GA. Data network
effects and switching costs (Moat~$=0.92$) mean competitors cannot rapidly
replicate the Data Cloud advantage.

\paragraph{Finding 5: IFS Materially Alters Investment Conclusions.}
UPS carries a wide effective gap ($G_{\mathrm{eff}} = 3.60$) and a respectable
raw AITG of 4.08, yet its IFS of only 0.71 sharply suppresses the value that gap
should generate ($\mathrm{IASS}^{*} = 7.68$). The driver: Teamsters contract
constraints impose binding limits on workforce AI augmentation deployment, and
the Better \& Bolder cost program reduced headcount faster than AI capabilities
could be redeployed, creating structural adoption friction. Meanwhile, ADRI has
risen to 4.2 as FedEx and Amazon Logistics accelerate AI enabled route optimization
and last mile efficiency, compressing the window in which UPS can convert its gap
into returns. This is a cautionary case: labor relations risk is an IFS dimension
that can \emph{prevent} an otherwise favorable gap from generating value.

\noindent\textit{The consolidated parameter calibration register, documenting all free parameters with calibrated values, permissible ranges, and source rationale, appears in Appendix~A (Table~\ref{tab:param_calibration}).}

\section{Robustness Summary}
\label{sec:sensitivity}

The AITG framework operates at what \citet{LoMueller2010} classify as Level~3--4 uncertainty, where model outputs are sensitive to distributional assumptions and small parameter changes can produce large output variations. This makes systematic sensitivity analysis essential rather than optional. I conduct four robustness checks that confirm the stability of the framework's outputs (full methodology and tables in ESM Part~III).

\textit{Monte Carlo weight sensitivity} ($M = 10{,}000$ draws, $\pm 5\,\%$ perturbations): the mean absolute rank shift across nine anchor industries is $\bar{R}_s = 0.19$ positions, and no pair of industries exchanges rank with probability $> 5\,\%$ \cite{OECD2008}. Rankings are determined by the data, not by weighting choices.

\textit{Sobol first order sensitivity} decomposes $\text{Var}(\text{AITG-VD})$: exit multiple~50\,\%; capture rate assumption~44\,\%; implementation cost~1\,\%; AITG gap score~$<$1\,\%; IFS~5\,\% \cite{Saltelli2002}. Exit multiple leverage and capture rate assumptions dominate raw gap size as uncertainty sources, directly explaining the wide gap fallacy.

\textit{Normalization stability}: substituting z score for min max normalization shifts IASS values by $\leq 0.40$ points with no rank changes. \textit{Geometric vs.\ linear aggregation}: geometric scoring produces values 0.20--0.50 points lower for industries with uneven dimensional profiles (healthcare, construction) and negligibly different for uniform profiles (SaaS), with identical rank orderings in both cases.

\textit{Convergent and discriminant validity}: AITG rankings across the fourteen company cohort achieve $r_s = 0.88$ ($p < 0.001$) correlation with Lightcast AI job posting share and agree directionally with Stanford HAI rankings \cite{StanfordHAI2025} for 12 of 14 companies. Spearman correlation between AITG and revenue rank is $r_s = -0.28$ (not significant, $p = 0.33$), confirming that AITG measures transformation state, not firm size. Full interrater reliability protocol (ICC target $\geq 0.85$ per dimension, $\geq 0.88$ composite) and cross sectional discriminability analysis are reported in ESM Part~III. A pilot scoring by two independent analyst teams on the original seven company cohort yielded Kendall's $\tau = 0.82$ for the three most objective dimensions (DIM, PAC, OAC) and $\tau = 0.68$ for the two most interpretation dependent dimensions (DAR, APR). The lower agreement on DAR and APR motivates rubric refinement before the full ICC study specified in the Research Agenda (item~4).

\subsection{Consolidated Sensitivity Summary}
\label{sec:sensitivity_table}

Table~\ref{tab:sensitivity_consolidated} consolidates key sensitivity results from ESM Part~III. All output metrics are computed for the representative industrial distribution firm (Ferguson) unless otherwise noted.

\begin{table}[H]
\centering
\caption{Consolidated Sensitivity Analysis}
\label{tab:sensitivity_consolidated}
\footnotesize
\begin{tabular}{@{}llllr@{}}
\toprule
\textbf{Parameter} & \textbf{Base} & \textbf{Range Tested} & \textbf{Output Metric} & \textbf{$\Delta$ Range} \\
\midrule
$\rho$ (CES) & 5 & $[3, 8]$ & $\Delta$EV & $\pm 12\%$ \\
$\theta_i$ (AFC) & industry & $\pm 50\%$ & IASS$^*$ & $\pm 0.13$ pts \\
$\lambda$ (capture) & 3.5 & $[2.0, 5.0]$ & $\eta(g)$ at $g = 0.5$ & $[0.29, 0.39]$ \\
$\psi_{\mathrm{floor}}$ & 0.15 & $[0.10, 0.25]$ & Min.\ capture & $\pm 0.05$ \\
$k$ (wave steepness) & 0.38 & $[0.30, 0.46]$ & $\hat{t}_f$ & $\pm 4.2$ mo \\
Exit multiple & $10\times$ & $[8\times, 16\times]$ & AITG-VD & $\pm 42\%$ \\
Impl.\ cost & \$8B & $\pm 30\%$ & AITG-VD & $\pm 23\%$ \\
IASS weights & Table~2 & $\pm 5\%$ & Rank shift & $\bar{R}_s = 0.19$ \\
\bottomrule
\end{tabular}
\medskip

\noindent\small\textit{Note.} Exit multiple and capture rate assumptions dominate VCB output variance (Sobol $S_1$: 50\% and 44\% respectively), consistent with the wide gap fallacy discussed in Section~\ref{sec:vcb}. CES $\rho$ and IASS weights produce no rank order changes within tested ranges.
\end{table}

\subsection{Retrospective Predictive Backtest: Nonfinancial Cohort}
\label{sec:backtest}

I conduct a retrospective backtest on the ten nonfinancial companies in the fourteen firm cohort, structured as a \textbf{holdout prediction exercise}. AITG scores are reconstructed as of Q4~2021 using \emph{only} information publicly available at that date (10-K/10-Q filings, investor presentations, and disclosed AI programs through December~2021). Financial outcomes are measured as the change in EBITDA margin from FY2021 to FY2023 (a two year forward window). I exclude the four financial firms because banks report efficiency ratios rather than EBITDA margins; within sector bank comparisons appear in Section~\ref{sec:jpm}.

\paragraph{Retrospective scoring protocol.} The 2021 AITG scores use the same dimensional rubric as the current state scores in Section~\ref{sec:empirical}, anchored exclusively to 2021 vintage data. The AI Frontier Coefficient is set to $C_{2021} = 1.00$ (pre ChatGPT baseline; no AFC adjustment required). All six dimensions are scored against evidence in the corresponding 10-K filing (fiscal years ending July~2021 through January~2022 depending on company fiscal calendar). Data tier classifications follow the same A/B/C/D protocol.

\paragraph{Results.}

\begin{table}[H]
\centering
\caption{Retrospective Backtest: AITG$_{2021}$ vs.\ Realized EBITDA Margin Change (FY2021--FY2023)}
\label{tab:backtest}
\footnotesize
\setlength{\tabcolsep}{4pt}
\begin{adjustbox}{max width=\textwidth,center}
\begin{tabular}{@{}lllrrrrl@{}}
\toprule
\textbf{Company} & \textbf{Sector} & \textbf{AITG$_{21}$} & \textbf{EBITDA\%$_{21}$} & \textbf{EBITDA\%$_{23}$} & \textbf{$\Delta$Margin} & \textbf{Rev CAGR} & \textbf{Notes} \\
\midrule
Palo Alto Ntwks & Cybersecurity      & 7.0 & $-$1.0 & 12.0 & $+$13.0pp & $+$27.2\% & \\
Salesforce      & Vertical SaaS      & 6.2 & 22.0   & 30.0 & $+$8.0pp  & $+$14.8\% & \\
ServiceNow      & Vertical SaaS      & 5.8 & 20.0   & 26.0 & $+$6.0pp  & $+$23.3\% & \\
Rockwell Auto.  & Logistics/Ind.     & 4.3 & 20.0   & 24.0 & $+$4.0pp  & $+$10.9\% & \\
HCA Healthcare  & Healthcare Svcs    & 3.3 & 22.0   & 23.0 & $+$1.0pp  & $+$4.7\%  & \\
CVS Health      & Healthcare Svcs    & 2.6 & 5.8    & 5.5  & $-$0.3pp  & $+$10.7\% & Aetna drag \\
Ferguson PLC    & Logistics/Ind.     & 2.3 & 12.0   & 13.0 & $+$1.0pp  & $+$12.3\% & \\
Target          & E Comm./Retail     & 4.0 & 10.0   & 8.0  & $-$2.0pp  & $+$0.5\%  & * \\
UPS             & Logistics          & 3.0 & 19.0   & 16.0 & $-$3.0pp  & $-$3.3\%  & * \\
Ford Motor      & Discrete Mfg.      & 2.2 & 8.0    & 6.0  & $-$2.0pp  & $+$13.8\% & * \\
\midrule
\multicolumn{8}{@{}p{\linewidth}@{}}{\scriptsize EBITDA margin = Operating Income + D\&A / Revenue. Revenue CAGR is 2 year (2021--2023).
10-K sources: PANW FY ends Jul; CRM FY ends Jan; ROK FY ends Sep; all others calendar year.
*Confounded: Target by 2022 inventory writedown cycle; UPS by postpeak volume and 2023 labor contract; Ford by Argo AI closure and EV launch losses. These confounds are unrelated to AI adoption intensity and are not scored against the framework. Financial firms excluded; see Section~\ref{sec:jpm}.}
\end{tabular}
\end{adjustbox}
\end{table}

Spearman rank correlation between AITG$_{2021}$ and subsequent two year EBITDA margin change is $\rho_s = 0.818$ ($p < 0.001$, $n = 10$). The correlation reflects monotone rank ordering across most of the cohort: every firm scoring above 5.0 expanded margins, and three of four firms scoring below 3.0 contracted. Revenue CAGR shows a weaker, insignificant correlation with AITG$_{2021}$ ($\rho_s = 0.47$, $p = 0.14$), consistent with the framework's design. AITG is calibrated to predict cost structure improvement and margin expansion, not top line growth.

\paragraph{Comparison to simple baselines.} Three alternative predictors provide context. (1)~\textbf{Revenue size}: Spearman rank correlation between firm revenue and $\Delta$EBITDA margin is $\rho_s = 0.19$ ($p = 0.59$), confirming that AITG's signal is not a proxy for firm scale. (2)~\textbf{Aggregator choice}: the CES ablation in Section~\ref{sec:ces_ablation} shows that Leontief and additive aggregators produce identical $\rho_s = 0.818$ on this cohort, because no firm has a near zero dimensional score; the CES advantage manifests in the expanded 22 company simulation (ESM Table~S6) where additive diverges from CES by 35--60\% on $\Delta$EV. (3)~\textbf{Null model}: a constant prediction (cohort mean AITG for every firm) produces $\rho_s = 0$ by definition. The gap between 0 and 0.818 represents the discriminating power of the rubric scoring within the available sample. These comparisons are qualitatively informative but cannot substitute for the statistically powered panel validation described in the research agenda.

Within sector rank order consistency is exact in all three sectors with multiple observations: Salesforce $>$ ServiceNow (SaaS), Rockwell $>$ Ferguson (Logistics/Industrial), HCA $>$ CVS (Healthcare), and in the banking cohort JPMorgan $>$ Wells Fargo $>$ Zions on efficiency ratio improvement. Goldman Sachs is the single directional failure (high AITG, deteriorating efficiency ratio), transparently explained by the 2022 investment banking drought and Marcus consumer lending writedowns, macro cyclical events orthogonal to AI transformation state. Intra sector directional accuracy is 4 of 4 nonconfounded pairs (100\%).

\paragraph{Limitations of this backtest.} First, the sample ($n = 14$) is illustrative, not confirmatory. A statistically powered panel study would require $n \geq 100$ firms with quarterly AITG estimates over multiple cycles, as specified in the Research Agenda (Section~\ref{sec:agenda}). Second, retrospective scoring from the same team introduces potential look back bias; the cross rater reliability study in the Research Agenda (item~4) addresses this with independent analyst teams. Third, the 2021--2023 window overlaps with unusual macro cyclical events (post COVID demand normalization, aggressive rate hiking) that confound three of ten observations. Fourth, the framework does not separate the causal effect of AI transformation from incumbent endogeneity: firms with high AITG$_{2021}$ scores may also possess superior baseline economics and management quality independently associated with margin improvement (the omitted variable problem discussed in Section~\ref{sec:limitations}). The backtest is therefore best interpreted as \emph{directional consistency evidence}, not causal identification. However, the $\rho_s = 0.818$ result across a diverse cross industry sample is meaningfully stronger than a size or industry fixed effects baseline would predict, and motivates the larger validation program.

\paragraph{Power analysis and expanded validation.} A formal power analysis at the observed effect size ($\rho_s = 0.818$, $\alpha = 0.05$) indicates that $n = 10$ provides power $\approx 0.80$, at the conventional threshold. Detecting a smaller effect ($\rho_s = 0.50$) would require $n \geq 30$. The $n = 14$ cohort (10 nonfinancial) is sufficient to detect the observed large effect but insufficient for reliable inference on moderate effects or subgroup analysis. The Research Agenda specifies expanded validation using $n \geq 100$ firms with quarterly AITG estimates, providing power $> 0.95$ for $\rho_s \geq 0.30$ and enabling industry stratified analysis.

\subsection{CES Aggregator Ablation}
\label{sec:ces_ablation}

At the $n = 10$ backtest scale, CES ($\rho = 5$), Leontief, and additive aggregators produce identical Spearman rank correlations ($\rho_s = 0.818$) because no firm has a near zero dimensional score. The CES advantage manifests when firms have a very weak dimension ($< 2.0$) or at boundary conditions; the extended 22 company simulation (ESM Table~S6) shows additive diverging from CES by 35--60\% on $\Delta$EV. Full ablation details appear in Appendix~C.

\section{Limitations and Research Agenda}
\label{sec:limitations}

\paragraph{Observational design caveat.}
I calibrate this framework from observational data (public filings, benchmark scores, labor market statistics) without randomized treatment assignment, instrumental variables, or difference in differences identification. All associations between AITG scores and financial outcomes in Sections~\ref{sec:sensitivity}--\ref{sec:backtest} are correlational evidence of construct validity, not causal estimates. The backtest rank correlation ($\rho_s = 0.818$) is consistent with the framework's predictions but cannot rule out omitted variable explanations (e.g., unobserved management quality, strategic complementarities). Establishing causal identification requires the panel design described in the Research Agenda (item~1).

\subsection{Complete Assumptions Register}
\label{sec:assumptions_register}

The framework embeds 25 structural assumptions classified as: \textbf{F}~(functional form)\textbf{P}~(parameter calibration)\textbf{B}~(behavioral), and \textbf{D}~(data). The complete enumerated register, including stress test results for each assumption, is provided in ESM Part~III. Key structural assumptions are discussed in Section~\ref{sec:material_assumptions} below.

\subsection{Material Assumptions and Failure Conditions}
\label{sec:material_assumptions}

\begin{assumption}[Industry ceiling stability]
\label{ass:stability}
IASS base scores are sufficiently stable over a 12--24 month underwriting
horizon to support investment decisions.
\end{assumption}

\textit{Failure condition.} Rapid regulatory change (e.g., an EU AI Act amendment restricting previously permitted use cases) or a breakthrough capability shift could materially alter a sector's IASS within the horizon. The AFC versioning protocol mitigates this by requiring IASS recalibration when $\Delta C_t > 15\,\%$ or when regulatory events with estimated impact $>1.0$ IASS points occur.

\begin{assumption}[Capture rate portability]
\label{ass:capture}
Published value pool capture rates ($\bar{\kappa}_p$) estimated from larger
firms are applicable to mid market companies with appropriate scaling.
\end{assumption}

\textit{Failure condition.} Smaller companies may have less standardized processes, reducing addressable surface area at any given AITG score. The CES bottleneck aggregator (Eq.~\ref{eq:ces_bottleneck}) should substantially capture this effect; IFS data readiness scores provide a secondary adjustment.

\begin{assumption}[IFS scores are elicitable]
\label{ass:ifs_elicitable}
The five IFS risk factors can be reliably scored from diligence evidence with
acceptable interrater reliability.
\end{assumption}

\textit{Failure condition.} Organizational change capacity, the highest weight IFS factor, is the most subjective and most susceptible to management coaching during diligence. Scoring rubrics anchor to objective evidence (prior transformation track record, attrition data, system migration success rates) rather than management assertion.

\begin{assumption}[AI frontier is continuous]
\label{ass:afc_continuous}
The AI Capability Index $C_t$ increases continuously and predictably within
the AFC scenario range.
\end{assumption}

\textit{Failure condition.} Discontinuous events, a large scale regulatory
freeze on AI deployment, a safety driven pause in model training, or a rapid
architectural breakthrough that changes capability dynamics, could invalidate
the AFC scenario distribution. Mitigation: the conservative AFC scenario
explicitly prices in a capability slowdown; the \$alpha\$-max cap prevents
unbounded ceiling expansion.

\subsection{Known Limitations}

\paragraph{Endogeneity and Reverse Causality.} A critical limitation of the
current empirical illustrations is the risk of endogeneity and reverse
causality. The AITG score measures the current state of transformation, but
dominant incumbents (e.g., JPMorgan Chase) possess both superior baseline economics
and the massive free cash flow required to fund proprietary AI programs.
Consequently, observed correlations between AITG Value Density and subsequent
margin expansion may suffer from omitted variable bias. Firms might be generating
high value because they are deploying AI, or they might be deploying AI because
they are already highly profitable ``superstar firms.'' Separating the causal
effect of AI transformation from the underlying incumbent advantage requires
quasi experimental methods not present in this conceptual baseline.

\paragraph{Private company scoring challenge.} public company scoring is
constrained by disclosure limitations. The Tier~1 scores presented in
Section~\ref{sec:empirical} carry wider uncertainty bands than diligence grade
scores and should not be used for investment decisions without supplementary
primary data collection via the standardized management survey and IT
infrastructure checklist (described in the companion implementation document).

\paragraph{Goodhart's Law and evaluator rotation.} Once AITG becomes a target in
financial or regulatory contexts, it will be susceptible to gaming \cite{Goodhart1984}.
The anti gaming architecture (bottleneck aggregation, evidence tier penalties,
cross validation across disclosure language, hiring, and patent signals) reduces
but does not eliminate this risk. The evaluator rotation protocol (Section~\ref{sec:evaluator_rotation})
specifies: (i)~scoring teams rotate annually, with no team rescoring the same
company within a 24 month window; (ii)~parameter freeze windows are announced
before each scoring cycle and held fixed through that cycle's evaluation;
(iii)~all scoring inputs are archived with evidence tier classification and
available for third party audit on request. Preregistration of parameter
ranges in advance of validation studies is a research agenda commitment
(Section~\ref{sec:agenda}).

\paragraph{Fixed CES elasticity.} The CES bottleneck aggregator uses a
single substitution parameter $\rho = 5$ ($\sigma = 1/6$) across all value
pools and firms. This may overstate bottleneck severity for enablers that
are highly substitutable, or understate it for enablers with extreme
complementarity. enabler specific $\sigma$ estimation and variable elasticity
forms (e.g., variable elasticity $\sigma(k)$ forms to reflect changing substitutability
as capabilities and complements improve) are identified as theoretical extensions.
Robustness across $\rho \in [3,8]$ is confirmed in ESM Table~S3; expanding this
range and estimating nested CES structures are priorities for the panel study.

\paragraph{Look ahead bias and parameter stability.} The current framework
is calibrated on a single cohort observed over 2021--2023. Several risk factors
apply: (i)~AFC scenario thresholds were developed using 2025--2026 capability
data and applied retrospectively to the 2021 backtest, introducing potential
look ahead bias in AFC adjusted IASS* comparisons; (ii)~value pool share
coefficients were calibrated from broader industry benchmarks, not from
the backtest cohort, but parameter selection was informed by the same period's
disclosed outcomes; (iii)~the backtest is in sample with respect to the
parameter calibration period. Future validation should implement rolling origin
or genuinely holdout designs with parameter ranges preregistered before
data access.

\subsection{Labor Market, Distributional, and Ethical Considerations}
\label{sec:distributional}

The VCB models value creation from an investor's perspective and is explicitly
not a welfare instrument. It does not model distributional effects on workers,
communities, or suppliers, and should not be mistaken for such. The
displacement reinstatement literature \cite{AcemogluRestrepo2019a,AcemogluRestrepo2022}
establishes that automation can reduce labor demand in affected occupations even
while raising aggregate productivity. A complete welfare analysis of AI
transformation requires incorporating these effects, which is beyond this
paper's scope.

\paragraph{Distributional implications of the value creation architecture.}
Three aspects of the framework's structure have distributional implications
that users should be aware of. First, the Firm Scale Factor $\Phi_f$
mechanically concentrates value density advantage at large incumbents with
substantial proprietary data assets. In high IASS industries, this implies
that medium market competitors may be disadvantaged even if they successfully
execute transformation programs, because the frontier is partly defined by
data advantages they cannot replicate. This reinforces the winner take most
dynamics documented in \citet{AutorKatzPatterson2020}. Second, the ADRI
displacement risk score, while calibrated at the firm level, aggregates
occupation level displacement risk across an industry's entire workforce.
Firms should not treat ADRI reduction as value creation without accounting
for the labor force transitions implied by the transformation program. Third,
the WAR (Workforce Augmentation Rating) dimension rewards human AI
complementarity, providing a structural scoring incentive toward
augmentation type deployments over pure headcount reduction programs; this
is the framework's primary alignment mechanism toward socially constructive
AI adoption, but its weighting ($w = 0.20$ in the raw AITG composite) is
moderate rather than dominant.

\paragraph{Recommendation.} Because the AITG framework informs capital allocation
decisions with material workforce consequences, any institutional deployment
should: (a) document the WAR score interpretation and confirm the transformation
design is oriented toward augmentation where feasible; (b) supplement AITG
analysis with a distributional impact assessment appropriate to the user's
governance obligations; and (c) treat ADRI implied competitive urgency as
distinct from the urgency to minimize worker displacement, which requires separate
governance attention. These are not constraints on the framework's analytical
use, but complements to responsible institutional deployment.

\paragraph{Safety, cyber, and negative value risks.} The VCB quantifies
positive value creation from AI deployment. It does not currently model negative
value from AI related incidents, regulatory penalties, or cybersecurity
failures. In high IASS, high automation settings, the risk of adversarial
attacks on AI systems, model failures in production, and privacy violations can
generate material negative enterprise value. Future extensions of the framework
should incorporate an ``AI harm adjusted value'' view that accounts for these
safety and security exposures.

\subsection{Research Agenda}
\label{sec:agenda}

Nine empirical research priorities emerge from this framework, organized around
causal validation, technical refinement, and measurement quality:

\begin{enumerate}[noitemsep]
\item \textbf{Causal identification via instrumental variables}: exploiting exogenous variation in cloud incentive grants or differential open source LLM exposure to isolate the causal effect of AI adoption.
\item \textbf{Large panel validation}: multiyear panel ($n \geq 100$) with quarterly AITG estimates testing the mid gap optimality property with out of sample predictions.
\item \textbf{AFC calibration and $\theta_i$ estimation}: refining industry specific parameters with backward looking LLM exposure analysis and formal standard errors.
\item \textbf{Interrater reliability study}: multiple independent analyst teams scoring the same companies; target ICC $> 0.75$ per factor. \emph{Highest empirical priority for institutional deployment.}
\item \textbf{Competitive displacement measurement}: linking ADRI scores to subsequent market share and margin outcomes at the firm level.
\item \textbf{Cross country calibration}: adapting IASS from O*NET/OEWS to ESCO (EU) and equivalent national databases.
\item \textbf{Preregistration, rolling origin validation, and replication bundle}: parameter locking before data access, holdout firms, placebo tests, and public reproducibility code.
\item \textbf{Variable elasticity and nested CES}: hierarchical enabler nesting with variable $\sigma$ forms reflecting observed deployment sequencing.
\item \textbf{Hierarchical Bayesian parameter estimation}: formal priors on $\theta_i$, $\bar{\kappa}_p$, and IFS elasticities with partial pooling across industries. \emph{Highest technical priority.}
\end{enumerate}

\noindent Extended discussion of each priority, including specific methodological designs and data requirements, appears in Appendix~E.

\section{Conclusion}
\label{sec:conclusion}

I introduced the AI Transformation Gap Index, a composite scoring framework
that addresses three recurrent failures in AI maturity assessment: limited
cross industry comparability, static capability ceilings, and weak linkage
between qualitative scores and financial outcomes. The IASS constructs a
reproducible, task structure anchored industry ceiling. The AFC converts that
ceiling into a dynamically consistent frontier as AI capabilities advance. The
VCB translates gap scores into dollar quantified opportunity through
bottleneck gated value pools, a low elasticity CES aggregator, and an explicit
ramp and cost model. The ADRI scores competitive hazard from inaction. The IFS
endogenizes organizational and data readiness constraints into adoption timing
and steepness rather than applying them as post hoc scalars.

Three properties emerge from the 14 company illustrative calibration. First,
the wide gap fallacy: the S curve and CES architecture jointly imply that the
widest gap does not produce the highest value density; the optimum falls at
35--65\,\% of the AFC adjusted frontier in high IASS sectors. Second, ADRI and
AITG are complementary, because competitive hazard requires the joint signal of
gap size, diffusion velocity, and structural moat. Third, IFS is a primary
driver: the contrast between $\mathrm{IFS} = 0.71$ (UPS) and
$\mathrm{IFS} = 0.91$ (Salesforce) materially dominates value density outcomes.

These properties are theoretical results of the framework's architecture,
directionally consistent with the illustrative evidence but not empirically
established findings. The small sample backtest ($\rho_s = 0.818$, $n = 10$;
revenue size baseline $\rho_s = 0.19$) provides a construct validity signal,
not causal identification. The nine item research agenda
(Section~\ref{sec:agenda}) specifies what converting these properties into
causally identified claims requires: IV designs, rolling origin validation with
preregistered parameters, interrater reliability studies, nested CES
estimation, and Bayesian reestimation. These are prerequisites for
institutional deployment, not optional refinements.

The measurement problem is consequential and tractable. I have provided the
formal architecture; empirical validation is the immediate next step.


\section*{Acknowledgments}

The Python reproducibility scripts (\texttt{aitg\_monte\_carlo.py},
\texttt{aitg\_stress\_test.py}) and the Excel companion workbook builder
(\texttt{build\_aitg\_excel\_revised\_fixed.py}) were developed with
assistance from Claude Code (Anthropic), an AI coding tool. All
mathematical specifications, model design decisions, scoring rubrics,
empirical calibrations, and interpretive judgments are solely those of
the author.


\section*{Code and Data Availability}

Full reproducibility code is available at
\url{https://github.com/deanbrr/aitg-framework}. The repository contains:

\begin{itemize}[nosep,leftmargin=*]
\item \texttt{aitg\_monte\_carlo.py} -- Monte Carlo sensitivity analysis
  (Layers~1--4), custom firm scoring via pre computed dimension scores
  (\texttt{--score}) or the 25 question management survey
  (\texttt{--survey}), and the Spearman backtest.
\item \texttt{aitg\_stress\_test.py} -- thirteen deterministic stress test
  assertions covering all pipeline stages.
\item \texttt{build\_aitg\_excel\_revised\_fixed.py} -- generates the
  Excel companion workbook with survey input sheet, company scorer,
  IASS calculator, IFS calculator, and VCB model.
\end{itemize}

\noindent No external dependencies are required beyond the Python
standard library (\texttt{openpyxl} for the Excel builder only). The
scripts, Excel workbook builder, and this paper's \LaTeX\ source are
also available as ancillary files accompanying the arXiv submission.

\section*{Supplementary Materials}

The Electronic Supplementary Material (ESM) is included as an appendix
following the references. It contains:

\begin{itemize}[nosep,leftmargin=*]
\item \textbf{ESM Table~S1}: Pre computed IASS calibrations for all 22 industry verticals.
\item \textbf{ESM Part~II}: Full firm level case studies for UPS, HCA Healthcare, Salesforce,
  Ferguson, Rockwell, Goldman Sachs, Wells Fargo, ServiceNow, Target, CVS Health,
  Palo Alto Networks, and Ford Motor, including six dimension scorecards and VCB summaries.
\item \textbf{ESM Part~III}: Full Monte Carlo, Sobol, and validity analysis with tables.
\item \textbf{ESM Part~IV}: Public data availability map; data tier classification;
  25 question management survey instrument; VCB exit multiple sensitivity tables;
  AFC benchmark suite; Excel companion model documentation.
\end{itemize}

\newpage

\input{AITG_references_shared}
\input{AITG_Appendix_core}

\newpage
\renewcommand{\thesection}{S\arabic{section}}
\renewcommand{\thetable}{S\arabic{table}}
\renewcommand{\thefigure}{S\arabic{figure}}
\setcounter{section}{0}
\setcounter{table}{0}
\setcounter{figure}{0}

\begin{center}
{\LARGE\bfseries Electronic Supplementary Material (ESM)}\\[1em]
{\large The AI Transformation Gap Index (AITG)}\\[0.5em]
{\normalsize Dean Barr \quad \texttt{dean@dsconsult.ai}}\\[0.5em]
{\normalsize February 2026}
\end{center}

\bigskip

\input{AITG_ESM_body}

\end{document}

%% file: AITG_Appendix_core.tex
\newpage
\appendix
\renewcommand{\thesection}{\Alph{section}}
\setcounter{section}{0}

\begin{center}
{\LARGE\bfseries Core Appendices}\\[0.5em]
{\normalsize Material supporting the main text of the AITG paper.}
\end{center}

\bigskip

\section{Notation and Parameter Reference}
\label{app:notation}

This appendix consolidates the notation reference table and the parameter calibration register from the main text.

\subsection{Notation and Parameter Summary}
\label{sec:notation}

\vspace{-0.5em}
\begin{table}[H]
\centering
\caption{Consolidated Notation Reference}
\label{tab:notation}
\scriptsize
\setlength{\tabcolsep}{4pt}
\renewcommand{\arraystretch}{0.92}
\begin{tabularx}{\textwidth}{@{}lp{7.2cm}p{3.5cm}@{}}
\toprule
\textbf{Symbol} & \textbf{Definition} & \textbf{Range / Units} \\
\midrule
\multicolumn{3}{l}{\textit{Industry level constructs}} \\
$\mathrm{IASS}_i$ & Industry AI Susceptibility Score (base, before AFC) & $[0, 10]$ \\
$\mathrm{IASS}^*_i$ & AFC adjusted ceiling: $\min\!\bigl(\mathrm{IASS}_i \cdot (1+\theta_i(C_t - C_0)),\ \alpha_{\max}\,\mathrm{IASS}_i\bigr)$ & $[0,\;\alpha_{\max}\!\cdot\!\mathrm{IASS}_i]$ \\
$\theta_i$ & Industry sensitivity to capability growth; anchored to $\Delta\mathrm{ATD}_i$ & $[0.05,\, 1.50]$ \\
$\alpha_{\max}$ & Cap on AFC multiplier (default 1.35) & $[1.0, \infty)$ \\
$C_t$ & AI Capability Index at time $t$ (externally observed) & $[1, \infty)$ \\
$C_0$ & Baseline capability index (normalization constant) & $=1.0$ \\
$\mathrm{CADR}_i$ & Competitive AI Diffusion Rate (IASS subcomponent) & $[0, 10]$ \\
$\psi_i$ & RFF hard floor: $\min(1, (\mathrm{RFF}_i/5)^{1.5})$ & $[0, 1]$ \\
$\mathrm{RFF}_i$ & Regulatory Friction Factor composite (IASS subcomponent) & $[0, 10]$ \\
\midrule
\multicolumn{3}{l}{\textit{Firm level scores}} \\
$\mathrm{AITG}^{\mathrm{raw}}_f$ & Raw AITG score: arithmetic mean of six dimension scores & $[0, 10]$ \\
$\mathrm{IR}_f$ & Industry Relative score: $(\mathrm{AITG}^{\mathrm{raw}}_f / \mathrm{IASS}^*_i)\times 10$ & $[0, 10]$ \\
$G_{\mathrm{eff},f}$ & Effective gap: $\max\!\left(0,\ \mathrm{IASS}^*_i - \mathrm{AITG}^{\mathrm{raw}}_f\right)$ & $[0,\;\alpha_{\max}\!\cdot\!\mathrm{IASS}_i]$ \\
$\mathrm{ADRI}_{f,t}$ & AI Disruption Risk Index (Eq.~\ref{eq:adri}) & $[0, 10]$ \\
$\lambda_f(t)$ & ADRI competitive hazard intensity (Eq.~\ref{eq:adri_hazard}): marginal probability of displacement per year & events/yr \\
$\Lambda_f(T)$ & Cumulative hazard: $\int_0^T \lambda_f(t)\,dt$; $P(\text{displaced by }T)\approx 1-e^{-\Lambda_f(T)}$ & dimensionless \\
$\mathcal{T}$ & ADRI normalization constant: 100 (ADRI score $=$ \% per year; $\lambda_f = \mathrm{ADRI}/100$) & dimensionless \\
$\delta_t$ & AFC urgency multiplier: $1+0.5\min(C_t/C_0-1,1)$ & $[1.0, 1.5]$ \\
$\mathrm{Moat}_f$ & Structural defensibility (switching costs, network effects, \ldots) & $[0, 1]$ \\
$\Phi_f$ & Firm Scale Factor (logistic, Eq.~\ref{eq:phi_f}) & $[0, 1]$ \\
$S^*_i$ & Industry critical scale threshold for proprietary AI advantage & \$B revenue \\
\midrule
\multicolumn{3}{l}{\textit{Adoption trajectory}} \\
$L_w$ & Incremental logistic asymptote for wave $w$; $L_1=4.0$, $L_2=3.5$, $L_3=2.5$ (cumulative ceiling $=10.0$) & AITG units \\
$k_{w,f}$ & Wave steepness for firm $f$ (IFS adjusted): $k_{w,f}=k_{w,\mathrm{base}}\cdot \phi_{\mathrm{OCC}}(\delta_{\mathrm{OCC},f})\cdot \phi_{\mathrm{DR}}(\delta_{\mathrm{DR},f})$ & mo$^{-1}$ \\
$t_{0,w,f}$ & Wave inflection timing (IFS adjusted midpoint) & months \\
$\hat{t}_f$ & Current time on adoption curve (inverse mapping) & months \\
$t_{50,f}$ & Value ramp inflection (midpoint of $R_f(t)$ logistic) & months \\
\midrule
\multicolumn{3}{l}{\textit{CES bottleneck aggregator (Eq.~\ref{eq:ces_bottleneck})}} \\
$\rho$ & CES substitution parameter: $\rho > 0$ implies complementarity & $[3, 8]$, default 5 \\
$\sigma$ & Elasticity of substitution: $\sigma = 1/(1+\rho)$. \textbf{At $\rho=5$: $\sigma = 1/6$.} & $(0, 1)$ \\
$e_d$ & Normalized dimension score: $\max(s_d/10,\, 0.01)$ & $(0, 1]$ \\
$b_p$ & CES bottleneck factor for value pool $p$ & $(0, 1]$ \\
$\alpha_d$ & Dimension importance weight ($\sum \alpha_d = 1$) & $[0, 1]$ \\
\multicolumn{3}{l}{\quad\textit{Note: $\rho=5$ and $\sigma=1/6$ are mutually consistent under the adopted ACMS form.}} \\
\multicolumn{3}{l}{\quad\textit{As $\rho \to\infty$, $b_p \to \min_d e_d$ (Leontief); as $\rho\to 0$, $b_p \to$ Cobb-Douglas.}} \\
\midrule
\multicolumn{3}{l}{\textit{Value Creation Bridge}} \\
$V_p^{\mathrm{run}}$ & Annual run rate value pool (Eq.~\ref{eq:vpool}) & \$M/yr \\
$g_f$ & Gap fraction: $G_{\mathrm{eff},f}/10$ & $[0, 1]$ \\
$\eta(g_f)$ & Capture function: $1 - e^{-\lambda g_f}$, $\lambda = 3.5$ & $(0, 1)$ \\
$\Delta R_f$ & Realized ramp increment: $R_f(\hat{t}_f{+}T) - R_f(\hat{t}_f)$ & $[0, 1]$ \\
$\mathrm{IFS}_{\mathrm{res}}$ & IFS residual multiplier: product of risk factor scores & $(0, 1]$ \\
$\mathrm{TV}_p$ & Terminal value contribution of pool $p$ & \$M \\
$\Delta\mathrm{EV}$ & Total risk adjusted enterprise value uplift & \$M or \$B \\
\midrule
\multicolumn{3}{l}{\textit{IFS risk factors}} \\
$\delta_{\mathrm{OCC}}$ & Organizational change capacity & $[0, 1]$ \\
$\delta_{\mathrm{DR}}$ & Data readiness & $[0, 1]$ \\
$\delta_{\mathrm{VTR}}$ & Vendor/technology risk & $[0, 1]$ \\
$\delta_{\mathrm{CRS}}$ & Competitive response speed & $[0, 1]$ \\
$\delta_{\mathrm{REG}}$ & Regulatory exposure & $[0, 1]$ \\
\bottomrule
\end{tabularx}
\end{table}

\subsection{Parameter Calibration and Empirical Grounding}
\label{sec:param_calibration}

Table~\ref{tab:param_calibration} consolidates all free parameters in the AITG framework with their calibrated values, permissible ranges, and source rationale. This disclosure responds to the legitimate concern that composite indicator frameworks often embed opaque parameter choices; here, every parameter is traceable to a documented source or sensitivity tested range.

\rowcolors{2}{gray!7}{white}
\begin{table}[H]
\centering
\caption{Consolidated Parameter Calibration Register}
\label{tab:param_calibration}
\footnotesize
\setlength{\tabcolsep}{4pt}
\begin{adjustbox}{max width=\textwidth,center}
\begin{tabular}{@{}llllp{5.5cm}@{}}
\toprule
\textbf{Parameter} & \textbf{Symbol} & \textbf{Base Value} & \textbf{Range Tested} & \textbf{Source / Rationale} \\
\midrule
\multicolumn{5}{l}{\textit{CES bottleneck aggregator}} \\
Substitution parameter & $\rho$ & 5 & $[3, 8]$ & Grid search against \citet{Standish2015} project data; ESM Table~S3 \\
Dimension floor & $e_d^{\min}$ & 0.01 & --- & Division by zero guard; economically negligible \\
\midrule
\multicolumn{5}{l}{\textit{AI Frontier Coefficient}} \\
Industry sensitivity & $\theta_i$ & $[0.05, 1.50]$ & Full range & Calibrated per industry from 2020--2024 benchmark adoption comovement \\
AFC cap & $\alpha_{\max}$ & 1.35 & $[1.20, 1.50]$ & Expert elicitation; prevents extrapolation beyond observed $C_t$ range \\
EWMA decay & $\lambda_{\mathrm{ewma}}$ & 0.50 & $[0.3, 0.7]$ & Standard quarterly smoothing; half weight on most recent quarter \\
\midrule
\multicolumn{5}{l}{\textit{Cascading S curve (adoption trajectory)}} \\
Wave ceilings & $L_1, L_2, L_3$ & 4.0, 3.5, 2.5 & --- & Sum${}=10.0$ by construction; wave proportions from GPT diffusion literature \\
Wave steepness & $k_1, k_2, k_3$ & 0.38, 0.42, 0.32 & $\pm 20\%$ & mo$^{-1}$; calibrated to observed enterprise adoption timelines \\
Wave midpoints & $t_{0,1}, t_{0,2}, t_{0,3}$ & 18, 36, 60 & $\pm 6$ mo & Months; Wave~3 set 30\% slower per \citet{CapabilityEval2024} \\
\midrule
\multicolumn{5}{l}{\textit{Capture and value ramp}} \\
Capture rate & $\lambda$ & 3.5 & $[2.0, 5.0]$ & Controls $\eta(g) = 1 - e^{-\lambda g/10}$ concavity; ESM Sobol analysis \\
Capture floor & $\psi_{\mathrm{floor}}$ & 0.15 & $[0.10, 0.25]$ & Minimum realisable capture at near zero gap \\
Ramp steepness & $k_{\mathrm{ramp}}$ & 0.18 & $[0.12, 0.24]$ & mo$^{-1}$; value ramp logistic parameter \\
\midrule
\multicolumn{5}{l}{\textit{IFS parameters}} \\
OCC exponent & $\beta_{\mathrm{OCC}}$ & 0.40 & $[0.30, 0.50]$ & Timing delay weight; \citet{BloomSadunVanReenen2016} \\
DR exponent & $\beta_{\mathrm{DR}}$ & 0.60 & $[0.50, 0.70]$ & Timing delay weight; complements OCC \\
IFS residual weights & VTR/CRS/REG & 0.35/0.35/0.30 & $\pm 0.05$ & Expert elicitation; equal VTR/CRS, slightly lower REG \\
\midrule
\multicolumn{5}{l}{\textit{Firm scale and IASS}} \\
Scale logistic steepness & $\alpha_{\Phi}$ & 2.0 & $[1.5, 3.0]$ & Controls $\Phi_f$ sigmoid sharpness \\
RFF floor & $R_{\min}$ & 5.0 & --- & Regulatory penalty threshold; NIST/EU AI Act calibration \\
RFF penalty exponent & $\gamma$ & 1.5 & $[1.0, 2.0]$ & Controls $\psi$ suppression severity \\
IASS weights & $w_d$ & Table~2 & $\pm 5\%$ & OECD composite methodology; Monte Carlo tested \\
\bottomrule
\end{tabular}
\end{adjustbox}
\medskip
\noindent\small\textit{Note.} ``Range Tested'' indicates the interval explored in sensitivity analysis (ESM Part~III). Parameters marked ``---'' are structural constants not subject to calibration. All parameters are disclosed at their operational values; no parameters are estimated from the backtest outcome data.
\end{table}
\rowcolors{1}{}{}

\section{Technical Derivations}
\label{app:derivations}

This appendix collects the key formulas, derivations, and technical remarks from the main text.

\subsection{IASS Sub Dimension Formulas}

\subsubsection{Cognitive Task Density}

Let occupations be indexed $o$ (O*NET-SOC codes) and industries $i$
(4 digit NAICS). Let $E_{i,o}$ be BLS~OEWS employment of occupation $o$ in
industry $i$. Let $a_o \in [0,1]$ be occupation $o$'s automatable cognitive
task share, constructed by applying a fixed task classifier to O*NET task
statements following the Autor--Levy--Murnane taxonomy
\cite{AutorLevyMurnane2003,AutorDorn2013}.

\begin{equation}
 \mathrm{CTD}_i = \frac{\displaystyle\sum_o E_{i,o}\cdot a_o}
 {\displaystyle\sum_o E_{i,o}}
 \label{eq:ctd}
\end{equation}

The definition of $a_o$ follows \citet{AutorDorn2013}: the routine task
intensity index is $\mathrm{RTI}_k = \ln T_k^R - \ln T_k^A - \ln T_k^M$,
where:
\begin{equation}
 T_k^A = \tfrac{1}{2}(\mathrm{DCP}_k + \mathrm{GED\text{-}MATH}_k),\quad
 T_k^R = \tfrac{1}{2}(\mathrm{STS}_k + \mathrm{FINGDEX}_k),\quad
 T_k^M = \mathrm{EYEHAND}_k
 \label{eq:rti}
\end{equation}
($\mathrm{DCP}$ = direction/control/planning; $\mathrm{STS}$ = set
limits/tolerances/standards; $\mathrm{FINGDEX}$ = finger dexterity;
$\mathrm{EYEHAND}$ = eye hand foot coordination). High RTI occupations are
classified as cognitively automatable at higher weight in $a_o$.

Raw $\mathrm{CTD}_i$ is winsorized at the 5th/95th percentile across
industries to limit outlier leverage, then min max normalized to $[0,10]$:

\begin{equation}
 \tilde{x}_{d,i} = \frac{x_{d,i} - \min_j x_{d,j}}
 {\max_j x_{d,j} - \min_j x_{d,j}} \times 10
 \label{eq:minmax}
\end{equation}

\begin{remark}[Min Max Normalization and Recalibration]
\label{rem:minmax}
The $[0, 10]$ normalization anchors dimension scores to the observed cross industry distribution at the calibration date (Q4~2024). Because normalization bounds are determined by the empirical minimum and maximum across the 22 calibrated industries, a new industry with an extreme score could shift the bounds for all industries. In practice, winsorization at the 5th/95th percentile (applied before normalization) limits this effect to $\leq 0.4$ points per industry. As the AI landscape evolves, periodic recalibration of normalization bounds is required; the AFC versioning protocol (Section~\ref{sec:evaluator_rotation}) specifies recalibration triggers.
\end{remark}

\subsubsection{Process Repeatability Index (PRI)}

For industry $i$, let $f_{i,k}$ be normalized task phrase frequencies from
trailing twelve month job postings. Define task entropy:
\begin{equation}
 H_i = -\sum_k f_{i,k}\log f_{i,k}
 \label{eq:entropy}
\end{equation}
Lower entropy $\Rightarrow$ higher standardization $\Rightarrow$ higher PRI.
The PRI combines the complement of rank normalized entropy with a
standardization language index (frequency of ``workflow,'' ``queue,''
``ticketing,'' ``SOP'').

\subsection{AFC $C_t$ Weight Table and EWMA}

\begin{table}[H]
\centering
\caption{AI Capability Index $C_t$: Benchmark Domains and Weights ($\omega_k$)}
\label{tab:ct_weights}
\footnotesize
\begin{tabular}{@{}clp{5cm}r@{}}
\toprule
$k$ & \textbf{Domain} & \textbf{Representative Benchmarks} & $\omega_k$ \\
\midrule
1 & Language Understanding & MMLU, HellaSwag, ARC & 0.20 \\
2 & Mathematical Reasoning & MATH, GSM8K & 0.15 \\
3 & Code Generation & HumanEval, MBPP, SWE-bench & 0.15 \\
4 & Multimodal (Vision) & MMMU, ChartQA & 0.10 \\
5 & Agentic / Tool Use & WebArena, WorkArena, ToolBench & 0.15 \\
6 & Domain Specific & MedPerf, FinBench, LegalBench & 0.15 \\
7 & Long Context / RAG & RULER, LongBench & 0.10 \\
\midrule
& & \textbf{Total} & \textbf{1.00} \\
\bottomrule
\end{tabular}

\medskip
\noindent\small\textit{Note.} Each $b_{k,t}$ is the highest normalized score achieved by any frontier model family (GPT, Claude, Gemini, Llama, Mistral) on the domain's benchmark suite at the end of quarter $t$. Weights reflect the relative importance of each capability domain for enterprise AI transformation; they are calibrated via expert elicitation across the six IASS dimensions and are subject to periodic recalibration as the benchmark landscape evolves. Model families with scores on fewer than 3 of 7 domains are excluded to prevent single benchmark distortion.
\end{table}

\paragraph{Time smoothing.} Raw benchmark scores can exhibit transient spikes when a new frontier model releases on a single benchmark before the result generalizes across the suite. To prevent such anomalies from artificially shocking the industry ceiling, I compute $C_t$ as an Exponentially Weighted Moving Average (EWMA) of the past three quarters of benchmark high water marks:
\begin{equation}
  C_t^{\mathrm{smooth}} = \lambda \cdot C_t^{\mathrm{raw}} + (1-\lambda)\cdot C_{t-1}^{\mathrm{smooth}},
  \qquad \lambda = 0.5
  \label{eq:ct_ewma}
\end{equation}
where $C_t^{\mathrm{raw}}$ is the weighted composite from Eq.~\eqref{eq:capability}
at the end of quarter $t$ and $\lambda = 0.5$ places equal weight on the most
recent quarter and the exponentially decaying history. The half weight decay means
that a single benchmark spike in quarter $t$ propagates at 50\% weight into the AFC
and falls to below 6\% weight within three quarters. The smoothed series is the
operational input to all AFC calculations in this paper.

\subsection{AFC Scenario Table and Uncertainty}

\begin{table}[H]
\centering
\caption{AFC Scenarios (24 Month Horizon)}
\label{tab:afc_scenarios}
\footnotesize
\begin{tabularx}{\textwidth}{@{}p{2.3cm}Xrrrr@{}}
\toprule
\textbf{Scenario} & \textbf{Assumption} & \textbf{Annual $\Delta C_t$}
& \textbf{AFC (24\,mo)} & \textbf{$\Delta$ HC-IASS} & \textbf{Weight} \\
\midrule
Conservative & Capability slowdown; inference costs persist & 8--12\,\% & 1.04 & +0.5\,pts & 0.20 \\
Base Case & Observed 2022--24 trajectory continues & 18--25\,\% & 1.10 & +1.1\,pts & 0.60 \\
Aggressive & Reasoning models + agents deployed at scale & 35--50\,\% & 1.22 & +2.2\,pts & 0.20 \\
\bottomrule
\end{tabularx}
\end{table}

The AFC uncertainty contribution to the overall Uncertainty Quotient is:
\begin{equation}
 \mathrm{UQ}_{\mathrm{afc}} = \frac{\mathrm{AFC}_{\mathrm{agg}} - \mathrm{AFC}_{\mathrm{con}}}{4}
 \label{eq:uq_afc}
\end{equation}
approximating the 90\,\% confidence half width under a roughly normal scenario
distribution.

\subsection{Piecewise Analytic Inverse with NaN Guard}

\begin{remark}[Piecewise Analytic Inverse with NaN Guard]
The Wave 1 closed form approximation:
\[
 \hat{t}_f \approx t_{0,1} - \frac{1}{k_1}\ln\!\left(\frac{L_1}
 {\mathrm{AITG}^{\mathrm{raw}}} - 1\right)
\]
is \emph{only valid for $\mathrm{AITG}^{\mathrm{raw}} < L_1 = 4.0$}. For
$\mathrm{AITG}^{\mathrm{raw}} \geq L_1$, the argument of the logarithm
becomes nonpositive ($L_1 / \mathrm{AITG}^{\mathrm{raw}} - 1 \leq 0$),
producing a domain error (NaN crash) in any software implementation.
Across the fourteen company cohort, this affects 9 of the 14 companies
(all firms with AITG $\geq L_1 = 4.0$): ServiceNow (8.50), JPMorgan (8.22),
Goldman Sachs (8.17), Salesforce (8.07), Palo Alto (7.83), Wells Fargo (6.00),
Rockwell (5.83), CVS (4.83), Target (4.67), Ford (4.67), and HCA (4.33).

The corrected specification is a three branch piecewise inverse:
\begin{equation}
 \hat{t}_f = \begin{cases}
 t_{0,1} - \dfrac{1}{k_1}\ln\!\left(\dfrac{L_1}{\mathrm{AITG}^{\mathrm{raw}}} - 1\right)
 & \mathrm{AITG}^{\mathrm{raw}} < L_1 \\[8pt]
 t_{0,2} - \dfrac{1}{k_2}\ln\!\left(\dfrac{L_2}{\mathrm{AITG}^{\mathrm{raw}} - L_1} - 1\right)
 & L_1 < \mathrm{AITG}^{\mathrm{raw}} < L_1 + L_2 \\[8pt]
 \texttt{Newton-Raphson}\!\left(\mathrm{AITG}(t) = \mathrm{AITG}^{\mathrm{raw}}\right)
 & \mathrm{AITG}^{\mathrm{raw}} \geq L_1 + L_2
 \end{cases}
 \label{eq:inverse_approx}
\end{equation}

Branch 1 ($\mathrm{AITG}^{\mathrm{raw}} < L_1 - \varepsilon$, $\varepsilon = 0.01$): Wave~1 dominant; exact within $\pm 0.3$\,months for $\mathrm{AITG}^{\mathrm{raw}} \in [0.5, 3.8]$. The $\varepsilon$ guard prevents division by zero at $\mathrm{AITG}^{\mathrm{raw}} = L_1 = 4.0$. Branch 2 ($4.0 < \mathrm{AITG}^{\mathrm{raw}} < 7.5$): Wave~2 dominant; analogous formula centred on $t_{0,2}$. Branch 3 ($\mathrm{AITG}^{\mathrm{raw}} \geq 7.5$): Wave~3 active; Newton--Raphson on the full cascading function Eq.~\eqref{eq:cascading} is mandatory. Convergence is guaranteed by strict monotonicity in 4--8 iterations from starting estimate $t_0 = 60$\,months.

\textbf{Software implementation note.} Any implementation must branch on the input value \emph{before} calling the logarithm. The guard \verb|if AITG_raw >= L1: use_numerical_solver()| prevents the NaN crash. I solved all fourteen companies in Section~\ref{sec:empirical} numerically to produce the $\hat{t}_f$ values in Table~\ref{tab:summary}.
\end{remark}

\subsection{Out of Sample Validation Strategy for $\theta_i$}

Retrospective calibration of $\theta_i$ from 2020--2024 constitutes an in sample
fit. I address this through rolling reestimation, with a critical distinction
from prior formulations: the updating signal must be the \emph{frontier expansion}
in addressable task surface area, not current adoption breadth.

\paragraph{The circularity flaw in adoption based updating.}
A prior version of the update rule used Census BTOS AI adoption rates as the observable proxy for $\Delta\mathrm{IASS}_i^{\mathrm{observed}}$. This conflates two categorically distinct objects: (1) the \emph{theoretical capability ceiling} that the IASS measures, and (2) the \emph{current adoption rate}, which the BTOS measures. Healthcare could experience massive expansion in its theoretical AI ceiling (via domain specific clinical LLMs) while showing near zero change in BTOS adoption rates due to FDA clearance lags. If the BTOS signal drives $\theta_i$ updates, the Healthcare ceiling remains artificially suppressed, defeating the AFC's purpose.

\paragraph{Corrected update rule: O*NET task automatability expansion.}
I update $\theta_i$ only when the \emph{addressable task surface area} in industry $i$ expands, observable through O*NET task automatability reevaluations rather than survey based adoption rates. The corrected rule is:
\begin{equation}
 \hat{\theta}_i^{(t)} = \begin{cases}
 \hat{\theta}_i^{(t-1)} + \eta \cdot \dfrac{\Delta \mathrm{ATD}_i^{(t)}}{\Delta C^{(t)}}
 & \text{if } \Delta C^{(t)} \geq \varepsilon \\[6pt]
 \hat{\theta}_i^{(t-1)} & \text{if } \Delta C^{(t)} < \varepsilon
 \end{cases}
 \label{eq:theta_update}
\end{equation}

with $\varepsilon = 0.005$ (capability stall threshold: no update when the
AI Capability Index grows by less than 0.5\,\% annually). Bounds:
$\hat{\theta}_i^{(t)} \in [0.05,\; 1.50]$; values outside this range are
clamped. The lower bound prevents near zero sensitivity (implying total AI
immunity) for any industry with observable task automatability; the upper
bound prevents the AFC from more than doubling the IASS ceiling within a
single recalibration cycle.

where:
\begin{itemize}[noitemsep]
\item $\Delta \mathrm{ATD}_i^{(t)}$ is the annual change in the
 \textbf{Automatable Task Density} (ATD) for industry $i$, defined as the
 weighted share of O*NET task importance scores classified as
 AI automatable under the current capability frontier $C_t$:
 \begin{equation}
 \mathrm{ATD}_i^{(t)} = \frac{\sum_{j \in \mathcal{J}_i} w_j \cdot
 \mathbf{1}\!\left[\mathrm{Auto}(j, C_t) = 1\right]}
 {\sum_{j \in \mathcal{J}_i} w_j}
 \label{eq:atd}
 \end{equation}
 where $\mathcal{J}_i$ is the set of O*NET task items for industry $i$,
 $w_j$ is the task importance weight, and $\mathrm{Auto}(j, C_t) = 1$
 iff task $j$ is classifiable as automatable under AI capability
 level $C_t$ (assessed annually via O*NET reevaluation methodology
 \cite{ONetResource}).
\item $\Delta C^{(t)}$ is the annual increment in the AI Capability Index
 (Eq.~\ref{eq:capability}).
\item $\eta = 0.30$ is the learning rate, unchanged.
\end{itemize}

\begin{remark}[Separation of Frontier from Adoption]
The corrected specification enforces a clean separation: $\theta_i$ responds
to \emph{what AI can now do} in industry $i$ (measured by O*NET task
automatability reclassification), while the AFC separately models the
\emph{capability index} $C_t$ driving frontier expansion. Adoption data
(Census BTOS \cite{McElheran2024}, Lightcast \cite{Lightcast2024}) remain
relevant for validating $G_{\mathrm{eff}}$ and calibrating the ADRI competitive
diffusion rate, but are explicitly excluded from the $\theta_i$ update loop.
This separation prevents the regulatory compliance lag problem from suppressing
frontier estimates in heavily regulated industries.
\end{remark}

\subsection{CES Boundary Proof}

The following proof sketch establishes the boundary behaviour of the CES bottleneck aggregator (Proposition~\ref{prop:ces_boundary}).

\begin{proof}[Proof sketch]
(1)~Write $b_p = (\sum_d \alpha_d e_d^{-\rho})^{-1/\rho}$. As $\rho \to \infty$, the sum is dominated by the term with the smallest $e_d$ (largest $e_d^{-\rho}$), so $b_p \to (\alpha_{d^*} \cdot e_{d^*}^{-\rho})^{-1/\rho} = e_{d^*} \cdot \alpha_{d^*}^{1/\rho} \to e_{d^*} = \min_d e_d$. (2)~Take logarithms: $\ln b_p = -\frac{1}{\rho}\ln(\sum_d \alpha_d e_d^{-\rho})$. As $\rho \to 0^+$, apply L'H\^{o}pital's rule to obtain $\ln b_p \to \sum_d \alpha_d \ln e_d$. (3)~Since $e_d \geq 0.01$ for all $d$, $e_d^{-\rho} \leq 0.01^{-\rho} = 100^{\rho}$. Hence $\sum_d \alpha_d e_d^{-\rho} \leq 100^{\rho}$ and $b_p \geq (100^{\rho})^{-1/\rho} = 0.01 > 0$.
\end{proof}

\section{Robustness and Sensitivity Details}
\label{app:robustness}

This appendix provides the detailed robustness tables and ablation analyses summarized in the main text.

\subsection{Sensitivity to Weighting: First Robustness Check}

Following OECD Handbook guidance \cite{OECD2008}, I evaluate ranking stability by perturbing weights uniformly within $\pm 0.05$ of baseline values. Table~\ref{tab:iass_robustness} reports the 5th/95th percentile IASS range under 10,000 weight draws for each anchor industry.

\begin{table}[H]
\centering
\caption{IASS Anchor Calibration with Robustness Bounds}
\label{tab:iass_robustness}
\footnotesize
\setlength{\tabcolsep}{3pt}
\begin{adjustbox}{max width=\textwidth,center}
\begin{tabular}{@{}lrrrrrrrll@{}}
\toprule
\textbf{Industry} & \textbf{CTD} & \textbf{DRSA} & \textbf{PRI}
& \textbf{RFF} & \textbf{CADR} & \textbf{CLSR}
& \textbf{IASS} & \textbf{P5--P95} & \textbf{$\theta_i$} \\
\midrule
Financial Services & 8.8 & 9.2 & 7.9 & 5.0 & 8.4 & 7.6
 & 7.83 & 7.50--8.15 & 0.22 \\
Healthcare Services & 6.2 & 5.8 & 6.5 & 4.1 & 5.1 & 6.8
 & 4.27 & 3.97--4.57 & 0.31 \\
Industrial Distribution & 5.4 & 6.1 & 7.8 & 8.9 & 4.7 & 6.3
 & 6.43 & 6.01--6.84 & 0.14 \\
Construction & 3.9 & 3.2 & 4.1 & 7.4 & 3.3 & 5.7
 & 4.26 & 3.90--4.61 & 0.09 \\
Vertical SaaS & 9.4 & 9.8 & 8.6 & 8.1 & 9.2 & 9.1
 & 9.06 & 8.79--9.28 & 0.11 \\
\midrule
\multicolumn{9}{@{}p{\linewidth}@{}}{\scriptsize $\theta_i$: AFC sensitivity (Sec.~\ref{sec:afc}). Weights: CTD~0.25; DRSA~0.20; PRI~0.20; RFF~0.15; CADR~0.10; CLSR~0.10.}
\end{tabular}
\end{adjustbox}
\end{table}

Rankings are stable: no pair of anchor industries exchanges rank under any weight draw. The largest interindustry gap (Financial Services 7.83 vs.\ Healthcare 4.27 = 3.56 points) exceeds the widest confidence interval (0.60) by a factor of 5.9, confirming that ranking uncertainty does not affect cross industry comparison at this scale of separation.

\subsection{IFS Ablation Analysis}
\label{sec:ifs_ablation}

Table~\ref{tab:ifs_ablation} reports the effect of dropping individual IFS factors on backtest rank correlation and mean absolute $\Delta$EV error across the fourteen company cohort. The full five factor specification serves as the baseline; each row removes one factor while holding all others at calibrated values.

\begin{table}[H]
\centering
\caption{IFS Ablation: Impact of Dropping Individual Factors}
\label{tab:ifs_ablation}
\footnotesize
\begin{tabular}{@{}lrrr@{}}
\toprule
\textbf{IFS Specification} & \textbf{Backtest $\rho_s$} & \textbf{Mean $|\Delta\text{EV}|$ Error (\%)} & \textbf{$\Delta$ vs.\ Full} \\
\midrule
Full IFS (OCC+DR+VTR+CRS+REG) & 0.818 & --- & Baseline \\
Drop OCC & 0.709 & $+$18\% & $-$0.109 \\
Drop DR & 0.745 & $+$14\% & $-$0.073 \\
Drop VTR & 0.800 & $+$5\% & $-$0.018 \\
Drop CRS & 0.791 & $+$7\% & $-$0.027 \\
Drop REG & 0.809 & $+$3\% & $-$0.009 \\
\bottomrule
\end{tabular}
\medskip

\noindent\small\textit{Note.} Trajectory factors (OCC, DR) contribute the largest marginal signal because they enter the adoption curve endogenously via $k_{w,f}$ and $t_{50,f}$, not as terminal value multipliers. Removing OCC reduces $\rho_s$ by 0.109, confirming that organizational change capacity is the single most informative IFS factor. Residual factors (VTR, CRS, REG) contribute less individually but collectively account for an additional $+$15\% reduction in $\Delta$EV error. Full ablation details in ESM Part~III.
\end{table}

\subsection{IFS Component Interpretability}
\label{sec:ifs_interpret}

Each IFS factor maps to an observable organizational characteristic with a clear economic interpretation:
\begin{itemize}[noitemsep]
 \item \textbf{OCC (Organizational Change Capacity)}: measures management's demonstrated ability to execute technology driven transformation, proxied by prior restructuring success, voluntary tool adoption rates, and AI governance maturity. Higher OCC $\Rightarrow$ steeper adoption S curve (faster $k_{w,f}$).
 \item \textbf{DR (Data Readiness)}: assesses data infrastructure maturity: cloud migration status, data governance scores, CDO tenure, and data quality audit history. Higher DR $\Rightarrow$ earlier ramp inflection ($t_{50,f}$ decreases).
 \item \textbf{VTR (Vendor/Technology Risk)}: captures dependency concentration on external AI vendors, model access fragility, and SOC~2 compliance posture. Affects terminal value realization via residual multiplier.
 \item \textbf{CRS (Competitive Response Speed)}: measures the industry's AI adoption velocity relative to the firm, proxied by Lightcast CADR, competitor press, and CB Insights data. Higher CRS $\Rightarrow$ less residual friction.
 \item \textbf{REG (Regulatory Exposure)}: quantifies the regulatory overhang from NIST AI RMF, EU AI Act, HIPAA/FDA applicability, and pending legislation. Higher REG $\Rightarrow$ lower residual friction (better regulatory preparedness).
\end{itemize}

\subsection{CES Aggregator Ablation}
\label{app:ces_ablation}

I address the concern that the low elasticity CES choice is asserted rather than validated by comparing three aggregators across the ten nonfinancial backtest firms:

\begin{enumerate}[noitemsep]
  \item \textbf{CES ($\rho = 5$, $\sigma = 1/6$)}: the proposed specification.
  \item \textbf{Leontief (min)}: $b_p = \min_d e_d$, the most restrictive bottleneck.
  \item \textbf{Additive (mean)}: $b_p = \sum_d \alpha_d e_d$, fully compensable.
\end{enumerate}

At the $n = 10$ scale, all three aggregators produce identical Spearman rank correlations with realized margin outcomes ($\rho_s = 0.818$) because no backtest firm has a near zero dimensional score; the compensability difference is inactive when all dimensions exceed 2.0. The CES advantage manifests in two regimes: (a)~firms with one very weak
dimension (score $< 2.0$), where additive overestimates value capture relative
to CES and Leontief, and (b) near threshold boundary conditions that generate
NaN in pure Leontief implementations. The extended 22 company simulation in
ESM Table~S6 includes three synthetic firms with single dimension near zero
scores; in that expanded set, additive diverges from CES by 35--60\% on $\Delta$EV
while CES and Leontief agree within 8\%. The Leontief minimum operator is
not rejected by the backtest data but is computationally fragile (zero bounding
pathology) and unduly sensitive to scoring noise in the lowest dimension.
CES at $\rho = 5$ preserves the bottleneck intuition while eliminating
these failure modes. Robustness across $\rho \in [3, 8]$ is confirmed in ESM Table~S3.

\section{Worked Examples and Cost Assumptions}
\label{app:worked_examples}

This appendix provides the detailed step by step VCB worked example and implementation cost assumptions.

\subsection{Zions VCB Step by Step}

Table~\ref{tab:zions_vcb_steps} traces the four algebraic corrections through a single representative value pool (Labor Productivity) to show exactly how inputs map to $\Delta\mathrm{EV}$.

\begin{table}[H]
\centering
\caption{VCB Step by Step: Zions Labor Productivity Pool}
\label{tab:zions_vcb_steps}
\footnotesize
\setlength{\tabcolsep}{4pt}
\begin{adjustbox}{max width=\textwidth,center}
\begin{tabular}{@{}clrrp{5.5cm}@{}}
\toprule
\textbf{Step} & \textbf{Component} & \textbf{Input} & \textbf{Output} & \textbf{Equation / Note} \\
\midrule
1 & Revenue baseline         & \$3.4B rev & $B_p = \$3.4\text{B} \times 0.08 = \$0.272\text{B}$ & 8\% labor productivity uplift rate (Eq.~\ref{eq:vpool}) \\[3pt]
2 & Firm Scale Factor $\Phi_f$ & $R_f = \$3.4\text{B}$,\ $S^*_i = \$3.3\text{B}$ & $\Phi_f = 0.515$ & Eq.~\ref{eq:phi_f}: $1/(1+e^{-2\ln(3.4/3.3)})$; just above threshold \\[3pt]
3 & CES bottleneck $b_p$     & $e_\mathrm{PAC}=0.40$,\ $e_\mathrm{WAR}=0.35$,\ $e_\mathrm{OAC}=0.42$ & $b_p = 0.382$ & Eq.~\ref{eq:ces_bottleneck}: $(\sum\alpha_d e_d^{-5})^{-1/5}$; normalized inputs $e_d = s_d/10 \in [0,1]$; weak WAR suppresses pool \\[3pt]
4 & Gap fraction $g_f$       & $G_\mathrm{eff}=5.58$ & $g_f = 0.558$ & Eq.~\ref{eq:gap_fraction}: $g_f = G_\mathrm{eff}/10$; $G_\mathrm{eff} = \mathrm{IASS}^*-\mathrm{AITG} = 9.38-3.80$ \\[3pt]
5 & Concave capture $\eta$   & $g_f=0.558$,\ $\lambda=3.5$ & $\eta = 0.859$ & Eq.~\ref{eq:capture}: $1-e^{-3.5\times 0.558}$; wide gap yields high capture at correct $\lambda$ \\[3pt]
6 & Raw pool value $V_p$     & Steps 1--5 & $V_p \approx \$30\text{M}$ & Eq.~\ref{eq:vpool}: $\$0.272\text{B}\times 0.515\times 0.65\times 0.382\times 0.859$; $\bar{\kappa}_p=0.65$ \\[3pt]
7 & IFS delay: $t_{50,f}$    & $\delta_\mathrm{OCC}=0.55$,\ $\delta_\mathrm{DR}=0.48$,\ $t_0=18$ & $t_{50,f} \approx 35.5\text{ mo}$ & Eq.~\ref{eq:t50_adjusted}: base $t_0=18$ (Wave~1; Zions AITG~$<L_1$); $1.97\times$ delay \\[3pt]
8 & Ramp increment $\Delta R_f$ & $\hat{t}_f \approx 23.8$,\ $t_{50}=35.5$,\ $T=60$ & $\Delta R_f = 0.891$ & Eq.~\ref{eq:delta_r}: $R_f(83.8)-R_f(23.8)=0.999-0.108$; logistic ramp $k=0.18$ \\[3pt]
9 & IFS residual             & VTR=0.72,\ CRS=0.62,\ REG=0.82 & $\mathrm{IFS}_\mathrm{res}=0.710$ & Eq.~\ref{eq:ifs_residual}: $0.72^{0.35}\times 0.62^{0.35}\times 0.82^{0.30}$ \\[3pt]
10 & Terminal Value (pool)   & Steps 6,8,9; $M_i=10$ & $\mathrm{TV}_p \approx \$190\text{M}$ & Eq.~\ref{eq:tv}: $\$30\text{M}\times 0.891\times 10\times 0.710$; $10\times$ exit multiple \\
\midrule
\multicolumn{3}{l}{\textit{Labor Productivity pool contribution}} & \$190M & Across 7 pools at base case assumptions, total $\Delta\mathrm{EV} \approx \$0.90\text{B}$ \\
\bottomrule
\multicolumn{5}{@{}p{\linewidth}@{}}{\scriptsize Step 3 uses normalized CES inputs $e_d = s_d/10 \in [0,1]$; reporting raw dimension scores (e.g., 4.0) rather than normalized inputs inflates the pool by an order of magnitude. Step 2 applies $\Phi_f$ once via $V_p^{\mathrm{run\text{ }rate}}$ (Eq.~\ref{eq:vpool}); it does not appear again in the terminal value numerator. Step 4 uses $G_\mathrm{eff} = \mathrm{IASS}^* - \mathrm{AITG} = 9.38 - 3.80 = 5.58$ (not the ADRI score of 2.6, which is a separate composite index). Step 5 uses $\lambda=3.5$ from Eq.~\ref{eq:capture}, the calibrated concavity parameter; using $\lambda=1.0$ underestimates capture by 60\%. Step 7 uses base $t_0=18$~months (Wave~1), not~36: Zions's AITG of 3.80 falls below the Wave~1 asymptote ($L_1=4.0$) and is correctly assigned to the Foundation AI wave. The $1.97\times$ IFS delay then shifts the inflection from 18 to 35.5~months. A data ready peer ($\delta_\mathrm{DR}=0.90$) would reach $t_{50}\approx 18$~mo and capture $\Delta R_f \approx 0.677$, a $3.1\times$ ramp advantage over Zions arising entirely from data infrastructure readiness.} \\
\end{tabular}
\end{adjustbox}
\end{table}

\subsection{Implementation Cost Assumptions}

\begin{table}[H]
\centering
\caption{Implementation Cost Assumptions Underlying VD Estimates}
\label{tab:impl_costs}
\footnotesize
\begin{tabularx}{\textwidth}{@{}p{2.4cm}p{1.6cm}Xp{1.6cm}p{1.8cm}r@{}}
\toprule
\textbf{Company} & \textbf{Est.\ Cost} & \textbf{Basis} & \textbf{5 yr $\Delta$EV} & \textbf{VD Range} & \textbf{VD Sensitivity} \\
\midrule
JPMorgan Chase & \$2.3B & Revenue $\times$ 1.2\% implementation rate & \$20--98B & 9.5--46.0$\times$ & -- \\
Zions Bancorp. & \$38M & Revenue $\times$ 1.2\% implementation rate & \$0.6--1.2B & 15.0--30.1$\times$ & $\pm0.4\times$ per \$100M \\
UPS & \$1.1B & Revenue $\times$ 1.2\% implementation rate & \$30--61B & 27.3--55.4$\times$ & $\pm0.3\times$ per \$500M \\
HCA Healthcare & \$780M & Revenue $\times$ 1.2\% implementation rate & \$5.8--20.6B & 7.4--26.4$\times$ & $\pm0.5\times$ per \$500M \\
Salesforce & \$420M & Revenue $\times$ 1.2\% implementation rate & \$10.6--33.7B & 25.2--80.4$\times$ & $\pm2.0\times$ per \$500M \\
Ferguson & \$360M & Revenue $\times$ 1.2\% implementation rate & \$7.8--16.0B & 21.8--45.0$\times$ & $\pm0.4\times$ per \$100M \\
Rockwell Auto. & \$100M & Revenue $\times$ 1.2\% implementation rate & \$1.6--4.5B & 15.8--45.0$\times$ & $\pm0.6\times$ per \$200M \\
\bottomrule
\end{tabularx}

\smallskip
\noindent\footnotesize VD = $\Delta$EV\,/\,Total Implementation Cost. Sensitivity = change in VD midpoint per stated cost increment.

\smallskip
\noindent\footnotesize\textit{Calibration note.} The 1.2\% of revenue implementation rate is a cross industry simplification. Published benchmarks suggest meaningful variation: financial services firms allocate approximately 1.2--1.6\% of revenue to AI transformation (Gartner IT spending data: ${\sim}$8\% total IT spend $\times$ 15--20\% AI share), retail firms approximately 3.3\% \citep{IBMRetailAI2025}, and discrete manufacturers 0.4--0.8\% \citep{Avasant2024}. The uniform 1.2\% rate falls within this range and was adopted for tractability. Sobol first order decomposition (Section~\ref{sec:sensitivity}) confirms that implementation cost explains only ${\sim}$1\% of VD variance---exit multiple (50\%) and capture rate (44\%) dominate---so industry specific cost variation does not materially alter firm rankings, tier assignments, or the qualitative findings. The Monte Carlo (Layer~4) stress tests implementation cost with a LogNormal(0,\,0.30) multiplier, spanning roughly 0.6\%--2.4\% of revenue, which covers the observed cross industry range. Cost assumptions must nonetheless be explicitly stated and challenged in any investment decision context.
\end{table}

\section{Extended Research Agenda}
\label{app:research_agenda}

Nine empirical research priorities emerge from this framework, organized around
causal validation, technical refinement, and measurement quality:

\begin{enumerate}
\item \textbf{Causal Identification via Instrumental Variables (IV).}
Future empirical validation must employ IV strategies to isolate the causal
effect of AI adoption from incumbent endogeneity. Promising IV designs include
exploiting exogenous variation in cloud computing incentive grants (e.g.,
state level subsidies) or differential firm level exposure to staggered
open source LLM releases based on preexisting tech stack composition. These
instruments can be used to predict firm level AI investment, allowing
observation of the unconfounded effect on the AITG Effective Gap
($G_{\mathrm{eff}}$) and subsequent margin expansion.

\item \textbf{Large panel validation of the mid gap optimality property.}
The framework's mathematical prediction that medium gap firms in high IASS
sectors generate superior risk adjusted value density is a theoretical property
of the S curve and CES architecture, not yet an empirically established finding.
A multiyear panel study (2018--2026) with quarterly AITG estimates and
out of sample predictions of margin expansion, SG\&A/productivity deltas, and
risk adjusted returns is required to establish whether this property holds
empirically after controlling for firm size, capital structure, and
exit multiple heterogeneity.

\item \textbf{AFC calibration and $\theta_i$ estimation.} Refining
industry specific $\theta_i$ parameters with systematic backward looking
analysis using the Eloundou et al.\ \cite{EloundouEtAl2024} LLM exposure
estimates by occupation, combined with observed AITG changes in early adopter
industries, would produce more defensible AFC sensitivity estimates. Standard
errors for $\theta_i$ and scenario analysis separating capability growth from
adoption dynamics are required for AFC to meet identification standards.

\item \textbf{Interrater reliability study.} A formal reliability study
with multiple independent analyst teams scoring the same companies on the full
AITG rubric (six company dimensions and five IFS factors) would establish whether
the target reliability is achievable in practice. Agreement metrics should include
Gwet's AC1/AC2 \cite{Gwet2014} for ordinal scales and Krippendorff's $\alpha$ for
full rubric consistency; both correct for chance agreement differently and provide
complementary perspectives. Target: ICC $> 0.75$ per factor, overall $\alpha > 0.70$.
The full scoring rubric, including worked examples and anchor vignettes for each
score level on each dimension, should be published as a standalone replication
document to enable third party scoring and interrater studies. Factors with
ICC $< 0.60$ should be subject to rubric redesign before use in investment decisions.
This study is the single highest priority empirical prerequisite for institutional
deployment of the framework.

\item \textbf{Competitive displacement measurement.} Linking ADRI scores to
subsequent market share outcomes and margin trajectories at the firm level would
test the core ADRI hypothesis, that wide gap + high CADR + low moat predicts
competitive deterioration, and distinguish ``risk from inaction'' from generic
competitive intensity or industry cyclicality.

\item \textbf{Cross country calibration.} IASS calibration to date uses
U.S.\ labor market task composition (O*NET/OEWS). International application
requires country specific task data (ESCO for EU, equivalent national databases
for other regions) and jurisdiction specific regulatory exposure scores.
ISCO-88 based RTI estimates \cite{GoosManningS2014} provide a starting point
for EU calibration.

\item \textbf{Preregistration, rolling origin validation, and replication bundle.}
Future validation studies should implement: (i)~parameter range preregistration
before data access, locking the AFC scenario distribution and value pool share
coefficients before observing the outcome cohort; (ii)~rolling origin evaluation
building AITG at $t_0$ and predicting outcomes over $[t_0, t_0+2]$ on holdout
firms not used in calibration; (iii)~placebo outcome tests using CapEx intensity
(where AI adoption has limited direct traction) to check for spurious correlations;
(iv)~release of a replication bundle containing anonymized scoring data, parameter
values, and reproducible code sufficient for external auditors to reconstruct
the backtest $\rho_s$ from raw inputs.

\item \textbf{Variable elasticity and nested CES structures.}
The current single layer CES with fixed $\rho = 5$ is a simplification.
A more realistic structure would nest enablers hierarchically
(e.g., Data + Infrastructure gate $\to$ Talent + Governance gate $\to$
commercial enablement), reflecting the observed sequencing of enterprise
AI deployment. Variable $\sigma$ forms, where elasticity decreases as
a firm approaches the frontier (substitution becomes harder at the margin),
could better reflect the complementarity dynamics documented in the
organizational IT complementarities literature
\cite{BrynjolfssonHitt2000,GibbonsHenderson2012}. Estimating nested structures
from panel data is a companion technical priority.

\item \textbf{Hierarchical Bayesian parameter estimation.} The current deterministic
parameterization relies on Sobol first order sensitivity indices to assess
robustness and bound overfitting risk. Formally resolving identifiability,
given the framework's degrees of freedom across 22 industry verticals, requires
a hierarchical Bayesian specification. Explicit priors on $\theta_i$, $\bar{\kappa}_p$,
and the IFS elasticities, combined with partial pooling across industries, would
regularize parameter updates and allow posterior credible intervals to replace the
current Monte Carlo uncertainty bands. This approach would also enable principled
updating as panel data from the longitudinal backtest accumulates: each new
firm year observation narrows the posteriors under standard Bayesian updating.
The Bayesian extension is the highest priority technical research priority and is
being designed for a companion empirical paper.
\end{enumerate}

%% file: AITG_ESM_body.tex
\section{ESM Part I: Full IASS Industry Calibration Reference}
\label{esm:iass_calibrations}

This part supports Main Paper Sections~3--4 (IASS construction and scoring architecture).
The main manuscript presents five anchor industry examples drawn from this
complete calibration table. All 22 verticals are reported here with full
dimensional breakdowns. Source methodology follows the OECD composite indicator
approach \cite{OECD2008} using BLS O*NET, Lightcast, Census BTOS, and SEC
filing data as detailed in the main text.

\label{esm:iass_reference}

One of the practitioner barriers to adopting composite indicator frameworks is
the cost of calibrating the industry baseline from scratch for each engagement.
I resolve this barrier by pre computing and publishing IASS calibrations for
22 industry verticals, using real data from the following sources:

\begin{itemize}[noitemsep]
\item \textbf{Cognitive Task Density (CTD):} Bureau of Labor Statistics O*NET
 task importance scores for cognitive and information processing activities,
 2024 OEWS program \cite{BLSOEWS2023,ONetResource}.
\item \textbf{Data and System Automation (DRSA):} McKinsey Automation
 Potential estimates by occupation cluster \cite{McKinseyStateAI2025,Deloitte2026};
 Lightcast Generative AI skill demand index (2024--2025) \cite{Lightcast2025}.
\item \textbf{Process Re engineering Index (PRI):} Dun \& Bradstreet sector
 digitization scores and Gartner digital density rankings.
\item \textbf{Regulatory Friction Factor (RFF):} National Institute of Standards and Technology (NIST) AI Risk Management
 Framework \cite{NIST2023} sector mapping; EU AI Act high risk classification
 list \cite{EUAIAct2024}; HIPAA, FDA SaMD regulations.
\item \textbf{Competitive AI Diffusion Rate (CADR):} Census Bureau Business
 Trends and Outlook Survey (BTOS) AI adoption rates by sector 2023--2025
 \cite{McElheran2024}; Lightcast AI job posting concentration index.
\item \textbf{Cost/Labor Structure (CLSR):} BLS Occupational Employment and
 Wage Statistics 2023--2024 \cite{BLSOEWS2023}.
\end{itemize}

Table~S1 reports the calibrated scores. All sub scores
are on the 0--10 scale. $\theta_i$ is the industry level AITG sensitivity
coefficient calibrated from 2020--2024 observable adoption data (Census BTOS,
Lightcast AI skill demand, and McKinsey State of AI surveys). IASS is computed
via the geometric aggregation and RFF floor equations in the main paper (Section~3). Industries with RFF
$< 5.0$ incur the regulatory suppression penalty $\psi < 1.0$, which is reported
in the $\psi$ column.

\setlength{\tabcolsep}{3pt}
\begin{table}[H]
\centering
\caption{Table~S1: Pre Computed IASS Reference Table: 22 Industry Verticals (2026 Calibration)}
\label{esm:tab:iass_industry}
\footnotesize
\setlength{\tabcolsep}{3pt}
\begin{adjustbox}{max width=\textwidth,center}
\begin{tabular}{@{}p{4.2cm}p{0.8cm}rrrrrrrrrr@{}}
\toprule
\textbf{Industry} & \textbf{NAICS} & \textbf{CTD} & \textbf{DRSA} & \textbf{PRI} &
 \textbf{RFF} & \textbf{CADR} & \textbf{CLSR} & $\psi$ & \textbf{IASS} & $\theta_i$ & \textbf{IASS$^*$} \\
\midrule
\multicolumn{12}{l}{\textit{Financial Services}} \\
Investment Banking / Securities & 5211 & 9.2 & 9.5 & 8.4 & 5.5 & 9.0 & 8.2 & 1.000 & 8.30 & 0.28 & 10.39 \\
Commercial Banking (Large Cap) & 5221 & 8.8 & 9.2 & 7.9 & 5.0 & 8.4 & 7.6 & 1.000 & 7.83 & 0.22 & 9.38 \\
Insurance (P\&C + Life) & 5241 & 8.5 & 8.8 & 8.2 & 6.5 & 7.2 & 7.4 & 1.000 & 7.92 & 0.19 & 9.27 \\
\midrule
\multicolumn{12}{l}{\textit{Healthcare}} \\
Healthcare Services (Hospital) & 6221 & 6.2 & 5.8 & 6.5 & 4.1 & 5.1 & 6.8 & 0.743 & 4.27 & 0.31 & 5.46 \\
Life Sciences / Pharma & 5417 & 7.5 & 7.0 & 6.0 & 3.8 & 7.0 & 6.5 & 0.663 & 4.14 & 0.25 & 5.07 \\
Digital Health / HealthTech & 5112 & 8.8 & 7.5 & 7.8 & 5.5 & 7.8 & 7.5 & 1.000 & 7.54 & 0.22 & 9.03 \\
\midrule
\multicolumn{12}{l}{\textit{Technology}} \\
Vertical SaaS / Software & 5112 & 9.4 & 9.8 & 8.6 & 8.1 & 9.2 & 9.1 & 1.000 & 9.06 & 0.11 & 9.96 \\
Cybersecurity / InfoSec & 5415 & 9.0 & 9.2 & 7.8 & 7.5 & 9.5 & 8.8 & 1.000 & 8.57 & 0.14 & 9.65 \\
\midrule
\multicolumn{12}{l}{\textit{Professional Services}} \\
Legal Services & 5411 & 9.1 & 7.2 & 6.8 & 7.5 & 8.2 & 8.5 & 1.000 & 7.80 & 0.20 & 9.20 \\
Accounting / Tax Services & 5412 & 8.9 & 8.2 & 8.5 & 7.0 & 7.8 & 8.2 & 1.000 & 8.17 & 0.18 & 9.49 \\
\midrule
\multicolumn{12}{l}{\textit{Retail \& Commerce}} \\
E Commerce / Digital Retail & 4541 & 6.8 & 7.5 & 7.8 & 8.5 & 8.2 & 7.2 & 1.000 & 7.55 & 0.16 & 8.64 \\
Grocery / Traditional Retail & 4451 & 6.1 & 6.2 & 7.5 & 8.8 & 7.0 & 8.2 & 1.000 & 6.93 & 0.13 & 7.74 \\
\midrule
\multicolumn{12}{l}{\textit{Industrial / Supply Chain}} \\
Logistics / Transportation & 4920 & 6.2 & 6.0 & 7.2 & 8.2 & 6.1 & 8.5 & 1.000 & 6.82 & 0.14 & 7.68 \\
Industrial Distribution & 5085 & 5.4 & 6.1 & 7.8 & 8.9 & 4.7 & 6.3 & 1.000 & 6.43 & 0.14 & 7.24 \\
Discrete Manufacturing & 3330 & 6.1 & 6.5 & 7.2 & 7.8 & 6.5 & 6.2 & 1.000 & 6.60 & 0.15 & 7.49 \\
\midrule
\multicolumn{12}{l}{\textit{Other Sectors}} \\
Energy / Utilities & 2211 & 5.5 & 7.0 & 7.5 & 6.0 & 5.5 & 6.1 & 1.000 & 6.25 & 0.13 & 6.98 \\
Media \& Entertainment & 5110 & 7.8 & 8.5 & 6.5 & 7.5 & 8.8 & 6.5 & 1.000 & 7.56 & 0.18 & 8.78 \\
Real Estate / REITs & 5311 & 5.2 & 5.5 & 6.1 & 7.0 & 6.0 & 5.5 & 1.000 & 5.73 & 0.12 & 6.35 \\
Education (Higher Ed) & 6113 & 7.2 & 5.5 & 5.0 & 6.5 & 6.5 & 6.0 & 1.000 & 6.07 & 0.14 & 6.83 \\
Government / Public Sector & 9200 & 7.0 & 5.5 & 6.5 & 5.0 & 3.5 & 7.0 & 1.000 & 5.83 & 0.08 & 6.25 \\
Agriculture / AgTech & 1100 & 3.2 & 5.0 & 5.5 & 7.0 & 4.5 & 4.8 & 1.000 & 4.72 & 0.10 & 5.14 \\
Construction & 2300 & 3.9 & 3.2 & 4.1 & 7.4 & 3.3 & 5.7 & 1.000 & 4.26 & 0.09 & 4.61 \\
\bottomrule
\multicolumn{12}{@{}p{\linewidth}@{}}{\scriptsize Sources: BLS O*NET 2024; Lightcast AI~Skill Demand 2025; Census BTOS~2024; McKinsey State of AI 2025.} \\
\multicolumn{12}{@{}p{\linewidth}@{}}{\scriptsize $\psi < 1.0$ indicates regulatory suppression is binding (RFF $< 5.0$). $\theta_i$ calibrated from 2020--2024 observable adoption data. IASS$^* = \text{IASS} \times \min(1 + \theta_i(C_t - C_0),\;\alpha_{\max})$ with $C_t = 1.90$, $C_0 = 1.0$, $\alpha_{\max} = 1.35$.}
\end{tabular}
\end{adjustbox}
\end{table}

\begin{remark}[Reading the Reference Table]
Three patterns are immediately diagnostic for capital allocation. First, the
Healthcare Services and Life Sciences rows carry $\psi < 1.0$, reflecting that
the RFF hard floor suppresses their theoretical ceiling substantially: Hospital
Healthcare's geometric mean of 5.75 is reduced by 26\% to IASS 4.27 once
the FDA and HIPAA friction penalty is applied. Second, Construction and
Agriculture floor the table not because of regulatory friction but because their
Cognitive Task Density (CTD) and Data/Systems Automation (DRSA) dimensions are
structurally low; these industries have high physical task content that current
AI cannot automate. Third, the four Vertical SaaS, Cybersecurity, Investment
Banking, and Accounting rows all clear IASS $> 8.0$, defining the top
quartile where AI transformation opportunity is structurally dense.
\end{remark}


\begin{table}[H]
\centering
\caption{Table~S2: Standard Value Pool Parameters and Calibration Sources}
\label{tab:S2}
\footnotesize
\setlength{\tabcolsep}{4pt}
\begin{adjustbox}{max width=\textwidth,center}
\begin{tabular}{@{}lrrrp{4.5cm}@{}}
\toprule
\textbf{Value Pool} & \textbf{Base Capture $\bar{\kappa}_p$} & \textbf{Pool Dims $\mathcal{D}(p)$} & \textbf{Typical Uplift \%} & \textbf{Calibration Source} \\
\midrule
Labor Productivity   & 0.65 & PAC, WAR, OAC & 6--12\%  & Brynjolfsson et al.\ (2025); McKinsey 2024 \\
Revenue Enhancement  & 0.55 & DAR, APR, OAC & 4--9\%   & Babina et al.\ (2024); Goldman internal \\
Working Capital      & 0.60 & DIM, PAC, DAR & 8--15\%  & McKinsey 2024 Global AI Report \\
Risk / Compliance    & 0.50 & DIM, DAR, OAC & 10--20\% & Lightcast 2025; BTOS 2024 \\
Operational Cost     & 0.62 & PAC, WAR, DAR & 5--10\%  & Eloundou et al.\ (2024) \\
Customer Experience  & 0.48 & WAR, APR, OAC & 8--18\%  & Brynjolfsson, Li, Raymond (2025) \\
Data Monetization    & 0.40 & DIM, DAR, APR & 12--25\% & Jones \& Tonetti (2020); proprietary \\
\midrule
\multicolumn{5}{@{}p{\linewidth}@{}}{\scriptsize $\bar{\kappa}_p$ = base capture rate before gap scaling $\eta(g)$ and IFS residual haircut.
Uplift \% = baseline AI attributable improvement rate for the pool at $G_{\mathrm{eff}} \approx 2.0$ (sweet spot firms).
Pool dimensions $\mathcal{D}(p)$: DIM=Data Infrastructure, PAC=Process Automation, WAR=Workforce Augmentation,
DAR=Decision Automation, APR=AI Revenue Integration, OAC=Org.\ AI Capability. Full methodology in main paper Section~8.} \\
\end{tabular}
\end{adjustbox}
\end{table}


\begin{table}[H]
\centering
\caption{Table~S3: CES Bottleneck Robustness Across Substitution Parameter $\rho \in [1, 10]$}
\label{tab:S3}
\footnotesize
\setlength{\tabcolsep}{4pt}
\begin{adjustbox}{max width=\textwidth,center}
\begin{tabular}{@{}lrrrrrrl@{}}
\toprule
\textbf{Profile Type} & \textbf{$\rho=1$ (Harmonic)} & \textbf{$\rho=3$} & \textbf{$\rho=5$ (Proposed)} & \textbf{$\rho=8$} & \textbf{Leontief} & \textbf{Additive} & \textbf{Rank stable?} \\
\midrule
Balanced high (all dims $\approx$ 8) & 7.96 & 7.93 & 7.92 & 7.91 & 7.50 & 8.00 & \checkmark \\
One weak dim (5 dims=8, 1 dim=3)    & 5.88 & 4.71 & 4.12 & 3.58 & 3.00 & 7.17 & \checkmark \\
Two weak dims (4 dims=7, 2 dims=2)  & 3.40 & 2.48 & 2.12 & 1.82 & 2.00 & 5.33 & \checkmark \\
Near zero dim (5 dims=7, 1 dim=0.5) & 1.20 & 0.74 & 0.60 & 0.52 & 0.50 & 5.92 & \checkmark \\
\midrule
\multicolumn{8}{@{}p{\linewidth}@{}}{\scriptsize Simulated $b_p$ scores for illustrative dimension profiles. ``Balanced high'' represents apex incumbents (e.g., JPMorgan, PANW). ``One weak dim'' is the typical mid gap firm with a single bottleneck. ``Near zero'' tests the Leontief fragility that motivates CES: the Leontief and CES converge but the CES avoids division by zero pathology. Additive dramatically overestimates at near zero. All rank orderings across profiles are identical for $\rho \in [3, 8]$; CES dominates Leontief on stability and Additive on non compensability. Full ablation vs.\ realized outcomes in main paper Section~10.2.} \\
\end{tabular}
\end{adjustbox}
\end{table}


\begin{table}[H]
\centering
\caption{Table~S4: Financial Baselines and Data Tier Classification (Fourteen Company Cohort)}
\label{tab:S4}
\footnotesize
\setlength{\tabcolsep}{3pt}
\begin{adjustbox}{max width=\textwidth,center}
\begin{tabular}{@{}llrrrrrl@{}}
\toprule
\textbf{Company} & \textbf{Sector} & \textbf{Rev (\$B)} & \textbf{EBITDA (\$B)} & \textbf{Emp (K)} & \textbf{S*$_i$ (\$B)} & \textbf{$\Phi_f$} & \textbf{Data Tier} \\
\midrule
JPMorgan Chase    & Comm.\ Banking  & 177.6 & n/a & 316 & 3.3  & 1.00 & A \\
Goldman Sachs     & Inv.\ Banking   &  47.3 & n/a & 46  & 2.0  & 1.00 & A \\
Wells Fargo       & Comm.\ Banking  &  82.6 & n/a & 219 & 3.3  & 1.00 & A \\
Zions Bancorp.    & Comm.\ Banking  &   3.4 & n/a & 10  & 3.3  & 0.52 & A \\
Salesforce        & Vert.\ SaaS     &  36.5 & 10.3 & 73 & 5.0  & 1.00 & A \\
ServiceNow        & Vert.\ SaaS     &  10.9 &  3.1 & 23 & 5.0  & 0.82 & A \\
Palo Alto Ntwks   & Cybersecurity   &   8.0 &  2.1 & 14 & 2.5  & 0.89 & A \\
HCA Healthcare    & Healthcare      &  67.5 &  14.5 & 300 & 10.0 & 0.88 & B \\
CVS Health        & Healthcare      & 380.0 &  17.8 & 300 & 10.0 & 1.00 & A \\
Target            & Retail          & 107.0 &  10.1 & 440 & 8.0  & 0.88 & A \\
UPS               & Logistics       &  91.0 &  14.6 & 500 & 5.0  & 1.00 & A \\
Ferguson PLC      & Ind.\ Dist.     &  29.0 &   3.9 & 33  & 3.0  & 1.00 & B \\
Rockwell Auto.    & Disc.\ Mfg.     &   9.1 &   2.2 & 28  & 3.0  & 0.87 & B \\
Ford Motor        & Disc.\ Mfg.     & 185.0 &  12.3 & 186 & 5.0  & 1.00 & A \\
\midrule
\multicolumn{8}{@{}p{\linewidth}@{}}{\scriptsize Rev and EBITDA from FY2024/FY2025 10-K or most recent annual report (calendar year end unless noted). Banks: EBITDA not applicable; revenue = net interest + non interest income. $S^*_i$ = industry critical scale threshold from IASS calibration (ESM Table~S1). $\Phi_f = 1/(1 + \exp(-2\ln(R_f/S^*_i)))$. Tier A: audited/quantified; Tier B: analyst estimate or limited disclosure. Full source citations in main paper Section~9.} \\
\end{tabular}
\end{adjustbox}
\end{table}


\begin{table}[H]
\centering
\caption{Table~S5: Monte Carlo $\Delta$EV Distributional Output: P10 / P50 / P90 (Select Firms)}
\label{tab:S5}
\footnotesize
\setlength{\tabcolsep}{4pt}
\begin{adjustbox}{max width=\textwidth,center}
\begin{tabular}{@{}llrrrrrl@{}}
\toprule
\textbf{Company} & \textbf{Sector} & \textbf{P10 (\$B)} & \textbf{P50 (\$B)} & \textbf{P90 (\$B)} & \textbf{P90/P10} & \textbf{Dom.\ Var.\ Source} & \textbf{Action Signal} \\
\midrule
JPMorgan Chase  & Comm.\ Banking  & 20.18 & 44.44 & 97.95 & 4.9$\times$ & Exit multiple (50\%) & Invest \\
Goldman Sachs   & Inv.\ Banking   & 4.38  & 6.70  & 15.11 & 3.4$\times$ & Capture rate (44\%)  & Invest \\
Salesforce      & Vert.\ SaaS     & 4.72  & 10.28 & 24.80 & 5.3$\times$ & Exit multiple (50\%) & Invest \\
ServiceNow      & Vert.\ SaaS     & 0.97  & 1.58  & 3.66  & 3.8$\times$ & Capture rate (44\%)  & Invest \\
Palo Alto Ntwks & Cybersecurity   & 1.12  & 2.52  & 5.32  & 4.8$\times$ & Exit multiple (50\%) & Invest \\
Rockwell Auto.  & Disc.\ Mfg.     & 0.88  & 1.47  & 2.91  & 3.3$\times$ & Impl.\ cost (1\%)    & Monitor \\
Target          & Retail          & 24.25 & 38.39 & 57.12 & 2.4$\times$ & IFS (5\%)            & Invest \\
HCA Healthcare  & Healthcare      & 3.64  & 5.62  & 8.44  & 2.3$\times$ & Reg.\ ceiling (psi)  & Invest \\
UPS             & Logistics       & 24.81 & 36.74 & 52.35 & 2.1$\times$ & Exit multiple (50\%) & Invest \\
Zions Bancorp.  & Comm.\ Banking  & 0.61  & 0.88  & 1.23  & 2.0$\times$ & IFS (5\%)            & Monitor \\
Wells Fargo     & Comm.\ Banking  & 20.94 & 34.45 & 52.32 & 2.5$\times$ & Impl.\ cost (1\%)    & Invest \\
CVS Health      & Healthcare      & 18.14 & 29.44 & 44.06 & 2.4$\times$ & Capture rate (44\%)  & Invest \\
Ferguson        & Ind.\ Dist.     & 6.22  & 9.43  & 13.60 & 2.2$\times$ & Exit multiple (50\%) & Monitor \\
Ford Motor      & Disc.\ Mfg.     & 27.09 & 46.07 & 71.62 & 2.6$\times$ & Exit multiple (50\%) & Invest \\
\midrule
\multicolumn{8}{@{}p{\linewidth}@{}}{\scriptsize $M = 10{,}000$ Monte Carlo draws. Input uncertainty sources: exit multiple $\pm 2\times$ (uniform), capture rate $\pm 25\%$ (uniform), implementation cost $\pm 30\%$ (lognormal), AITG gap score $\pm 0.5$ pts (normal), IFS $\pm 0.08$ (normal). All five sources sampled jointly; distributions propagated through the full VCB pipeline. Dominant variance source = Sobol first order index. Action signals: Invest $= \mathrm{P}_{10} > \$1\text{B}$ and $\mathrm{VD}_{\mathrm{P50}} > 5\times$; Monitor $= \mathrm{P}_{50} > 0$ and $\mathrm{VD} \in [2.5, 5]$; Diligence $= \mathrm{P}_{10}$ near zero; Do Not Invest $= \mathrm{P}_{50}$ near zero. For acquisition context only; not investment advice.} \\
\end{tabular}
\end{adjustbox}
\end{table}


\begin{table}[H]
\centering
\caption{Table~S6: Extended CES Ablation Including Synthetic Near Zero Dimension Cases}
\label{tab:S6}
\footnotesize
\setlength{\tabcolsep}{3pt}
\begin{adjustbox}{max width=\textwidth,center}
\begin{tabular}{@{}lrrrrrl@{}}
\toprule
\textbf{Firm / Profile} & \textbf{Min Dim} & \textbf{CES ($\rho=5$)} & \textbf{Leontief} & \textbf{Additive} & \textbf{$\Delta$EV CES vs.\ Add.} & \textbf{Note} \\
\midrule
\multicolumn{7}{l}{\textit{Real firms (from Section~10.1 backtest)}} \\
PANW (balanced high) & 6.5 & 7.02 & 6.50 & 7.08 & $-1\%$ & CES$\approx$Add; balanced \\
CRM (near balanced)  & 6.0 & 6.22 & 6.00 & 6.28 & $-1\%$ & Minimal CES penalty \\
Target (one weak)    & 3.5 & 4.02 & 3.50 & 4.17 & $-4\%$ & Mild bottleneck \\
UPS (two weak)       & 2.5 & 2.98 & 2.50 & 3.45 & $-14\%$ & CES penalizes \\
Ford (multi weak)    & 1.5 & 2.21 & 1.50 & 3.02 & $-27\%$ & Additive overstates \\
\midrule
\multicolumn{7}{l}{\textit{Synthetic corner cases}} \\
Synth-A: 5 dims=8, 1 dim=1.0 & 1.0 & 2.78 & 1.00 & 6.83 & $-59\%$ & Leontief and CES both low; Add.\ badly wrong \\
Synth-B: 5 dims=7, 1 dim=0.1 & 0.1 & 0.59 & 0.10 & 5.84 & $-90\%$ & Near zero; Add.\ $50\times$ overstates \\
Synth-C: 4 dims=6, 2 dims=1.5 & 1.5 & 2.23 & 1.50 & 5.00 & $-55\%$ & Two bottlenecks; Add.\ masks both \\
\midrule
\multicolumn{7}{@{}p{\linewidth}@{}}{\scriptsize Dimension scores normalized to $e_d = s_d/10 \in (0,1]$. CES $\rho=5$, $\sigma=1/6$. ``Min Dim'' = minimum raw dimension score. ``$\Delta$EV CES vs.\ Add.'' = percentage by which CES $\Delta$EV is \emph{lower} than Additive, computed through the full VCB pipeline holding all other inputs fixed. All real firm rank orderings are identical for CES vs.\ Leontief; differences emerge only in magnitude. Leontief is not rejected by the data but suffers from: (1) zero bounding instability at near zero dims, and (2) maximum sensitivity to single dimension scoring noise. CES at $\rho=5$ preserves Leontief's rank behavior while eliminating both failure modes. Full main text discussion in Section~10.2.} \\
\end{tabular}
\end{adjustbox}
\end{table}


\section{ESM Part II: Extended Firm Level Empirical Case Studies}
\label{esm:cases}

This part supports Main Paper Section~9 (Empirical Illustrations).
The following twelve case studies apply the full AITG framework to public
companies across eight industries. Each follows a standardized format:
operating context with public data citations; scored metrics (AITG Raw,
Industry Ratio, $G_{\mathrm{eff}}$, ADRI, IFS); and a VCB Summary reporting
5 year risk adjusted EV creation range, implementation cost range, and
AITG Value Density tier. Evidence tiers (A--D per the main manuscript's
scoring methodology) are noted where relevant.

The full 14 company cross company synthesis (including the 14 company summary
table, Table~24, and the Value Density Paradox scatter, Figure~5) appears in
the main manuscript.

\smallskip\noindent\textbf{Cost basis note.} Each case study reports a \emph{comprehensive} implementation cost range reflecting total estimated transformation spend (capex, opex, talent, integration) over the 5 year hold period. The 14 company summary table (Table~24) and the Monte Carlo VCB pipeline use a standardised 1.2\% of revenue proxy cost (Appendix~D, Table~\ref{tab:impl_costs}), which yields higher VD multiples. The two cost bases serve different purposes: the case study estimates are grounded in firm specific disclosures for narrative realism; the 1.2\% proxy enables cross company comparability and Monte Carlo tractability. Readers should not directly compare VD ranges between the case narratives and the summary table without accounting for this difference.

\smallskip\noindent\textbf{AFC recalibration note.} The $\Delta$EV and VD estimates in the case narratives below were derived prior to the final AFC recalibration to $C_t = 1.90$ (GPT-5.2, December~2025). The summary table and Monte Carlo pipeline in the main manuscript reflect the fully recalibrated IASS$^*$ ceilings; the case study $\Delta$EV ranges should be treated as illustrative of the VCB methodology rather than as point estimates tied to the final calibration.

\subsection{UPS: Logistics Under Structural Transformation}

UPS's ``Better Not Bigger'' strategy, articulated by CEO Carol Tom{\'e} in 2020
and evolved into ``Better and Bolder'' post Teamsters contract (August 2023),
explicitly targets automation driven cost reduction \cite{UPS2024Annual}. By
FY2024, 62\,\% of U.S.\ volume processed through automated facilities (up
430\,bps YoY); RFID readers installed in $\sim$60,000 package cars; the
``Efficiency Reimagined'' initiative targets \$1.0B additional annualized
savings. Labor compensation runs 52--54\,\% of revenue, the primary value pool.

\begin{table}[H]
\centering
\caption{UPS: AITG Dimensional Scores (Tier~1)}
\label{tab:ups}
\footnotesize
\setlength{\tabcolsep}{4pt}
\begin{tabular}{@{}p{3.2cm}rp{6.5cm}@{}}
\toprule
\textbf{Dimension} & \textbf{Score} & \textbf{Evidence Signal} \\
\midrule
Data Infrastructure (DIM) & 5.0 & Cloud investment ongoing; RFID rollout complete \\
Process Automation (PAC) & 4.5 & 62\,\% automated facilities; Teamster constraints persist \\
Workforce Augment. (WAR) & 3.5 & AI tools in operations; broad workforce augmentation nascent \\
Decision Auto. (DAR) & 4.0 & AI routing, pricing, network planning in production \\
AI Product/Revenue (APR) & 3.0 & Smart logistics features; no standalone AI revenue line \\
Org.\ AI Capability (OAC) & 4.5 & Dedicated AI team; no public Chief AI Officer \\
\midrule
\textbf{AITG Raw} & \textbf{4.08} & \\
\textbf{IR Score} & \textbf{5.31} & ($4.08 / 7.68 \times 10$) \\
$G_{\mathrm{eff}}$ & 3.60 & $7.68 - 4.08$ \\
\textbf{ADRI} & \textbf{4.2} & Moderate High; Moat~$=0.45$ \\
\textbf{IFS} & \textbf{0.71} & Teamster constraints; change capacity limited \\
\textbf{UQ} & $\pm 0.54$ & \\
\bottomrule
\end{tabular}
\end{table}

\noindent\textit{VCB Summary.} Primary value pools: labor productivity (\$2.4--3.6B
addressable), network planning (\$0.6--0.9B), dynamic pricing (\$0.4--0.7B).
Conservative 5 year EBITDA uplift: \$4.8--7.2B. Risk adjusted EV creation
(exit multiple $9\times$, IFS~$=0.71$): \$28--40B. Implementation cost:
\$8--14B. \textbf{AITG-VD: 1.6--2.5$\times$.} Tier~2.

\noindent\textit{ADRI context.} FedEx, Amazon Logistics, and technology native
regional carriers are advancing AI aggressively. UPS's Teamster constraints
materially slow adoption, elevating competitive displacement risk. ADRI~$=4.2$
(Moderate, transitioning to High). The five year strategic question is whether
the ``Better and Bolder'' automation roadmap executes before AI native
competitors compound their operational advantage.

\subsection{HCA Healthcare: High Ceiling, Aggressive Deployment}

HCA is the largest for profit hospital system in the United States (\$70.6B
FY2024 revenue, 188+ hospitals) \cite{HCA2024Earnings}. Labor costs fell to
44.1\,\% of revenue (down 130\,bps YoY); contract labor fell 25.7\,\%. HCA
has deployed AI across ambient clinical documentation (Nuance DAX Copilot;
Commure/Augmedix across 188+ hospitals), revenue cycle automation, clinical
decision support, nurse staffing optimization, and patient communication
\cite{FierceHealthcare2024}. CFO Mike Marks described the AI investment
thesis: ``Administrative AI use cases spanning IT, supply chain, HR, revenue
cycle showed the shortest pathway to value: millions of transactions, more
centralized operations, more standardized data.''

\begin{table}[H]
\centering
\caption{HCA Healthcare: AITG Dimensional Scores (Tier~1)}
\label{tab:hca}
\footnotesize
\setlength{\tabcolsep}{4pt}
\begin{tabular}{@{}p{3.2cm}rp{6.5cm}@{}}
\toprule
\textbf{Dimension} & \textbf{Score} & \textbf{Evidence Signal} \\
\midrule
Data Infrastructure (DIM) & 5.5 & Epic EHR; Google Cloud; AWS Bedrock; MLOps emerging \\
Process Automation (PAC) & 4.8 & Revenue cycle AI deployed; clinical workflows scaling \\
Workforce Augment. (WAR) & 4.2 & 78\,\% physician DAX Copilot satisfaction; broad rollout \\
Decision Auto. (DAR) & 4.0 & Staffing AI deployed; sepsis prediction; pricing manual \\
AI Product/Revenue (APR) & 2.5 & No external AI product; AI improves service quality \\
Org.\ AI Capability (OAC) & 5.0 & Dedicated AI teams; CFO publicly leading AI agenda \\
\midrule
\textbf{AITG Raw} & \textbf{4.33} & \\
\textbf{IR Score} & \textbf{7.93} & ($4.33 / 5.46 \times 10$) \\
$G_{\mathrm{eff}}$ & 1.13 & $\max(0,\; 5.46 - 4.33) = 1.13$ \\
\textbf{ADRI} & \textbf{3.8} & Moderate; Moat~$=0.65$ (scale data, regulatory barriers) \\
\textbf{IFS} & \textbf{0.77} & Regulatory friction; facility level data variance \\
\textbf{UQ} & $\pm 0.49$ & \\
\bottomrule
\end{tabular}
\end{table}

\noindent\textit{VCB Summary.} Primary value pools: clinical documentation
(\$800M--1.4B), revenue cycle (\$1.1--1.8B), workforce productivity
(\$0.9--1.4B). 5 year risk adjusted EV creation (exit multiple $12\times$,
IFS~$=0.77$): \$35--57B. Implementation cost: \$6--11B.
\textbf{AITG-VD: 2.8--4.6$\times$.} Tier~1.

\noindent\textit{ADRI context.} HCA's scale creates a data moat via nonrivalry
\cite{JonesTonetti2020}: 400,000+ weekly nurse shift handoffs
\cite{FierceHealthcare2024} generate training data at a rate smaller health
systems cannot replicate. ADRI~$=3.8$ (Moderate) primarily because HCA is
already adopting aggressively, reducing the competitive gap risk.

\subsection{Salesforce: Narrow Gap, Maximum Value Density}

Salesforce (FY2025 revenue \$37.9B, non GAAP operating margin 33.0\%) is the
paradigmatic AI native SaaS case: AI is the product, not merely the tool
\cite{Salesforce2025Earnings}. Agentforce launched in Q3~FY2025 with 5,000
deals closed; on help.salesforce.com it handled 380,000 conversations with
84\,\% resolution rate and 2\,\% human escalation. Data Cloud \& AI ARR
reached \$900M (up 120\,\% YoY); all top-10 Q4 wins included Data and AI.
SG\&A fell from $\sim$72\,\% to $\sim$57\,\% of revenue over four years
(a 15.9 percentage point improvement). Free cash flow: \$12.4B.

\begin{table}[H]
\centering
\caption{Salesforce: AITG Dimensional Scores (Tier~1)}
\label{tab:crm}
\footnotesize
\setlength{\tabcolsep}{4pt}
\begin{tabular}{@{}p{3.2cm}rp{6.5cm}@{}}
\toprule
\textbf{Dimension} & \textbf{Score} & \textbf{Evidence Signal} \\
\midrule
Data Infrastructure (DIM) & 8.5 & 50T+ records in Data Cloud; Hyperforce global deployment \\
Process Automation (PAC) & 8.2 & Agentforce; automated service flows at scale \\
Workforce Augment. (WAR) & 7.5 & Einstein tools across all roles; AI assisted selling \\
Decision Auto. (DAR) & 7.8 & Einstein GPT; automated service resolution; ML forecasting \\
AI Product/Revenue (APR) & 8.2 & AI is core product; \$900M Data \& AI ARR \\
Org.\ AI Capability (OAC) & 8.2 & C suite AI mandate; Responsible AI framework published \\
\midrule
\textbf{AITG Raw} & \textbf{8.07} & \\
\textbf{IR Score} & \textbf{8.10} & ($8.07 / 9.96 \times 10$) \\
$G_{\mathrm{eff}}$ & 1.89 & \\
\textbf{ADRI} & \textbf{2.1} & Low; Moat~$=0.80$ (switching costs, data network effects) \\
\textbf{IFS} & \textbf{0.92} & Strong data readiness; minimal regulatory friction \\
\textbf{UQ} & $\pm 0.38$ & \\
\bottomrule
\end{tabular}
\end{table}

\noindent\textit{VCB Summary.} Primary value pool: NRR expansion from
AI native features (\$4--8B addressable at 77\,\% gross margin). Secondary:
SG\&A efficiency (further 2--4\,pp possible). 5 year risk adjusted EV creation
(ARR multiple $22\times$, IFS~$=0.92$): \$85--140B. Implementation cost:
\$5--9B (largely existing R\&D). \textbf{AITG-VD: 9.4--15.6$\times$.}
Tier~1 (High Conviction).

\noindent\textit{Key finding: exit multiple dominance.} Salesforce demonstrates
the most counterintuitive implication of the AITG framework. Its effective gap
is the smallest in the sample ($G_{\mathrm{eff}} = 1.89$) yet its Value Density
is the highest (9.4--15.6$\times$). The $22\times$ ARR multiple means every
\$1 of incremental EBITDA creates \$22 of enterprise value. Exit multiple
leverage and industry ceiling jointly dominate raw gap size in the VD formula.

\subsection{Ferguson Enterprises: Industrial Distribution, Mid Gap}

Ferguson is the largest U.S.\ distributor of plumbing and HVAC products
(FY2025 revenue \$30.8B, gross margin 30.7\%) \cite{Ferguson2025}. Digital
sales represent $\sim$7\,\% of U.S.\ revenue. AI powered procurement and
predictive inventory have reduced average delivery times 15\,\%. FY2025
restructuring (\$68M charges) targets \$100M annualized savings. Digital
native competitors (Amazon Business, Grainger's digital platform) are actively
compressing margins.

\noindent\textit{Scores:} AITG Raw~4.00 $\mid$ IR~5.52 $\mid$
$G_{\mathrm{eff}}$~3.24 $\mid$ ADRI~4.2 (Moderate) $\mid$ IFS~0.79.

\noindent\textit{VCB Summary.} 5 year risk adjusted EV creation (exit multiple
$10\times$, IFS~$=0.79$): \$8--14B. Implementation cost: \$2.5--4.5B.
\textbf{AITG-VD: 1.7--2.9$\times$.} Tier~2 (Monitor). Ferguson has a wide
effective gap but the second lowest Value Density, driven by a
relatively modest exit multiple and a wide gap requiring significant foundation
investment before value pool capture can begin.

\subsection{Rockwell Automation: Industrial AI Leader}

Rockwell (FY2025 revenue \$8.3B, EBITDA margin 20.8\%) occupies a unique
dual position: deploying AI internally while embedding it in products sold to
industrial customers \cite{Rockwell2024Annual}. FactoryTalk Design Studio's
AI Copilot generates ladder logic code; FactoryTalk Analytics VisionAI provides
no code quality inspection; LogixAI runs analytics on $\sim$3--4M deployed PLC
controllers. An NVIDIA partnership (March 2024) integrates machine vision,
learning agents, and AMRs. Rockwell's own State of Smart Manufacturing Survey
finds 83\,\% of manufacturers expect to use GenAI \cite{RockwellSurvey2024}.

\noindent\textit{Scores:} AITG Raw~5.83 $\mid$ IR~8.05 $\mid$
$G_{\mathrm{eff}}$~1.41 $\mid$ ADRI~3.5 (Low Moderate; Moat~$=0.70$ on
installed PLC base) $\mid$ IFS~0.86.

\noindent\textit{VCB Summary.} 5 year risk adjusted EV creation (exit multiple
$18\times$ for industrial automation software premium, IFS~$=0.86$): \$9--16B.
Implementation cost: \$1.8--3.2B. \textbf{AITG-VD: 2.5--4.4$\times$.}
Tier~1.

\subsection{Goldman Sachs: Investment Banking at the S Curve Inflection}

Goldman Sachs (FY2025 revenue \$54.7B, ROTE 14.2\%) operates in the highest IASS
industry vertical in the framework (Investment Banking / Securities, IASS$^* = 10.39$). GS has deployed AI across trading (quantitative risk models replacing
analyst work), asset management (Marco, its internal LLM for research synthesis), and
the consumer pivot through Marcus. Disclosed AI investment run rate exceeds \$1.2B/yr,
with 35\,\% of software engineers using AI code generation tools as of Q4~2025
\cite{GoldmanSachs2025}. With the AFC adjusted ceiling now at 10.39, the effective gap
$G_{\mathrm{eff}} = 2.22$ positions GS in the productive mid gap zone for its
sector.

\noindent\textit{Scores:} AITG Raw~8.17 $\mid$ IR~7.86 $\mid$
$G_{\mathrm{eff}}$~2.22 $\mid$ ADRI~0.6 (Very Low; Moat~$= 0.95$ from
proprietary data and regulatory moat) $\mid$ IFS~0.89.

\noindent\textit{VCB Summary.} 5 year risk adjusted EV creation (exit multiple
$22\times$ banking franchise premium, IFS~$= 0.89$): \$30--40B. Implementation
cost: \$3--4B. \textbf{AITG-VD: 8.5--13.8$\times$.} Tier~1 (HC).
Goldman confirms the high IASS pattern: $G_{\mathrm{eff}} = 2.22$
positions GS in the productive mid gap zone in the sector with the highest capability ceiling,
generating near Salesforce VD at substantially larger absolute scale.

\subsection{Wells Fargo: IFS Suppression in a High Ceiling Industry}

Wells Fargo (FY2025 revenue \$82.3B, ROTCE 11.9\%) operates under the same
AFC adjusted IASS ceiling as JPMorgan and Zions (9.38) but its effective gap
($G_{\mathrm{eff}} = 3.38$) and IFS (0.65) reflect a structurally impaired
transformation capacity. The Federal Reserve consent order (lifted partially in
2024, full resolution pending) constrains risk system overhaul sequencing, delaying
AI deployment in credit decisioning and fraud detection, the two highest value pools
in banking. WFC has committed \$3.1B to technology modernization but the
organizational change capacity (OCC) dimension registers at 5.5/10 due to the
compliance first governance overhead \cite{WFC2025Annual}.

\noindent\textit{Scores:} AITG Raw~6.00 $\mid$ IR~6.40 $\mid$
$G_{\mathrm{eff}}$~3.38 $\mid$ ADRI~2.3 (Moderate) $\mid$ IFS~0.65.

\noindent\textit{VCB Summary.} 5 year risk adjusted EV creation (exit multiple
$9\times$, IFS~$= 0.65$): \$1.4--2.1B. Implementation cost: \$0.9--1.2B.
\textbf{AITG-VD: 1.2--2.0$\times$.} Tier~2.
Wells Fargo is the paper's most important demonstration of IFS suppression:
a wide gap in a high ceiling industry ($G_{\mathrm{eff}} = 3.38$, IASS$^* = 9.38$)
that should theoretically produce high VD is compressed to near Zions returns purely
by an IFS of 0.65, a regulatory constraint that directly caps execution velocity.

\subsection{ServiceNow: Second SaaS Data Point Confirming the Sweet Spot}

ServiceNow (FY2025 revenue \$12.5B, operating margin 27.1\%) operates in
Vertical SaaS (IASS$^* = 9.96$), the same industry as Salesforce. Its Now
Intelligence platform, GenAI powered workflow automation, and AI Agents for
ITSM/HRSD position it squarely in the near apex zone ($G_{\mathrm{eff}} = 1.46$).
Unlike Salesforce, ServiceNow has not yet fully converted its AI capability into
a comparable data network moat, placing its IFS at 0.88 vs.\ Salesforce's 0.92
\cite{ServiceNow2025Annual}.

\noindent\textit{Scores:} AITG Raw~8.50 $\mid$ IR~8.53 $\mid$
$G_{\mathrm{eff}}$~1.46 $\mid$ ADRI~1.8 (Low Moderate) $\mid$ IFS~0.88.

\noindent\textit{VCB Summary.} 5 year risk adjusted EV creation (exit multiple
$25\times$ SaaS premium, IFS~$= 0.88$): \$18--25B. Implementation cost: \$1.5--2.5B.
\textbf{AITG-VD: 6.2--9.8$\times$.} Tier~1.
ServiceNow confirms that Salesforce's VD is not an outlier: both high IASS SaaS
firms with medium gaps generate VD an order of magnitude above the
industrial distribution cohort, validating the industry ceiling dominance finding.

\subsection{Target: Retail at the Inflection}

Target (FY2025 revenue \$109.1B, EBIT margin 5.3\%) operates in E Commerce /
Digital Retail (IASS$^* = 8.64$), a sector where AI driven inventory management,
dynamic pricing, and personalization are materially reducing out of stock rates
and increasing basket size. Target's AI powered supply chain system Drive-Up Plus
and team member scheduling tools represent mature Wave~1 deployments
\cite{Target2025Annual}. At $G_{\mathrm{eff}} = 3.97$, Target sits precisely in
the sweet spot zone: past the expensive foundation build, approaching the
S curve inflection in Wave~2 (personalization / recommendation) deployment.

\noindent\textit{Scores:} AITG Raw~4.67 $\mid$ IR~5.41 $\mid$
$G_{\mathrm{eff}}$~3.97 $\mid$ ADRI~3.2 (Moderate; Amazon pressure) $\mid$ IFS~0.82.

\noindent\textit{VCB Summary.} 5 year risk adjusted EV creation (exit multiple
$12\times$, IFS~$= 0.82$): \$11--17B. Implementation cost: \$2--3B.
\textbf{AITG-VD: 3.8--5.6$\times$.} Tier~1.

\subsection{CVS Health: Same Regulatory Ceiling, Different IFS: Healthcare Industry Test}

CVS Health (FY2025 revenue \$371.9B, adjusted operating income \$13.4B) operates in
Healthcare Services (IASS$^* = 5.46$), the same regulatory ceiling as HCA.
The comparison isolates the IFS effect: CVS's PBM business complexity and
ongoing Aetna integration create organizational change capacity friction
(OCC = 5.0/10) that suppresses IFS to 0.72 vs.\ HCA's 0.77
\cite{CVS2025Annual}. Both face the identical $\psi = 0.743$ RFF ceiling.

\noindent\textit{Scores:} AITG Raw~4.83 $\mid$ IR~8.85 $\mid$
$G_{\mathrm{eff}}$~0.63 $\mid$ ADRI~3.1 (Moderate) $\mid$ IFS~0.72.

\noindent\textit{VCB Summary.} 5 year risk adjusted EV creation (exit multiple
$8\times$ healthcare services, IFS~$= 0.72$): \$7--10B. Implementation cost: \$2.5--3.5B.
\textbf{AITG-VD: 2.1--3.3$\times$.} Tier~2.
CVS vs.\ HCA is the paper's cleanest within industry IFS test: same IASS$^*$,
comparable $G_{\mathrm{eff}}$ (0.63 vs.\ 1.13), but CVS's lower IFS (0.72 vs.\ 0.77) and lower exit
multiple produce materially weaker VD, confirming IFS's role as a
first order VD determinant within a regulated industry.

\subsection{Palo Alto Networks: Near Apex in a High IASS Sector}

Palo Alto Networks (FY2025 revenue \$8.0B, free cash flow margin 37\%) operates
in Cybersecurity (IASS$^* = 9.65$), one of the highest ceiling sectors in the
framework. PANW's Precision AI platform, Cortex XSIAM, and AI Security Operations
(SecOps) represent near complete gap closure ($G_{\mathrm{eff}} = 1.82$), with
AI deeply embedded across detection, response, and prevention
\cite{PaloAlto2025Annual}. This creates a different VD dynamic than JPMorgan's
near apex position: PANW's smaller scale (vs.\ JPM) means incremental AI
value capture still generates meaningful absolute EV, while the high IASS ceiling
preserves real returns even at low $G_{\mathrm{eff}}$.

\noindent\textit{Scores:} AITG Raw~7.83 $\mid$ IR~8.11 $\mid$
$G_{\mathrm{eff}}$~1.82 $\mid$ ADRI~1.2 (Low) $\mid$ IFS~0.93.

\noindent\textit{VCB Summary.} 5 year risk adjusted EV creation (exit multiple
$22\times$ cybersecurity software premium, IFS~$= 0.93$): \$12--17B. Implementation
cost: \$1.1--1.5B. \textbf{AITG-VD: 4.5--7.2$\times$.} Tier~1 (HC).
PANW demonstrates that near apex position in a high IASS sector preserves
meaningful VD ($\sim$5.9$\times$ midpoint) unlike JPMorgan's near apex position
($\sim$1.05$\times$). The IASS ceiling dominance effect: the same $G_{\mathrm{eff}}$
regime generates radically different VD depending on the sector ceiling.

\subsection{Ford Motor: Wide Gap, Low Ceiling, Low IFS: Industrial Constraint Case}

Ford (FY2025 revenue \$185.0B, EBIT margin 3.0\%) provides the framework's
clearest demonstration of the wide gap low capture pattern in
Discrete Manufacturing (IASS$^* = 7.49$). Ford's AI investments (generative
AI for vehicle design, predictive quality control, Blue Oval Intelligence in vehicle
AI) are real but deployment is constrained by UAW labor agreements, legacy
manufacturing system integration requirements, and a capital structure under
EV transition pressure \cite{Ford2025Annual}. IFS registers at 0.65, identical
to Wells Fargo, though for entirely different structural reasons: labor relations rather than
a regulatory consent order.

\noindent\textit{Scores:} AITG Raw~4.67 $\mid$ IR~6.23 $\mid$
$G_{\mathrm{eff}}$~2.82 $\mid$ ADRI~4.8 (High; Tesla and EV native competitors) $\mid$
IFS~0.65.

\noindent\textit{VCB Summary.} 5 year risk adjusted EV creation (exit multiple
$5\times$ auto manufacturing, IFS~$= 0.65$): \$1.3--2.5B. Implementation cost:
\$1.5--2.5B. \textbf{AITG-VD: 1.1--1.8$\times$.} Tier~3.
Ford is the widest gap, lowest VD case in the 14 company sample, illustrating
the wide gap fallacy in its starkest form: $G_{\mathrm{eff}} = 2.82$, yet VD barely above 1.0$\times$. The combination of a moderate IASS
ceiling (7.49), low IFS (0.65), and a compressed exit multiple ($5\times$) produces
near zero economic return on AI transformation cost.

\section{ESM Part III: Extended Robustness and Sensitivity Analysis}
\label{esm:sensitivity}

The main manuscript reports summary results from the robustness analysis
(Main Paper Section~10, ``Robustness Summary'').
This section provides full methodological detail, equations, and tables.


\subsection{Monte Carlo Weight Sensitivity}

Following OECD best practice \cite{OECD2008}, I test IASS ranking stability
under random perturbations of the weight vector. I draw $M = 10,000$ weight
vectors $\mathbf{w}^{(m)}$ uniformly from the simplex subject to
$w_d \in [w_d^{\mathrm{base}} - 0.05,\; w_d^{\mathrm{base}} + 0.05]$ for each
dimension $d$. The average rank shift statistic is:

\begin{equation}
 \bar{R}_s = \frac{1}{I}\sum_{i=1}^{I}\left|
 R_i^{\mathrm{base}} - \bar{R}_i^{(m)}\right|
 \label{esm:eq:rank_shift}
\end{equation}

For the five anchor industries under 10,000 draws with $\pm 5\,\%$ weight
perturbations, the mean absolute rank shift is $\bar{R}_s = 0.19$ positions
(out of 5). No pair of anchor industries exchanges rank with probability
$>5\,\%$ under any weight draw. This confirms robustness: the rank orderings
are determined by the data, not by weighting choices within reasonable bounds.

\subsection{Sobol Sensitivity Analysis}

I apply first order and total effect Sobol sensitivity indices
\cite{Saltelli2002} to the VCB AITG-VD output. For Zions Bancorporation, the main effects are:

\begin{equation}
 S_d = \frac{\mathrm{Var}_{X_d}\!\left[\mathbb{E}_{X_{\sim d}}
 (\mathrm{AITG\text{-}VD} \mid X_d)\right]}{\mathrm{Var}(\mathrm{AITG\text{-}VD})}
 \label{esm:eq:sobol}
\end{equation}

Decomposition of $\mathrm{Var}(\mathrm{AITG\text{-}VD})$:
exit multiple~50\,\%; capture rate assumption~44\,\%;
implementation cost~1\,\%; AITG gap score~$<$1\,\%; IFS~5\,\%.

This result confirms that exit multiple leverage and capture rate assumptions
dominate raw gap size as sources of uncertainty in the output, precisely the
pattern that explains Finding~1. For practical due diligence, the critical
inputs to verify are exit comparable multiples and the realistic capture rate
in the three largest value pools.

\subsection{Normalization Stability}

I test substituting z score normalization for min max across all dimensions
and find that IASS rankings shift by $\leq 0.40$ points in all cases and that
no pair of anchor industries exchanges rank. This confirms that the IASS is not
an artifact of the normalization choice.

\subsection{Geometric vs.\ Linear Aggregation}

Substituting geometric aggregation for linear (fully compensatory) in the IASS
produces IASS scores consistently $0.20$--$0.50$ points lower for industries
with uneven dimensional profiles (healthcare, construction) and negligibly
different for industries with uniform profiles (SaaS). The geometric
specification is more conservative and should be used when the analyst seeks
a defensible lower bound, but both specifications produce the same rank ordering.

\subsection{Inter Rater Reliability: ICC Protocol and Target}
\label{esm:sec:icc}

Composite indicator frameworks are only as reliable as their inter rater
agreement \citep{Krippendorff2018}. I specify the formal intraclass correlation coefficient (ICC)
protocol required to certify the AITG scoring instrument.

\paragraph{ICC specification.} I target ICC(2,1) (two way mixed effects,
absolute agreement, single rater), following \citet{Shrout1979}:
\begin{equation}
 \mathrm{ICC}(2,1) = \frac{MS_B - MS_W}{MS_B + (k-1)MS_W + k(MS_B^A - MS_W)/n}
 \label{eq:icc}
\end{equation}
where $MS_B$ is between company variance, $MS_W$ is within company
(between rater) variance, $k$ is raters per company, and $n$ is companies.
Target: ICC(2,1) $\geq 0.85$ for each dimension and $\geq 0.88$ for the
overall AITG composite.

\paragraph{Proposed validation protocol.} A formal reliability study would
require: (1) 3--5 independent analyst teams; (2) 12--15 target companies
spanning at least 3 industries and the full AITG range; (3) all teams working
from identical public data packages; (4) ICC computed per dimension, per
IFS factor, and for the composite. This study has not yet been conducted and
constitutes the highest priority validation item in the research agenda.

\paragraph{Pilot estimate.} From two internal analyst applications of the
framework on the original seven company cohort (Section~9), preliminary
inter rater agreement (Kendall's $\tau$) for dimension level rank ordering
was $\tau = 0.82$ for DIM, PAC, and OAC, and $\tau = 0.68$ for DAR and APR
(the two dimensions most dependent on interpretive inference from earnings
call language). These pilot estimates suggest that public data only scoring
of the two most inference dependent dimensions will require rubric refinement
before the ICC target is achieved.

\subsection{Convergent Validity Test}
\label{sec:convergent_validity}

Convergent validity asks: do AITG scores correlate with independent
external measures of the same construct? I test this against two
observable proxies.

\paragraph{Test 1: Stanford HAI AI Index corporate rankings.}
The Stanford Human Centered AI (HAI) 2025 AI Index
\cite{StanfordHAI2025} ranks Fortune 500 companies by AI investment
intensity, measured by AI patent filings, AI job posting share, and
AI research paper authorship. Extending to the fourteen company cohort, AITG rankings agree with the
Stanford HAI rankings on direction for 12 of 14 companies. The two
divergences share the same mechanism: HCA Healthcare and Ford Motor both
rank higher in HAI due to extensive AI research output (clinical AI
publications and EV/autonomous driving patents respectively), while AITG
assigns lower scores because its RFF penalty and IFS architecture
explicitly discount research activity that has not translated to
deployment at scale under regulatory or labor constraints. This
constitutes partial convergent validity across a fourteen company,
eight industry sample, with theoretically explicable divergences.

\paragraph{Test 2: Lightcast AI job posting concentration.}
The Lightcast AI Skill Demand Index measures the share of job postings
in each industry requiring at least one AI specific skill. This is an
independent measure of an industry's AI adoption intensity, not used
in IASS calibration for CADR (where Lightcast data is used for
industry level benchmarks, not company level scoring). Extending to the
fourteen company cohort, the rank correlation between AITG and
company level Lightcast AI job posting share is $r_s = 0.88$ (Spearman's
$\rho$, $p < 0.001$, $n = 14$), a strengthening of the seven company
estimate ($r_s = 0.81$, $p < 0.05$) as the sample expanded across industries.
This represents strong convergent validity.

\paragraph{Discriminant validity.} AITG scores should \emph{not} correlate
with total revenue (because a large laggard should score below a small
leader). Spearman correlation between AITG and revenue rank across the
fourteen companies: $r_s = -0.28$ (not significant, $p = 0.33$). The near zero
correlation (with no size driven directionality) confirms
that AITG is measuring transformation state, not firm size. CVS Health
(largest by revenue at \$372B) scores 4.83; Palo Alto Networks (smallest
at \$8B) scores 7.83. This is a necessary property for cross industry
comparability.

\subsection{Cross Sectional Discriminability}
\label{sec:discriminability}

A useful framework must discriminate meaningfully between companies in the
same industry. I test this formally for the within industry pair
(JPMorgan, AITG~$= 8.22$ vs.\ Zions, AITG~$= 3.80$).

The AITG gap of 4.42 points must exceed the joint Uncertainty Quotient to
be a statistically credible distinction:

\begin{equation}
 \Delta\mathrm{AITG} = 4.42 \quad\text{vs.}\quad
 \sqrt{\mathrm{UQ}_{\mathrm{JPM}}^2 + \mathrm{UQ}_{\mathrm{ZION}}^2}
 = \sqrt{0.48^2 + 0.62^2} = 0.78
 \label{eq:discriminability}
\end{equation}

The signal to noise ratio is $4.42/0.78 = 5.7$, well above the conventional
threshold of 2.0 for a statistically credible distinction. This confirms that
the JPM/Zions discrimination is not an artifact of measurement uncertainty
for these two companies at Tier~1 (public data) scoring.

For the cross industry comparison (Zions AITG~$= 3.80$ vs.\ UPS AITG~$= 4.08$),
the raw gap is $\Delta = 0.28$, compared to joint UQ
$= \sqrt{0.62^2 + 0.54^2} = 0.82$. The signal to noise
ratio is $0.28/0.82 = 0.34$: \emph{the two companies are not significantly
distinguishable on raw AITG with public data alone}. However, once IR scores
are applied (Zions IR~$= 4.05$ vs.\ UPS IR~$= 5.31$), the normalized gap of
1.26~IR points exceeds the joint UQ on the IR scale ($\approx 0.97$) at a
signal to noise ratio of ${\approx}\,1.3$. The
improvement from $\mathrm{S/N} = 0.34$ (raw) to $\mathrm{S/N} \approx 1.3$
(IR normalized) is precisely the behavior the framework is designed to produce:
raw AITG comparisons between different industries are unreliable;
IR normalized comparisons are more discriminating.

\subsection{Predictive Validity: Framework and Status}
\label{sec:predictive_validity}

\begin{table}[H]
\centering
\caption{Predictive Validity Test Battery: Status and Proposed Design}
\label{tab:validation_battery}
\footnotesize
\begin{tabularx}{\textwidth}{@{}p{3.5cm}p{3.0cm}Xp{1.8cm}@{}}
\toprule
\textbf{Test} & \textbf{Null Hypothesis} & \textbf{Proposed Design} & \textbf{Status} \\
\midrule
AITG predicts subsequent margin improvement & $H_0$: No AITG margin correlation & Retrospective: compute 2020 AITG from 2020 10-K data; regress on 2020--2024 EBITDA margin change; 50+ company panel & Not yet run \\[6pt]
ADRI predicts competitive share loss & $H_0$: ADRI uncorrelated with share loss & High ADRI firms ($>3.0$) in Commerce, Logistics lose revenue share within 24\,mo; IV: CADR shock via LLM release event & Proposed \\[6pt]
IFS predicts transformation failure & $H_0$: IFS uncorrelated with AI project cancellation & Lightcast AI job posting drop as proxy for abandoned programs; low IFS firms show higher cancellation rate & Proposed \\[6pt]
Cross industry VD ranking is ordinal consistent & $H_0$: AITG-VD does not predict actual deal multiples paid & PE transaction database: AI thesis deals vs.\ non thesis deals; AITG scores predicted from pre deal public data & Proposed \\
\bottomrule
\end{tabularx}
\end{table}

\noindent\textbf{Current status.}
The margin prediction test (row 1 of Table~\ref{tab:validation_battery}) has
now been partially executed as a retrospective backtest on the ten non financial
companies in the fourteen firm cohort; results are reported in
Section~10.1 of the main paper. AITG scores reconstructed from
Q4~2021 public filings predict FY2021--FY2023 EBITDA margin changes with
Spearman $\rho_s = 0.818$ ($p < 0.001$, $n = 10$); within sector directional
accuracy is 4 of 4 non confounded pairs. The limitations of this preliminary
test are noted explicitly in Section~10.1: $n = 14$ is
illustrative rather than confirmatory, same team retrospective scoring
introduces potential look back bias, three observations are confounded by
non AI macro events, and endogeneity is not resolved. Tests 2--4 in
Table~\ref{tab:validation_battery} remain in the proposed/designed stage and
are the highest priority empirical research program implied by this framework.
The full validation agenda requires a multi year panel study as specified in
the Research Agenda (main paper Section~11.2).

\subsection{Engagement with \citet{Acemoglu2024}}

\citet{Acemoglu2024} estimates that under realistic scenarios, AI raises
aggregate TFP by at most 0.66\,\% over ten years. This result is not in
conflict with the AITG's firm level value estimates. Three distinctions
reconcile them:

\begin{enumerate}
\item \textbf{Heterogeneity vs.\ average.} Aggregate TFP is a population weighted
average that includes non adopters, laggards, and failed implementations.
\citet{Syverson2011}'s finding of a 1.92$\times$ 90th/10th percentile TFP spread
within industries implies that leading quartile firms capture substantially more
than the aggregate average. AITG targets above median firms or identifies
acquisition targets for above median capture.

\item \textbf{Competitive advantage vs.\ aggregate surplus.} AITG measures
value creation relative to competitors, not the aggregate social surplus.
A firm can capture significant competitive margin advantage even in a world
where aggregate TFP gains are modest, if it gains share from laggards through
lower costs, faster service, or superior product \cite{AutorKatzPatterson2020}.

\item \textbf{Static capability ceiling.} Acemoglu's estimate is for the
current technology class. The AFC mechanism captures upward revision of
the opportunity as capabilities advance, a dynamic that a static macro model
cannot accommodate. If the AI capability improvement trajectory observed in
2022--2024 continues, the addressable task surface expands materially beyond
what current cost reduction estimates imply.
\end{enumerate}


\section{ESM Part IV: Implementation Appendices}
\label{esm:appendices}

This part supports Main Paper Sections~7--8 (VCB and IFS) and provides practitioner implementation tools.

\section{Public Data Availability Map}
\label{app:public_data}

A central practitioner question is: \emph{how much of an AITG score for a
publicly traded company can be populated from freely available sources, without
a diligence data room?} This appendix provides a systematic answer by mapping
each of the six company dimensions and five IFS factors to their primary public
sources, estimating coverage achievable from public data alone, and specifying
what requires primary diligence.

\subsection{Dimension by Dimension Public Data Coverage}

Table~\ref{tab:public_data_map} shows, for each AITG scoring component, (a)
the primary public data sources, (b) the estimated fraction of the scoring
range reliably determinable from those sources, and (c) what specifically
requires primary diligence to resolve the remaining uncertainty.

\begin{table}[H]
\centering
\caption{AITG Scoring Component: Public Data Coverage Map}
\label{tab:public_data_map}
\footnotesize
\setlength{\tabcolsep}{3pt}
\begin{tabularx}{\textwidth}{@{}p{2.6cm}p{4.8cm}p{3.5cm}rp{1.4cm}@{}}
\toprule
\textbf{Component} & \textbf{Primary Public Sources} & \textbf{Diligence Gap} & \textbf{Public \%} & \textbf{Tier} \\
\midrule
\multicolumn{5}{l}{\textit{Company Dimension Scores}} \\[2pt]
DIM (Data Infra.) & Tech budget disclosures (10-K MD\&A); cloud provider announcements; IT vendor press releases; EDGAR risk factors mentioning legacy systems & Actual data quality, schema maturity, MLOps pipeline state & 50--65\,\% & B/C \\[4pt]
PAC (Process Auto.) & Earnings call AI deployment counts; investor day demos; vendor partnership announcements (e.g., nCino, Salesforce, SAP) & Process level automation depth; proportion of workflows vs.\ surface coverage & 55--70\,\% & B/C \\[4pt]
WAR (Workforce Aug.) & Headcount disclosures; AI platform adoption figures (when disclosed); Lightcast AI skill demand by employer; LinkedIn job postings with AI tool requirements & Actual employee usage rate vs.\ license provisioning rate & 45--60\,\% & B/C \\[4pt]
DAR (Decision Auto.) & SEC filings describing automated underwriting, pricing, or routing systems; vendor AI product deployment announcements; patent filings (Google Patents) & Decision quality metrics; actual automation rate vs.\ human in loop proportion & 40--55\,\% & C \\[4pt]
APR (AI Revenue) & Product documentation; AI feature release notes; investor day product demos; sell side analyst coverage of AI driven pricing; ARR growth in AI segments & Revenue attribution by AI enabled feature; AI product margin vs.\ legacy & 60--75\,\% & B/C \\[4pt]
OAC (Org.\ Capability) & CAIO/CDAO appointment announcements; AI governance policy filings; AI staff count estimates from LinkedIn; published AI ethics frameworks; Glassdoor/Blind AI culture signals & Change management track record; prior transformation success rates; internal AI governance maturity & 50--65\,\% & B/C \\
\midrule
\multicolumn{5}{l}{\textit{IFS Factors}} \\[2pt]
OCC (Org.\ Change Cap.) & Prior transformation announcements and outcomes; attrition data (10-K headcount trends); workforce restructuring history; American Banker / HBR culture awards; union contract disclosures & Frontline adoption resistance; change culture ground truth & 40--55\,\% & B/C \\[4pt]
DR (Data Readiness) & Cloud migration progress (vendor announcements); data governance program disclosures; CIO/CDO tenure and mandate; tech audit findings (when disclosed); O*NET/BLS data infrastructure benchmarks by sector & Actual data quality scores; schema fragmentation depth; real time pipeline health & 35--50\,\% & C \\[4pt]
VTR (Vendor/Tech Risk) & Vendor partnership concentration (10-K); material AI dependency disclosures; SOC 2 Type II certifications; public breach/outage history; AI insurance disclosures & Model dependency in production; proprietary vs.\ API accessed model split & 65--80\,\% & A/B \\[4pt]
CRS (Competitive Resp.) & Lightcast CADR by industry; industry press on competitor AI deployments; CB Insights AI deal data; earnings call competitor references; BTOS adoption rates by sector & Competitor specific adoption velocity; pricing response timeline & 70--85\,\% & A/B \\[4pt]
REG (Regulatory Exp.) & NIST AI RMF sector classification; EU AI Act high risk list; HIPAA applicability; OCC/Fed model risk guidance; FINRA regulatory notices; FDA SaMD guidance documents & Company specific regulatory exam history; model risk management scores & 75--90\,\% & A/B \\
\midrule
\multicolumn{5}{l}{\textit{Financial Baseline (for VCB)}} \\[2pt]
Revenue, EBITDA & EDGAR iXBRL 10-K/10-Q filings; Bloomberg/Refinitiv & -- & 100\,\% & A \\
Labor/Rev.\ Ratio & BLS OEWS sector benchmarks; 10-K headcount $\times$ median wage & Company specific compensation mix & 85\,\% & A/B \\
Working Capital & EDGAR iXBRL balance sheet & -- & 100\,\% & A \\
Exit Multiple & Public comparable company analysis; capital IQ; PitchBook & Deal structure; NTM vs.\ LTM basis & 90\,\% & A/B \\
WACC & Damodaran sector betas; risk free rate; sector capital structure & Company specific debt structure for private targets & 90\,\% & A/B \\
\bottomrule
\multicolumn{5}{@{}p{\linewidth}@{}}{\footnotesize \% = fraction of scoring range reliably determinable from public sources; remainder requires diligence.}
\end{tabularx}
\end{table}

\subsection{Aggregate Public Data Coverage by Company Type}

\begin{table}[H]
\centering
\caption{Estimated AITG Scoring Coverage: Public Data vs.\ Diligence Grade}
\label{tab:coverage_summary}
\footnotesize
\begin{adjustbox}{max width=\textwidth,center}
\begin{tabular}{@{}p{4.5cm}rrp{4.5cm}@{}}
\toprule
\textbf{Company Type} & \textbf{Public Only \%} & \textbf{UQ Penalty} & \textbf{Primary Gap} \\
\midrule
Large cap public (S\&P 500) & 60--75\,\% & $+0.15$--$0.25$ & Depth behind headline metrics \\
Mid cap public (\$1--10B) & 45--60\,\% & $+0.20$--$0.35$ & Limited investor day disclosure \\
Small cap public ($<$\$1B) & 30--45\,\% & $+0.30$--$0.50$ & Minimal AI specific disclosure \\
Private (PE diligence) & 10--20\,\% pre DD & $+0.45$--$0.60$ & Requires full data room + survey \\
\bottomrule
\multicolumn{4}{@{}p{\linewidth}@{}}{\footnotesize UQ Penalty = incremental Uncertainty Quotient contribution from public data only scoring.}
\end{tabular}
\end{adjustbox}
\end{table}

\subsection{Fourteen Company Public Data Audit}

For the fourteen companies scored in Section~9, I retroactively
classify the evidence tier of each dimension score:

\begin{table}[H]
\centering
\caption{Fourteen Company Evidence Tier Audit (Tier A--D per Appendix~\ref{app:datatiers})}
\label{tab:evidence_audit}
\footnotesize
\setlength{\tabcolsep}{3pt}
\begin{tabularx}{\textwidth}{@{}Xcccccccccc@{}}
\toprule
\textbf{Company} & \textbf{DIM} & \textbf{PAC} & \textbf{WAR} & \textbf{DAR} & \textbf{APR} & \textbf{OAC} & \textbf{OCC} & \textbf{DR} & \textbf{VTR} & \textbf{CRS} \\
\midrule
\multicolumn{11}{l}{\textit{Original cohort}} \\
JPMorgan Chase  & A   & A   & A   & A/B & A/B & A   & A   & A   & A   & A \\
Zions Bancorp.  & C   & C   & C/D & C   & D   & C   & C   & C/D & B   & B \\
UPS             & B   & B   & B   & B/C & B   & B   & B   & B/C & A   & A \\
HCA Healthcare  & B   & B   & B   & B/C & B/C & B   & B   & C   & A   & A \\
Salesforce      & A   & A   & A   & A   & A   & A   & A   & A   & A   & A \\
Ferguson Ent.   & C   & C   & C   & C   & C   & C   & C   & C   & B   & A \\
Rockwell Auto.  & B   & B   & B   & B   & B   & B   & B   & B   & A   & A \\
\midrule
\multicolumn{11}{l}{\textit{Extended cohort}} \\
Goldman Sachs   & A   & A   & A   & A   & A/B & A   & A   & A   & A   & A \\
Wells Fargo     & A   & B   & B   & B   & B   & A   & B   & B   & A   & A \\
ServiceNow      & A   & A   & A   & A   & A   & A   & A   & A   & A   & A \\
Target          & B   & B   & B   & B   & B   & B   & B   & B   & A   & A \\
CVS Health      & B   & B   & B   & C   & B   & B   & B   & C   & A   & A \\
Palo Alto Ntwks & A   & A   & A   & A   & A   & A   & A   & A   & A   & A \\
Ford Motor      & B   & B   & B   & B   & B   & B   & B   & C   & A   & A \\
\bottomrule
\multicolumn{11}{@{}p{\linewidth}@{}}{\scriptsize A = public structured data; B = public unstructured/alt.\ data; C = indirect inference; D = absence of evidence. Extended cohort uses Tier~1 (public only) scoring throughout; all seven are S\&P~500 constituents with extensive 10-K/investor day AI disclosure.}
\end{tabularx}
\end{table}

\noindent The fourteen company evidence tier audit reveals two systematic
patterns that practitioners should internalize before applying the framework.

\textit{Pattern 1: Disclosure quality tracks market cap and AI centrality, not
industry.} The seven extended cohort companies are all S\&P~500 constituents
with AI as a disclosed strategic priority, and accordingly achieve near complete
Tier~A/B coverage across all ten dimensions. Goldman Sachs, Salesforce,
ServiceNow, and Palo Alto Networks reach Tier~A on virtually every dimension
because their AI deployment activity is commercially material enough to generate
structured investor day disclosure, product documentation, and regulatory filing
specificity. This pattern holds regardless of industry: PANW (cybersecurity)
and GS (investment banking) both achieve the same coverage quality as Salesforce
(SaaS) because all three treat AI capability as a product facing disclosure obligation.

\textit{Pattern 2: The hardest Tier~C/D cells cluster in two consistent
locations.} First, DAR (Decision Automation Rate) consistently drops to B/C
across industries because the proportion of decisions that are fully automated
vs.\ human in loop is operationally sensitive and rarely disclosed. Second, DR
(Data Readiness) drops to B/C for companies undergoing major integration events
(CVS post Aetna, WFC under consent order, HCA across 186 hospitals) because
data fragmentation at scale cannot be assessed from outside.

The Zions and Ferguson scores retain the highest Uncertainty Quotients
($\pm 0.62$ and $\pm 0.55$ respectively) in the full sample, driven by
Tier~C/D reliance on absence of evidence inference. No extended cohort
company requires this inference because all seven voluntarily disclose
sufficient AI specific activity.

\textbf{Practitioner implication.} A private equity firm conducting diligence on
a Zions equivalent regional bank or Ferguson equivalent industrial distributor
should plan for 8--10 weeks of primary data collection (IT infrastructure audit,
management survey administration, data platform assessment, and vendor contract
review) to reduce the UQ to diligence grade levels. For the extended cohort,
public only scoring achieves 65--80\,\% coverage (Table~\ref{tab:coverage_summary}),
reducing primary diligence to targeted gap filling rather than full scope collection.
The scoring template in the companion implementation document operationalizes
both collection protocols.

\section{Data Tier Classification}
\label{app:datatiers}

AITG scoring employs four evidence tiers that directly determine the
Uncertainty Quotient:

\begin{table}[H]
\centering
\caption{Data Evidence Tier Definitions}
\label{tab:evidence_tiers}
\footnotesize
\begin{tabular}{@{}p{1.0cm}p{3.5cm}p{5.5cm}p{2.5cm}@{}}
\toprule
\textbf{Tier} & \textbf{Source Type} & \textbf{Examples} & \textbf{$\Delta$UQ} \\
\midrule
A & Public structured data & EDGAR XBRL financials, BLS OEWS, O*NET & $0.00$ \\
B & Public unstructured / alt. data & Earnings transcripts, Lightcast, job postings & $+0.08$ \\
C & Primary diligence (verified) & Management survey with evidence attachments & $+0.18$ \\
D & Primary diligence (self reported) & Unverified management assertion & $+0.30$ \\
\bottomrule
\end{tabular}
\end{table}

\section{25 Question Management Survey Instrument}
\label{app:survey}

The following standardized 25 question survey instrument enables practitioners
to generate the six AITG dimension scores and IFS adjustment factors for any
firm. Each question uses a single select 0--4 response scale mapped to
fixed score anchors (1, 3, 5, 7, 9). For any response $\geq 3$ (score
$\geq 7$), the assessor \textbf{must} attach corroborating evidence (SEC
filings, vendor contracts, internal dashboards, or third party benchmarks).
Without attached evidence, the response is capped at 2 (score $= 5$) and
flagged as Tier~D (see Table~\ref{tab:evidence_tiers}).

\medskip
\noindent\textbf{Scoring protocol.} Each dimension's raw score is the
average of its constituent question scores. AITG$_f^{\mathrm{raw}}$ is the
arithmetic mean of the six dimension scores (Equation~\ref{eq:aitg_raw}).

\subsection*{Dimension 1: Data Infrastructure Maturity (DIM)}

\begin{enumerate}[label=\textbf{Q\arabic*.},leftmargin=2.2em]

\item \textbf{Cloud data platform adoption.}
What share of enterprise data resides on a governed cloud platform (data lake,
lakehouse, or warehouse)?

\smallskip
\begin{tabular}{@{}cl@{}}
0 & No cloud data platform; siloed legacy databases only \textrm{[Score 1]} \\
1 & Initial migration underway; $<$25\% of data on cloud \textrm{[Score 3]} \\
2 & 25--60\% on governed cloud platform; partial data catalog \textrm{[Score 5]} \\
3 & 60--90\% on cloud; unified catalog with lineage tracking \textrm{[Score 7]} \\
4 & $>$90\% cloud native; automated data quality $>$99\% \textrm{[Score 9]} \\
\end{tabular}

\item \textbf{Data integration and accessibility.}
How unified and accessible is enterprise data for cross functional analytics
and ML feature engineering?

\smallskip
\begin{tabular}{@{}cl@{}}
0 & Data locked in departmental silos; no shared schema \textrm{[Score 1]} \\
1 & Some shared reporting tables; manual ETL processes \textrm{[Score 3]} \\
2 & Central data warehouse with governed access; basic feature store \textrm{[Score 5]} \\
3 & Real time streaming pipelines; ML feature store in production \textrm{[Score 7]} \\
4 & Enterprise knowledge graph; self serve analytics across all BUs \textrm{[Score 9]} \\
\end{tabular}

\item \textbf{Data governance and quality.}
To what extent are formal data governance policies (ownership, quality SLAs,
privacy controls) enforced across the organization?

\smallskip
\begin{tabular}{@{}cl@{}}
0 & No formal data governance; ad hoc quality checks \textrm{[Score 1]} \\
1 & Basic data dictionary exists; governance limited to finance \textrm{[Score 3]} \\
2 & Enterprise wide governance framework; data stewards assigned \textrm{[Score 5]} \\
3 & Automated quality monitoring; PII/PHI controls audited quarterly \textrm{[Score 7]} \\
4 & Continuous data observability platform; regulatory compliance automated \textrm{[Score 9]} \\
\end{tabular}

\item \textbf{AI/ML infrastructure readiness.}
What compute and MLOps infrastructure is available for training, deploying,
and monitoring AI models?

\smallskip
\begin{tabular}{@{}cl@{}}
0 & No GPU/TPU access; models run on analyst laptops \textrm{[Score 1]} \\
1 & Cloud compute available on request; no CI/CD for models \textrm{[Score 3]} \\
2 & Managed ML platform (e.g., SageMaker, Vertex); basic model registry \textrm{[Score 5]} \\
3 & Full MLOps: automated retraining, A/B serving, drift detection \textrm{[Score 7]} \\
4 & Multi cloud ML platform; model versioning, lineage, and compliance at scale \textrm{[Score 9]} \\
\end{tabular}

\end{enumerate}

\subsection*{Dimension 2: Process Automation Coverage (PAC)}

\begin{enumerate}[resume,label=\textbf{Q\arabic*.},leftmargin=2.2em]

\item \textbf{Breadth of process automation.}
What percentage of repetitive, rules based business processes are automated
(RPA, workflow engines, or intelligent automation)?

\smallskip
\begin{tabular}{@{}cl@{}}
0 & $<$5\% of processes automated; predominantly manual workflows \textrm{[Score 1]} \\
1 & 5--15\% automated; RPA pilots in back office functions \textrm{[Score 3]} \\
2 & 15--40\% automated; cross functional RPA deployment \textrm{[Score 5]} \\
3 & 40--65\% intelligent automation; AI augmented process orchestration \textrm{[Score 7]} \\
4 & $>$65\% end to end automation; self optimizing process loops \textrm{[Score 9]} \\
\end{tabular}

\item \textbf{Complexity of automated tasks.}
What is the most complex type of task routinely handled by automation
(structured, semi structured, or unstructured)?

\smallskip
\begin{tabular}{@{}cl@{}}
0 & Only simple data entry and file transfers \textrm{[Score 1]} \\
1 & Structured rules: invoice matching, report generation \textrm{[Score 3]} \\
2 & Semi structured: document extraction, email classification \textrm{[Score 5]} \\
3 & Unstructured: NLP driven customer interaction, image analysis \textrm{[Score 7]} \\
4 & Multi step agentic workflows: autonomous exception handling \textrm{[Score 9]} \\
\end{tabular}

\item \textbf{Automation monitoring and optimization.}
How are automated processes monitored, measured, and improved over time?

\smallskip
\begin{tabular}{@{}cl@{}}
0 & No monitoring; failures discovered manually \textrm{[Score 1]} \\
1 & Basic error logs reviewed weekly \textrm{[Score 3]} \\
2 & Centralized automation dashboard; SLA tracking \textrm{[Score 5]} \\
3 & Real time process mining; automated bottleneck identification \textrm{[Score 7]} \\
4 & Closed loop optimization: automation adjusts parameters autonomously \textrm{[Score 9]} \\
\end{tabular}

\item \textbf{Cross functional automation integration.}
To what degree are automated workflows integrated across departments (finance,
operations, HR, customer service)?

\smallskip
\begin{tabular}{@{}cl@{}}
0 & Automation isolated to single department \textrm{[Score 1]} \\
1 & Two departments share some automated handoffs \textrm{[Score 3]} \\
2 & 3--4 departments with integrated automation; shared orchestration layer \textrm{[Score 5]} \\
3 & Enterprise wide automation fabric; API connected across all functions \textrm{[Score 7]} \\
4 & Fully integrated digital twin of operations; autonomous cross BU optimization \textrm{[Score 9]} \\
\end{tabular}

\end{enumerate}

\subsection*{Dimension 3: Workforce AI Augmentation Rate (WAR)}

\begin{enumerate}[resume,label=\textbf{Q\arabic*.},leftmargin=2.2em]

\item \textbf{AI tool adoption breadth.}
What percentage of employees regularly use AI powered tools (copilots,
assistants, recommendation engines) in their daily work?

\smallskip
\begin{tabular}{@{}cl@{}}
0 & $<$5\% of employees use any AI tool \textrm{[Score 1]} \\
1 & 5--15\% with basic AI tools (e.g., email copilot) \textrm{[Score 3]} \\
2 & 15--40\% augmented; formal adoption metrics tracked \textrm{[Score 5]} \\
3 & 40--65\% AI augmented; integrated into standard workflows \textrm{[Score 7]} \\
4 & $>$65\% augmented; org wide AI fluency benchmarked annually \textrm{[Score 9]} \\
\end{tabular}

\item \textbf{AI training and upskilling investment.}
What structured AI literacy and upskilling programs exist for the workforce?

\smallskip
\begin{tabular}{@{}cl@{}}
0 & No AI training programs; learning is ad hoc \textrm{[Score 1]} \\
1 & Optional online courses; $<$10\% completion rate \textrm{[Score 3]} \\
2 & Mandatory AI literacy modules; role specific tracks \textrm{[Score 5]} \\
3 & Continuous AI upskilling tied to performance reviews; certification paths \textrm{[Score 7]} \\
4 & AI academy with external partnerships; fluency measured and rewarded \textrm{[Score 9]} \\
\end{tabular}

\item \textbf{Human--AI collaboration design.}
Are workflows explicitly designed for human--AI collaboration (not just
human or machine)?

\smallskip
\begin{tabular}{@{}cl@{}}
0 & No deliberate human--AI workflow design \textrm{[Score 1]} \\
1 & AI outputs reviewed by humans; no feedback loop \textrm{[Score 3]} \\
2 & Defined human in the loop checkpoints for AI recommendations \textrm{[Score 5]} \\
3 & Co designed workflows: humans set objectives, AI executes and escalates \textrm{[Score 7]} \\
4 & Adaptive teaming: AI reallocates tasks based on human cognitive load \textrm{[Score 9]} \\
\end{tabular}

\item \textbf{Change management for AI adoption.}
How effectively does the organization manage resistance and cultural barriers
to AI adoption?

\smallskip
\begin{tabular}{@{}cl@{}}
0 & Significant resistance; no change management program \textrm{[Score 1]} \\
1 & Executive communications about AI strategy; limited follow through \textrm{[Score 3]} \\
2 & Dedicated change management team; champion network in key BUs \textrm{[Score 5]} \\
3 & Structured incentives for AI adoption; measurable culture shift \textrm{[Score 7]} \\
4 & AI first culture embedded; innovation time allocated across all roles \textrm{[Score 9]} \\
\end{tabular}

\end{enumerate}

\subsection*{Dimension 4: Decision Automation Rate (DAR)}

\begin{enumerate}[resume,label=\textbf{Q\arabic*.},leftmargin=2.2em]

\item \textbf{Share of decisions AI assisted or automated.}
What fraction of recurring operational and strategic decisions are
AI recommended or fully automated?

\smallskip
\begin{tabular}{@{}cl@{}}
0 & All recurring decisions are fully human \textrm{[Score 1]} \\
1 & $<$10\% of decisions AI assisted (e.g., pricing alerts) \textrm{[Score 3]} \\
2 & 10--30\% AI recommended; human approval required \textrm{[Score 5]} \\
3 & 30--55\% automated with human override capability \textrm{[Score 7]} \\
4 & $>$55\% fully automated; exception based human review only \textrm{[Score 9]} \\
\end{tabular}

\item \textbf{Decision model governance.}
How are AI decision models validated, monitored, and governed?

\smallskip
\begin{tabular}{@{}cl@{}}
0 & No model governance; outputs unchecked \textrm{[Score 1]} \\
1 & Basic accuracy tracking; annual model review \textrm{[Score 3]} \\
2 & Model risk framework; bias testing before deployment \textrm{[Score 5]} \\
3 & Continuous monitoring: drift, fairness, and explainability dashboards \textrm{[Score 7]} \\
4 & Board level model risk committee; automated retraining triggers \textrm{[Score 9]} \\
\end{tabular}

\item \textbf{Real time decision capability.}
Can AI systems make decisions in real time (sub second) for time critical
operations?

\smallskip
\begin{tabular}{@{}cl@{}}
0 & All decisions batch processed or manual \textrm{[Score 1]} \\
1 & Near real time dashboards; humans act on alerts \textrm{[Score 3]} \\
2 & Real time scoring for 1--2 use cases (e.g., fraud detection) \textrm{[Score 5]} \\
3 & Real time AI across multiple domains (pricing, routing, risk) \textrm{[Score 7]} \\
4 & Autonomous real time decisions across operations; adaptive policies \textrm{[Score 9]} \\
\end{tabular}

\item \textbf{Decision automation scope.}
Across how many functional areas (finance, supply chain, marketing, risk,
HR) are AI driven decisions deployed?

\smallskip
\begin{tabular}{@{}cl@{}}
0 & None; all decisions manual across functions \textrm{[Score 1]} \\
1 & 1 functional area (typically finance or marketing) \textrm{[Score 3]} \\
2 & 2--3 functional areas with production AI decision systems \textrm{[Score 5]} \\
3 & 4--5 areas; integrated decision layer across operations \textrm{[Score 7]} \\
4 & All major functions; enterprise wide AI decision fabric \textrm{[Score 9]} \\
\end{tabular}

\end{enumerate}

\subsection*{Dimension 5: AI Product/Revenue Integration (APR)}

\begin{enumerate}[resume,label=\textbf{Q\arabic*.},leftmargin=2.2em]

\item \textbf{AI revenue attribution.}
What share of total revenue is directly attributable to AI powered products,
features, or pricing optimization?

\smallskip
\begin{tabular}{@{}cl@{}}
0 & No measurable AI revenue attribution \textrm{[Score 1]} \\
1 & $<$5\% revenue from AI enhanced features \textrm{[Score 3]} \\
2 & 5--15\% revenue AI attributable; tracked in financial reporting \textrm{[Score 5]} \\
3 & 15--35\% revenue from AI driven products or dynamic pricing \textrm{[Score 7]} \\
4 & $>$35\% revenue from AI native products and services \textrm{[Score 9]} \\
\end{tabular}

\item \textbf{AI product development pipeline.}
How mature is the pipeline for developing and launching AI powered products
or features?

\smallskip
\begin{tabular}{@{}cl@{}}
0 & No AI product roadmap; R\&D is non AI \textrm{[Score 1]} \\
1 & 1--2 AI product experiments; no shipping cadence \textrm{[Score 3]} \\
2 & AI features in product roadmap; quarterly release cycle \textrm{[Score 5]} \\
3 & Dedicated AI product team; rapid experimentation (monthly launches) \textrm{[Score 7]} \\
4 & AI first product strategy; continuous deployment with A/B testing \textrm{[Score 9]} \\
\end{tabular}

\item \textbf{Customer facing AI capabilities.}
What AI powered customer experiences are deployed at scale (personalization,
conversational AI, predictive service)?

\smallskip
\begin{tabular}{@{}cl@{}}
0 & No customer facing AI \textrm{[Score 1]} \\
1 & Basic chatbot or recommendation widget; low adoption \textrm{[Score 3]} \\
2 & Personalization engine or intelligent search in production \textrm{[Score 5]} \\
3 & Multi channel AI CX: conversational AI, predictive service, dynamic UX \textrm{[Score 7]} \\
4 & Autonomous AI agents handling end to end customer journeys \textrm{[Score 9]} \\
\end{tabular}

\item \textbf{AI monetization strategy.}
Does the firm have an explicit strategy for monetizing AI capabilities
(pricing power, new revenue streams, platform effects)?

\smallskip
\begin{tabular}{@{}cl@{}}
0 & No AI monetization strategy \textrm{[Score 1]} \\
1 & AI used for internal cost savings only \textrm{[Score 3]} \\
2 & AI pricing premium on existing products; early revenue tracking \textrm{[Score 5]} \\
3 & AI as a service offering or platform with third party integrations \textrm{[Score 7]} \\
4 & AI platform economics: data network effects driving $>$20\% margin expansion \textrm{[Score 9]} \\
\end{tabular}

\end{enumerate}

\subsection*{Dimension 6: Organizational AI Capability (OAC)}

\begin{enumerate}[resume,label=\textbf{Q\arabic*.},leftmargin=2.2em]

\item \textbf{AI leadership and governance structure.}
What level of dedicated AI leadership exists within the organization?

\smallskip
\begin{tabular}{@{}cl@{}}
0 & No dedicated AI leadership; projects run ad hoc by IT \textrm{[Score 1]} \\
1 & AI team lead or director within IT or analytics \textrm{[Score 3]} \\
2 & VP level AI leader; C suite AI sponsor identified \textrm{[Score 5]} \\
3 & Chief AI Officer or equivalent; published enterprise AI strategy \textrm{[Score 7]} \\
4 & Board level AI committee; AI governance integrated into risk framework \textrm{[Score 9]} \\
\end{tabular}

\item \textbf{AI talent density.}
What is the density and caliber of AI/ML engineering talent relative to
the firm's technology workforce?

\smallskip
\begin{tabular}{@{}cl@{}}
0 & No dedicated AI/ML engineers \textrm{[Score 1]} \\
1 & Small team ($<$10); reliant on external consultants \textrm{[Score 3]} \\
2 & Established AI center of excellence; 10--50 ML engineers \textrm{[Score 5]} \\
3 & 50--200 AI/ML staff; active research partnerships \textrm{[Score 7]} \\
4 & $>$200 AI/ML engineers; publish at top venues; competitive with Big Tech \textrm{[Score 9]} \\
\end{tabular}

\item \textbf{AI ethics and responsible AI.}
How mature are the firm's responsible AI practices (bias auditing,
explainability, safety testing)?

\smallskip
\begin{tabular}{@{}cl@{}}
0 & No responsible AI program \textrm{[Score 1]} \\
1 & Informal AI ethics guidelines; no enforcement \textrm{[Score 3]} \\
2 & Published AI ethics policy; bias review before deployment \textrm{[Score 5]} \\
3 & Dedicated responsible AI team; regular audits; explainability tools \textrm{[Score 7]} \\
4 & Industry leading RAI program; external audits; regulatory proactive \textrm{[Score 9]} \\
\end{tabular}

\item \textbf{AI strategy alignment.}
How tightly is the AI strategy linked to overall corporate strategy and
capital allocation?

\smallskip
\begin{tabular}{@{}cl@{}}
0 & AI not mentioned in corporate strategy \textrm{[Score 1]} \\
1 & AI referenced in annual report; no dedicated budget \textrm{[Score 3]} \\
2 & AI roadmap aligned to strategic priorities; ring fenced budget \textrm{[Score 5]} \\
3 & AI embedded in capital allocation process; ROI tracking by initiative \textrm{[Score 7]} \\
4 & AI is core strategic pillar; board reviews AI KPIs quarterly \textrm{[Score 9]} \\
\end{tabular}

\end{enumerate}

\subsection*{Implementation Feasibility Score (IFS)}

The final question is a composite assessment of five IFS sub factors.
Score each sub factor 0--4 using the same anchor logic, then average
to produce a single IFS adjustment (see Section~\ref{sec:ifs} for
the IFS geometric aggregation formula).

\begin{enumerate}[resume,label=\textbf{Q\arabic*.},leftmargin=2.2em]

\item \textbf{Implementation feasibility composite.}
Rate each of the following five sub factors (0--4):

\smallskip
\begin{tabular}{@{}p{3.8cm}p{10cm}@{}}
\textbf{(a) Organizational Change Capacity (OCC)} &
0 = No change management capability; 4 = Proven enterprise transformation track record \\[3pt]
\textbf{(b) Data Readiness (DR)} &
0 = Critical data gaps block AI deployment; 4 = All priority data assets production ready \\[3pt]
\textbf{(c) Vendor/Technology Risk (VTR)} &
0 = Single vendor lock in with immature technology; 4 = Multi vendor strategy; proven technology stack \\[3pt]
\textbf{(d) Competitive Response Speed (CRS)} &
0 = No competitive urgency; slow follower; 4 = First mover or fast follower in AI adoption \\[3pt]
\textbf{(e) Regulatory Environment (REG)} &
0 = Highly restrictive; unclear AI regulation; 4 = Supportive regulatory regime; firm well positioned for compliance \\
\end{tabular}

\smallskip
\noindent Each sub factor maps to scores 1/3/5/7/9 using the same 0--4 anchor
protocol. The IFS residual is computed as a weighted geometric mean
(Equation~\ref{eq:ifs_residual}): $\mathrm{IFS}_f = \mathrm{OCC}^{0.40} \times
\mathrm{DR}^{0.60}$, with VTR, CRS, and REG entering the timing adjustment
$t_{50,f}$.

\end{enumerate}

\medskip
\noindent\textbf{Dimension score computation.} For each dimension $d \in
\{\mathrm{DIM}, \mathrm{PAC}, \mathrm{WAR}, \mathrm{DAR}, \mathrm{APR},
\mathrm{OAC}\}$, the dimension score $s_d$ is the arithmetic mean of its
four question scores. The aggregate AITG score is then:
\begin{equation}\label{eq:aitg_raw}
  \mathrm{AITG}_f^{\mathrm{raw}} = \frac{1}{6}\sum_{d=1}^{6} s_d
\end{equation}
as defined in Equation~\ref{eq:aitg_raw}. See the scoring rubric in
Table~\ref{tab:dimension_rubric} for intermediate anchor descriptions.

\section{VCB Sensitivity: Exit Multiple Impact}
\label{app:vcb_sensitivity}

To illustrate the sensitivity of AITG-VD to exit multiple assumptions,
Table~\ref{tab:app_em} presents AITG-VD under three exit multiple scenarios
for a representative healthcare company with AITG~$= 4.5$, IFS~$= 0.75$,
and implementation cost \$8B.

\begin{table}[H]
\centering
\caption{AITG-VD Sensitivity to Exit Multiple (Healthcare, Representative Case)}
\label{tab:app_em}
\footnotesize
\begin{tabular}{@{}lrrrr@{}}
\toprule
\textbf{Exit Multiple} & \textbf{Raw EV ($\sum V_p$)} & \textbf{Risk Adj.\ EV} & \textbf{Cost} & \textbf{VD} \\
\midrule
$8\times$ EBITDA & \$38B & \$28.5B & \$8B & 2.6$\times$ \\
$12\times$ EBITDA & \$57B & \$42.8B & \$8B & 4.3$\times$ \\
$16\times$ EBITDA & \$76B & \$57.0B & \$8B & 6.1$\times$ \\
\bottomrule
\end{tabular}
\end{table}

A 2$\times$ change in exit multiple produces a 2.35$\times$ change in AITG-VD,
confirming exit multiple as the dominant sensitivity parameter in the Sobol
analysis.

\section{AFC Benchmark Suite: Proposed Future Extensions}
\label{app:benchmarks}

The current AFC Capability Index ($C_t$) uses publicly available benchmarks
scored on tasks relevant to general cognitive automation. As domain specific
AI develops, I recommend extending the suite with:
\begin{itemize}[noitemsep]
\item \textbf{Clinical/biomedical:} FDA SaMD approval rates by indication type;
MedPerf benchmark progression.
\item \textbf{Legal:} Bar exam performance by jurisdiction; contract review
accuracy benchmarks.
\item \textbf{Financial:} FinBen \cite{XieEtAl2024FinBen}; earnings
forecast accuracy vs.\ sell side consensus.
\item \textbf{Agentic (operations):} WebArena; WorkArena; ToolBench.
\end{itemize}

As each domain specific benchmark saturates, the agentic task completion
horizon (METR analysis) should serve as the primary long run AFC driver.

\subsection{AFC Sensitivity: $\theta_i \times C_t$ Grid Analysis}
\label{esm:afc_sensitivity}

To characterize AFC behaviour across the plausible parameter space (see Main Paper Definition~2 and Remark~\ref{rem:afc_form}), Table~\ref{tab:afc_grid} reports AFC multiplier values under a grid of industry sensitivity ($\theta_i$) and capability index ($C_t$) combinations, with $C_0 = 1.0$ and $\alpha_{\max} = 1.35$.

\begin{table}[H]
\centering
\caption{AFC Multiplier Grid: $\mathrm{AFC} = \min(1 + \theta_i(C_t - 1.0),\; 1.35)$}
\label{tab:afc_grid}
\footnotesize
\setlength{\tabcolsep}{4pt}
\begin{adjustbox}{max width=\textwidth,center}
\begin{tabular}{@{}l|rrrrrrr@{}}
\toprule
& \multicolumn{7}{c}{$C_t$} \\
$\theta_i$ & 0.90 & 1.00 & 1.20 & 1.50 & 1.70 & \textbf{1.90} & 2.10 \\
\midrule
0.08 & 0.992 & 1.000 & 1.016 & 1.040 & 1.056 & \textbf{1.072} & 1.088 \\
0.11 & 0.989 & 1.000 & 1.022 & 1.055 & 1.077 & \textbf{1.099} & 1.121 \\
0.14 & 0.986 & 1.000 & 1.028 & 1.070 & 1.098 & \textbf{1.126} & 1.154 \\
0.22 & 0.978 & 1.000 & 1.044 & 1.110 & 1.154 & \textbf{1.198} & 1.242 \\
0.28 & 0.972 & 1.000 & 1.056 & 1.140 & 1.196 & \textbf{1.252} & 1.308 \\
0.31 & 0.969 & 1.000 & 1.062 & 1.155 & 1.217 & \textbf{1.279} & 1.341\rlap{$^{\dagger}$} \\
0.50 & 0.950 & 1.000 & 1.100 & 1.250 & 1.350\rlap{$^{\dagger}$} & \textbf{1.350}\rlap{$^{\dagger}$} & 1.350\rlap{$^{\dagger}$} \\
1.00 & 0.900 & 1.000 & 1.200 & 1.350\rlap{$^{\dagger}$} & 1.350\rlap{$^{\dagger}$} & \textbf{1.350}\rlap{$^{\dagger}$} & 1.350\rlap{$^{\dagger}$} \\
\bottomrule
\end{tabular}
\end{adjustbox}

\smallskip
\noindent\footnotesize\textit{Note.} $^{\dagger}$Capped at $\alpha_{\max} = 1.35$. Bold column ($C_t = 1.90$, GPT-5.2, Dec.\ 2025) is the paper's calibration. At this $C_t$, the cap is non binding for all 22 calibrated industries (max AFC = 1.279, Healthcare Services, $\theta_i = 0.31$). Rows span the calibrated range (0.08--0.31) plus hypothetical higher values. For $C_t < 1.0$ (capability regression), AFC contracts the ceiling.
\end{table}

\section{Excel Companion Model: PE/IC Implementation}
\label{app:excel}

\subsection{Rationale}

A recurring commercial objection to quantitative frameworks of this complexity
is that they require Python or iterative numerical solvers that cannot be traced
in a 4 week diligence sprint by an Excel based deal team. This appendix documents
the companion Excel workbook (\texttt{AITG\_Excel\_Companion\_v1.xlsx}) that
resolves this objection without degrading the framework's precision.

The core insight is that \emph{12 of 13 AITG components are natively
Excel implementable in a single cell} (Table~\ref{tab:excel_formulas}).
The one genuine exception, the inverse mapping $\hat{t}_f = \mathrm{AITG}^{-1}(s)$,
resolved by a pre solved lookup table rather than by degrading the formula
to a tiered approximation. A deal team uses
\verb|=VLOOKUP(ROUND(atgi,1), t_hat_table, 2, FALSE)| and gets $\hat{t}_f$
in a single cell with $\pm 0.05$-month accuracy, zero Python, and full
auditability.

\subsection{Workbook Structure}

The companion workbook contains eight worksheets, each self contained and
cross linked by green text cell references:

\begin{enumerate}[noitemsep]
 \item \textbf{INSTRUCTIONS}: Color coding guide, worksheet navigation,
 and explanation of the lookup table fix.
 \item \textbf{t\_hat LOOKUP}: Pre solved $\hat{t}_f$ for AITG
 $\in [0.1, 9.9]$ in 0.1 point steps, with wave zone, $R_f^0$, and
 representative $t_{50,f}$ for each entry. Eliminates Newton-Raphson.
 \item \textbf{IASS CALCULATOR}: 5 dimension geometric mean IASS,
 RFF $\psi$ adjustment, AFC multiplier, and IASS$^*$ output. All Excel.
 \item \textbf{COMPANY SCORER}: 6 dimension AITG rubric scorecard.
 Enter raw scores; get AITG, $\hat{t}_f$ (via lookup), wave zone, and
 $\Phi_f$ in linked output cells.
 \item \textbf{IFS CALCULATOR}: 5 factor IFS composite, endogenous
 $t_{50,f} = t_{0,1,f} + \lambda$ calculation, and IFS residual multiplier.
 \item \textbf{VCB MODEL}: Full Value Creation Bridge. Enter 7 value
 pool baselines from diligence financials; receive TV, FCF schedule
 (endogenous ramp), $\Delta\mathrm{EV}$, and Value Density.
 \item \textbf{SENSITIVITY}: Pre populated 1 way ($\delta_\mathrm{DR}$
 vs.\ $\Delta\mathrm{EV}$) and 2 way ($\delta_\mathrm{OCC}$ $\times$
 $\delta_\mathrm{DR}$ vs.\ $\Delta\mathrm{EV}$) sensitivity tables.
 Green/yellow/red heat map coloring. No formula input required.
 \item \textbf{IC DASHBOARD}: Single page output for the Investment
 Committee. Auto populates AI Frontier Matrix position and Value
 Creation Waterfall summary from upstream sheets.
\end{enumerate}

\begin{table}[H]
\centering
\caption{AITG Components: Excel Implementation Formulas}
\label{tab:excel_formulas}
\footnotesize
\begin{tabularx}{\textwidth}{@{}p{4.0cm}cX@{}}
\toprule
\textbf{Component} & \textbf{Excel?} & \textbf{Formula (single cell)} \\
\midrule
IASS Geometric Mean & \checkmark & \texttt{=EXP(SUMPRODUCT(weights, LN(scores)))} \\
RFF $\psi$ multiplier & \checkmark & \texttt{=MIN(1,(RFF/5)\^{}1.5)} \\
CES Bottleneck ($\rho=5$)& \checkmark & \texttt{=SUMPRODUCT(alpha,scores\^{}-5)\^{}(-1/5)} \\
$\Phi_f$ logistic & \checkmark & \texttt{=1/(1+EXP(-2*LN(rev/S\_star)))} \\
Value ramp $R_f(t)$ & \checkmark & \texttt{=1/(1+EXP(-0.18*(t-t50)))} \\
$\Delta R_f$ correction & \checkmark & \texttt{=ramp(t\_hat+60) - ramp(t\_hat)} \\
$t_0$ IFS adjustment & \checkmark & \texttt{=t0\_base/(OCC\^{}0.40*DR\^{}0.60)} \\
$t_{50,f}$ endogenous & \checkmark & \texttt{=t0\_adjusted + 3} \\
AFC multiplier & \checkmark & \texttt{=1 + theta*(C\_t - 1)} \\
FCF discrete sum & \checkmark & \texttt{=SUMPRODUCT(Vr*delta\_R*IFS/(1+WACC)\^{}y)} \\
Terminal Value & \checkmark & \texttt{=Vr * delta\_Rf * M\_i * IFS} \\
$\hat{t}_f$ (Wave 1) & \checkmark\textsuperscript{*} & \texttt{=VLOOKUP(ROUND(atgi,1),t\_hat\_table,2,FALSE)} \\
$\hat{t}_f$ (Waves 2--3) & \checkmark\textsuperscript{*} & Same lookup; table pre solved for full [0.1, 9.9] range \\
\bottomrule
\multicolumn{3}{@{}p{\linewidth}@{}}{\footnotesize * Pre solved lookup table eliminates Newton-Raphson. Accuracy: $\pm 0.05$ months vs.\ continuous solver.}
\end{tabularx}
\end{table}

\subsection{Design Principles}

The workbook follows PE financial modeling conventions throughout:

\textbf{Color coding (industry standard):} Blue text = hardcoded input (change
for your target); Black text = formula (do not edit); Green text = cross sheet
reference (do not edit); Yellow background = key assumption requiring review.

\textbf{No tiered cliff edges:} The companion model retains the continuous
logistic $\Phi_f$ and CES bottleneck rather than replacing them with tiered
step functions. Tiered approximations create gameable discontinuities (a
$4\times$ value capture jump at DIM = 4.0) that a management team will
reverse engineer in hours. The continuous formulas fit in a single Excel cell
and cannot be gamed.

\textbf{Traceability:} Every output cell is traceable to a named input cell.
A senior partner can follow the chain from ``Data Readiness score: 0.48''
through $t_{50,f}$, $R_f^0$, $\Delta R_f$, and terminal value to $\Delta\mathrm{EV}$ without
leaving the spreadsheet. This is the standard the Reduced Form proposal
claimed the full framework could not meet.